\newcommand{\TODO}{\textcolor{red}{TODO}}
\newtheorem{theorem}{Theorem}[section]
\newtheorem{lemma}[theorem]{Lemma}
\newtheorem{claim}[theorem]{Claim}
\newtheorem{definition}[theorem]{Definition}
\newenvironment{proof}{{\bf Proof:\ }}{\hfill$\Box$\medskip}
\newcommand{\qed}{\hfill$\Box$\medskip}
\newcommand{\RR}{{\mathbb R}}
\def\NN{\mathbb N}
\DeclareMathOperator*{\argmin}{argmin}
\newcommand{\etal}{{\em et al.\ }}
\newcommand{\mathsc}[1]{\mbox{\textsc #1}}
\newcommand{\id}{1\hspace{-1,5ex}1}
\newcommand{\indexfunction}{\mathtt{ind}}
\newcommand{\ind}{\indexfunction}
\newcommand{\bit}{\mathtt{bit}}
\newcommand{\aaa}{{\bf a}}
\newcommand{\bbb}{{\bf b}}
\newcommand{\uuu}{{\bf u}}
\newcommand{\www}{{\bf w}}
\newcommand{\eee}{{\bf e}}
\newcommand{\BB}{{\mathcal{B}}}
\newcommand{\TT}{{\mathsc t}}
\newcommand{\RandomFacet}{\mbox{\sc Random-}\allowbreak\mbox{\sc Facet}}
\newcommand{\RandomBland}{\mbox{\sc Random-}\allowbreak\mbox{\sc Bland}}
\newcommand{\Bland}{\mbox{\sc Bland}}
\newcommand{\RandomEdge}{\mbox{\sc Random-}\allowbreak\mbox{\sc Edge}}
\newcommand{\RandomPerm}{{\mbox{\sc Random-}\allowbreak\mbox{$\mbox{\sc Facet}^{1P}$}}}
\newcommand{\RandCountP}{\mbox{$\mbox{\sc RandCount}^{1P}$}}
\newcommand{\RandCount}{\mbox{\sc RandCount}}
\newcommand{\NonRecursiveRandCount}{\mbox{\sc NonRecursiveRandCount}}
\newcommand{\val}{\mbox{y}}
\newcommand{\E}[1]{\mathbb{E}\left[{#1}\right]}
\newcommand{\PPr}[1]{\mathbb{P}\left[{#1}\right]}
\newcommand{\ignore}[1]{}
\begin{document}

\title{\RandomFacet\ and \RandomBland\\ require subexponential time even for shortest paths
}

\author{Oliver Friedmann\thanks{Department of Computer Science,
University of Munich, Germany. E-mail: {\tt
  Oliver.Friedmann@gmail.com}.}\\
 \and
Thomas Dueholm Hansen\thanks{Department of Management Science and Engineering, Stanford University, USA. Supported by The Danish Council for Independent Research $|$ Natural Sciences (grant no. 12-126512). E-mail:
{\tt tdh@cs.au.dk}.}\\
\and
Uri Zwick\thanks{Blavatnik School of Computer Science, Tel Aviv University,
  Israel. Research supported by BSF grant no. 2012338 and by the
The Israeli Centers of Research Excellence (I-CORE) program,
(Center No. 4/11). E-mail: {\tt
	zwick@tau.ac.il}.} }
\date{}

\maketitle

\begin{abstract}
The \RandomFacet\ algorithm of Kalai and of Matou{\v{s}}ek, Sharir and Welzl is an elegant randomized algorithm for solving linear programs and more general LP-type problems.
Its expected subexponential time of $2^{\tilde{O}(\sqrt{m})}$, where $m$ is the number of inequalities, makes it the fastest known combinatorial algorithm for solving linear programs. We previously showed that \RandomFacet\ performs an expected number of $2^{\tilde{\Omega}(\sqrt[3]{m})}$ pivoting steps on some LPs with~$m$ inequalities that correspond to $m$-action \emph{Markov Decision Processes} (MDPs). We also showed that \RandomPerm, a one permutation variant of \RandomFacet, performs an expected number of $2^{\tilde{O}(\sqrt{m})}$ pivoting steps on these examples. Here we show that the same results can be obtained using LPs that correspond to instances of the classical \emph{shortest paths} problem. This shows that the stochasticity of the MDPs, which is essential for obtaining lower bounds for \RandomEdge, is not needed in order to obtain lower bounds for \RandomFacet. We also show that our new $2^{\tilde{\Omega}(\sqrt{m})}$ lower bound applies to \RandomBland, a randomized variant of the classical anti-cycling rule suggested by Bland.
\end{abstract}


\section{Introduction}

Linear programming (LP) is one of the most successful mathematical modeling tools. The \emph{simplex algorithm}, introduced by Dantzig \cite{Dantzig63}, is one of the most widely used methods for solving linear programs. The simplex algorithm starts at a vertex of the polytope corresponding to the linear program. (We assume, for simplicity, that the linear program is feasible, bounded, and non-degenerate, and that a vertex of the polytope is available.) If the current vertex is not optimal, then at least one of the edges incident on it leads to a neighboring vertex with a smaller objective function. A \emph{pivoting rule} determines which one of these vertices to move to. The simplex algorithm, with any pivoting rule, is guaranteed to find an optimal solution of the linear program.

Unfortunately, with essentially all known deterministic pivoting rules, the simplex method is known to require \emph{exponential time} on some linear programs (see, e.g., Klee and Minty \cite{KlMi72}, Amenta and Ziegler \cite{AmZi96}, and a recent subexponential bound of Friedmann \cite{Friedmann/IPCO11}). While there are other polynomial time algorithms for solving LP problems, most notably the \emph{ellipsoid algorithm} (Khachian \cite{Khachiyan79}) and \emph{interior point methods} (Karmarkar \cite{Karmarkar84}), these algorithms are not \emph{strongly} polynomial, i.e., their running time, in the unit-cost model, depends on the number of bits needed to represent the coefficients of the LP, and not just on the combinatorial size of the problem, i.e., the number of variables and the number of constraints. The question of whether there exists a strongly polynomial time algorithm for solving linear programs is of great theoretical importance.

Kalai \cite{Kalai92,Kalai97} and Matou{\v{s}}ek, Sharir and Welzl \cite{MaShWe96} devised a \emph{randomized} pivoting rule for the simplex algorithm, known as \RandomFacet, and obtained a \emph{subexponential} $2^{O(\sqrt{(m-d)\log d})}$ upper bound on the expected number of pivoting steps it performs on \emph{any} linear program, where $d$ is the \emph{dimension} of the linear program, and~$m$ is the number of inequality constraints. Matou{\v{s}}ek, Sharir and Welzl \cite{MaShWe96} actually obtained a dual version of the \RandomFacet\ pivoting rule and obtained an upper bound of $2^{O(\sqrt{d\log(m-d)})}$ on the number of dual pivoting steps it performs.
\RandomFacet\ is currently the fastest known pivoting rule for the simplex algorithm.

The \RandomFacet\ pivoting rule works as follows. Let~$v$ be the current vertex visited by the simplex algorithm. Randomly choose one of the \emph{facets} of the polytope that contain~$v$. Let $F$ be the facet chosen. Recursively find the optimal vertex~$v'$ among all vertices of the polytope that lie on~$F$. This corresponds to finding an optimal solution for a modified linear program in which the inequality corresponding to~$F$ is replaced by an equality. If $v'$ is also an optimal solution of the original problem, we are done. Otherwise, it is not difficult to check that there must be a single edge leading from $v'$ to a vertex $v''$ with a smaller value. This edge is taken and the algorithm is recursively run from~$v''$. 
A more formal description, as well as an equivalent non-recursive description, and descriptions of two variants, $\RandomPerm$ and $\RandomBland$, of \RandomFacet\ are given in Section~\ref{S-Random-Facet}.

\RandomFacet\ can be used to solve not only linear programs, but also a wider class of abstract optimization problems known as \emph{LP-type} problems (see \cite{MaShWe96}). Matou{\v{s}}ek \cite{Matousek94} showed that the subexponential upper bound on the complexity of \RandomFacet\ is essentially tight for LP-type problems. The instances used by Matou{\v{s}}ek~\cite{Matousek94}, however, are far from being linear programs.

In~\cite{FriedmannHansenZwick/SODA11,FriedmannHansenZwick/STOC11}, we claimed an $2^{\tilde{\Omega}(\sqrt{m})}$ lower bound on the complexity of \RandomFacet\ for actual LPs. The lower bound presented there, however, is actually for the \emph{one-permutation} variant $\RandomPerm$ of \RandomFacet. In~\cite{FriedmannHansenZwick/SODA11}, we erroneously claimed that the expected running times of $\RandomFacet$ and $\RandomPerm$ are the same, and thus thought that our lower bound also applied to $\RandomFacet$. 
The mistake is pointed out in \cite{Hansen12,FrHaZw14}. The lower bounds given in~\cite{FriedmannHansenZwick/SODA11,FriedmannHansenZwick/STOC11} for \RandomPerm\ are adapted in \cite{Hansen12} to give an $2^{\tilde{\Omega}(\sqrt[3]{m})}$ lower bound for $\RandomFacet$.

In this paper, we obtain an $2^{\tilde{\Omega}(\sqrt[3]{m})}$ lower bound for the original $\RandomFacet$ pivoting rule. While the new bound is smaller then the bounds obtained in~\cite{FriedmannHansenZwick/SODA11,FriedmannHansenZwick/STOC11} for $\RandomPerm$, a square-root in the exponent is replaced by a cube-root, it still provides a subexponential lower bound on the running time of \RandomFacet. We also obtain a new $2^{\tilde{\Omega}(\sqrt{m})}$ lower bound on the running time of \RandomBland, a randomized version of Bland's pivoting rule (Bland \cite{Bland77}), which is closely related to both $\RandomFacet$ and $\RandomPerm$.


The lower bound given in~\cite{FriedmannHansenZwick/SODA11} is for \emph{parity games}, a class of deterministic 2-player games. The lower bounds given in~\cite{FriedmannHansenZwick/STOC11,Hansen12} are for LPs that correspond to \emph{Markov Decision Processes} (MDPs), a class of stochastic 1-player games. (For more on MDPs, see Puterman \cite{Puterman94}.)
Here, we obtain subexponential lower bounds for the running times of \RandomFacet, \RandomPerm\ and \RandomBland\ on LPs that correspond to standard acyclic \emph{shortest paths} problems, the simplest form of deterministic 1-player games.

It is interesting, and perhaps surprising, that \RandomFacet, the fastest known pivoting rule for general linear programs, requires a subexponential number of pivoting steps even for acyclic shortest paths problems which can be easily solved in linear time.

In \cite{FriedmannHansenZwick/STOC11} we also presented MDPs on which \RandomEdge, a different randomized pivoting rule, requires an expected number of $2^{\tilde{\Omega}(\sqrt[4]{m})}$ pivoting steps. In contrast with \RandomFacet, it is not difficult to show that \RandomEdge\ solves shortest paths problems in expected polynomial time.

As mentioned, \RandomFacet\ can be used to solve not only linear programs, but also a wider class of problems known as LP-type problems (see Matou{\v{s}}ek \etal\ \cite{MaShWe96}). Ludwig \cite{Ludwig95} was the first to show that \RandomFacet\ can be used to solve \emph{stochastic games}. Petersson and Vorobyov \cite{PeVo01}, and Bj{\"o}rklund \etal\ \cite{BjVo05,BjVo07} used \RandomFacet\ to solve several deterministic and stochastic games. Halman \cite{Halman07} showed that all these form LP-type problems.

The line of work pursued here started by Friedmann \cite{Friedmann09,Friedmann11} who proved that the \emph{strategy iteration} algorithm for \emph{parity games} may require exponential time. The strategy iteration algorithm is an adaptation to 2-player games of Howard's \cite{Howard60} \emph{policy iteration} algorithm for solving MDPs. (For more on parity games, see also
V{\"{o}}ge and Jurdzi{\'{n}}ski \cite{VoJu00} and Jurdzi{\'n}ski \etal \cite{JuPaZw08}.). Fearnley \cite{Fearnley10} used a similar construction to obtain an exponential lower bound for Howard's algorithm for MDPs, a class of stochastic 1-player games. Policy iteration algorithms are similar in nature to the simplex algorithm. The main difference is that they may apply \emph{multi-switches}, i.e., perform several pivoting steps simultaneously. In~\cite{FriedmannHansenZwick/SODA11,FriedmannHansenZwick/STOC11} and here we continue this line of work and obtain lower bounds for \RandomFacet\ applied to shortest paths problems.
Although \RandomFacet\ performs only one pivoting step at a time, its randomized nature makes it similar to more general policy iteration algorithms.

The rest of the paper is organized as follows. In Section~\ref{S-Random-Facet} we give a more formal description of the \RandomFacet\ pivoting rule and its variants. In Section~\ref{S-short-path} we remind the reader how shortest paths problems can be cast as linear programs and solved by the simplex algorithm. We also explain how \RandomFacet\ works in this concrete setting. In Section~\ref{S-randomized-counter} we describe a \emph{randomized counter} on which our lower bounds, as well as the lower bounds in~\cite{FriedmannHansenZwick/SODA11,FriedmannHansenZwick/STOC11} are based. In Section~\ref{sec:construction} we describe instances of the shortest paths problems that can be used to simulate the behavior of the randomized counter. In Section~\ref{S-lower-bound} we obtain the lower bound for \RandomFacet. In Section~\ref{S-one-pass-variant} we obtain the lower bound of \RandomPerm. In Section~\ref{S-randomized-blands-rule} we obtain the lower bound of \RandomBland. We end in Section~\ref{S-concluding-remarks} with some concluding remarks and open problems.



\section{\RandomFacet, $\RandomPerm$ and \RandomBland}\label{S-Random-Facet}

\newcommand{\policyIteration}{\mbox{\sc PolicyIteration}}
\newcommand{\AAA}{{\bf A}}
\newcommand{\BBB}{{\bf B}}
\newcommand{\cc}{{\bf c}}
\newcommand{\bcc}{\bar{\bf c}}
\newcommand{\bb}{{\bf b}}
\newcommand{\xx}{{\bf x}}
\newcommand{\JJ}{{\bf J}}
\newcommand{\ee}{{\bf e}}
\newcommand{\0}{{\bf 0}}
\newcommand{\xxx}{{\bf \bar{x}}}
\newcommand{\ttt}{{\textsc t}}
\newcommand{\yy}{{\bf y}}

\newcommand{\REDUCED}{\mbox{\sc Reduced-Cost}}
\newcommand{\PIVOT}{\mbox{\sc Pivot}}
\newcommand{\RANDOM}{\mbox{\sc Random}}
\newcommand{\IMPROVE}{\mbox{\sc Improve}}

We begin with a brief introduction to linear programming problems and the simplex algorithm. For a more thorough treatment, the reader is refereed to the textbooks of Dantzig \cite{Dantzig63}, Chv{\'a}tal \cite{Chvatal83}, Schrijver \cite{Schrijver86}, Bertsimas and Tsitsiklis \cite{Bertsimas1997introduction}, and Matou{\v{s}}ek and G{\"a}rtner \cite{MaGa07}.
As customary, we consider linear programming problems in standard form:
\[
(P)~
\begin{array}{llll}
\min & \cc^T \xx&& \\
\mbox{s.t.} &\AAA \xx &=& \bb \\
     &\xx&\ge& \0
\end{array} \quad\quad\quad\quad
(D)~
\begin{array}{llll}
\max & \bb^T \yy&& \\
\mbox{s.t.} &\AAA^T \yy &\le& \cc \\
\end{array}
\]
where $\AAA\in \RR^{m\times n}$, $\bb\in \RR^m$ and $\cc\in \RR^n$. Here~$n$ is the number of variables of the linear program, and~$m$ is the number of linear equalities. We assume that the rows of~$A$ are linearly independent. The \emph{dimension} of the linear program is defined to be $d=n-m$.
The linear program $(P)$ on the left is referred to as the \emph{primal} linear program, and the linear program $(D)$ on the right is referred to as the \emph{dual} linear problem. (We focus, here, mainly on the primal linear program.)

Let $B\subset[n]=\{1,2,\ldots,n\}$, $|B|=m$. We let $\BBB=\AAA_B\in \RR^{m\times m}$ be the $m\times m$ matrix obtained by selecting the columns of~$A$ whose indices belong to~$B$. If the columns of $\BBB$ are linearly independent, we refer to~$B$ as a \emph{basis} and to~$\BBB$ as a \emph{basis matrix}. There is then a unique vector $\xx\in\RR^n$ for which $\AAA\xx=\bb$ and $\xx_i=0$, for $i\not\in B$. If we let $N=[m]\setminus B$, then $\xx_B=\BBB^{-1}\bb$ and $\xx_N=\0$. This vector~$\xx$ is the \emph{basic solution} corresponding to~$B$. If $\xx_B\ge \0$, then~$\xx$ is said to be a \emph{basic feasible solution} (bfs). A bfs is a \emph{vertex} of the polyhedron corresponding to~$(P)$. The variables in $\{\xx_i \mid i\in B\}$, are referred to a \emph{basic} variables, while the variables in $\{\xx_i\mid i\in N\}$, are referred to as \emph{non-basic} variables. A simple and standard manipulation shows that the objective function can also be expressed as $\cc^T\xx = \cc_B^T\xx_B + \bcc^T\xx$, where $\bcc=\cc-(\cc_B^T\BBB^{-1}\AAA)^T$. The vector $\bcc\in\RR^n$ is referred to as the vector of \emph{reduced costs}. Note that $\cc_B^T\xx_B$ here is a constant. Also note that $\bcc_B=\0$, i.e., the reduced costs of the basic variables are all $0$. It is also not difficult to check that if $\bcc\ge \0$, then $\xx$ is an \emph{optimal} solution.

The simplex algorithm starts with a bfs $\xx$ corresponding to a basis~$B$. (We do not get here into the question as to how the first bfs is obtained.) If $\bcc\ge \0$, then $\xx$ is an optimal solution, and we are done. Otherwise, there must be an index $i\in N=[m]\setminus B$ for which $\bcc_i<0$. If the linear program is \emph{non-degenerate}, then the set $\{\xx\in \RR^n \mid \AAA\xx=\bb , \xx_j=0 \mbox{ for $j\in N\setminus\{i\}$} \}$ is an \emph{edge} of the polytope corresponding to~$(P)$ that leads to a bfs~$\xx'$ corresponding to a basis $B'=B\cup\{i\}\setminus\{j\}$, for some $j\in B$.  Furthermore, $\cc^T\xx'<\cc^T\xx$. The algorithm moves from~$\xx$ to~$\xx'$ and continues from there. This is referred to as a \emph{pivoting step}. If there are several indices~$i$ for which $\bcc_i<0$, a \emph{pivoting rule} is used to select one of them.

Dantzig's \cite{Dantzig63} pivoting rule, chooses an index~$i$ with the \emph{smallest reduced cost}, i.e., an index~$i$ which minimizes~$\bcc_i$. Bland's \cite{Bland77} pivoting rule chooses the \emph{smallest} index~$i$ for which $\bcc_i<0$. Bland's pivoting rule ensures the termination of the simplex algorithm even in the presence of degeneracies.

\RandomFacet\ is a more complicated randomized pivoting rule introduced by Kalai \cite{Kalai92,Kalai97}, and in a dual form by Matou{\v{s}}ek, Sharir and Welzl \cite{MaShWe96}. Unlike Dantzig's and Bland's rules mentioned above, \RandomFacet\ has \emph{memory}. Its behavior depends on decisions made at  previously visited bfs's, and not only on the current bfs. We begin by describing a recursive version of \RandomFacet\ which closely follows~\cite{Kalai92,Kalai97}. Let~$\xx$ be the initial bfs, corresponding to basis~$B$. Select a random index $i\not\in B$. Recursively find an optimal solution $\xx'$ of the linear program in which the inequality constraint $\xx_i\ge 0$ is replaced by the equality constraint $\xx_i=0$. (This corresponds to staying within the \emph{facet} $\xx_i=0$.) This effectively \emph{removes} the non-basic variable $\xx_i$, the column corresponding to it in~$\AAA$, and the entry corresponding to it in~$\cc$ from the linear program. If $\xx'$ is also an optimal solution of the original linear program, we are done. Otherwise, $i$ is the only index for which $\bcc'_i<0$, where $\bcc'$ is the vector of reduced costs corresponding to~$\xx'$. Perform the pivoting step in which $i$ enters the set of basic indices, and obtain a new bfs $\xx''$. Run the algorithm recursively from~$\xx''$.

Pseudo-code of the recursive version of \RandomFacet\ is given on the top of Figure~\ref{F-RandomFacet}. The first argument~$F$ of \RandomFacet\ is the set of indices of the variables that are \emph{not} constrained to be~$0$. Initially $F=[n]$. The second argument~$B$ is the current basis. Whenever \RandomFacet\ is called, it is assumed that~$B$ defines a valid bfs of the problem and that $B\subseteq F$. For simplicity, the pseudo-code given works exclusively with bases. It returns a basis~$B$ that corresponds to an optimal bfs of the problem. If $F=B$, then $B$ is
the only, and therefore optimal, solution of the current problem, so it is returned. Otherwise, \RandomFacet\ chooses at random an index~$i$ of a currently non-basic variable. By the recursive call $B'\gets \RandomFacet(F\setminus\{i\},B)$ it computes an optimal solution under the additional constraint that the $i$-th variable is required to be~$0$. The call $\IMPROVE(B',i)$ checks whether performing a pivot step that inserts~$i$ into the basis~$B'$ would yield a bfs with a smaller objective function. More specifically, $\IMPROVE(B',i)$ computes the bfs~$\xx'$ corresponding to~$B'$ and its reduced cost vector~$\bcc'$, and checks whether $\bcc'_i<0$. If so, $B''\gets \PIVOT(B',i)$ performs the corresponding pivot step and returns the resulting basis~$B''$. (Note that $B''=B\cup\{i\}\setminus\{j\}$, for some $j\in B$.) Finally, the second recursive call $\RandomFacet(F,B'')$ computes and returns an optimal basis of the problem.

The description of \RandomFacet\ given on the top of Figure~\ref{F-RandomFacet} is identical to the description of \RandomFacet\ given by Matou{\v{s}}ek, Sharir and Welzl \cite{MaShWe96}, with the only difference that Matou{\v{s}}ek \etal\ consider it to be an algorithm for solving the dual linear program $(D)$. Note that every basis $B\subseteq [n]$, $|B|=m$ constructed by the algorithm defines not only a basic feasible primal solution, as described above, but also a basic, not necessarily feasible, dual solution $\yy=(\BBB^T)^{-1}\cc_B$. In fact, all dual solutions, except the last one, are infeasible. Note that $\bcc=\cc-\AAA^T \yy$. Choosing an index $i\in F\setminus B$ and adding the primal constraint $\xx_i=0$, corresponds to \emph{discarding} the $i$-th dual constraint. If the solution returned by the first recursive call $\RandomFacet(F\setminus\{i\},B)$ is dual feasible, then it is an optimal solution, of both the primal and dual linear programs. Otherwise, a pivoting step is performed. The correspondence between the algorithms of Kalai \cite{Kalai92,Kalai97} and of Matou{\v{s}}ek, Sharir and Welzl~\cite{MaShWe96} was first noticed by Goldwasser \cite{Goldwasser95}.

\RandomFacet\ makes a fresh random choice at each invocation. It is natural to wonder whether so much randomness is really needed. Two \emph{one-permutation} variants of $\RandomFacet$ are given at the bottom of Figure~\ref{F-RandomFacet}. Both of them base their choices on a single permutation~$\sigma$ of~$[n]$. This permutation is randomly chosen before the first invocation. The same (random) permutation is then used in all recursive calls. The first of these variants, shown on the bottom left, referred to as \RandomPerm, is identical to \RandomFacet\ in every aspect, except that instead of choosing a random $i\in F\setminus B$, it chooses the index~$i$ from $F\setminus B$ for which $\sigma(i)$ is minimized. The second variant, shown on the bottom right of Figure~\ref{F-RandomFacet}, contains an additional variation. Instead of choosing the index~$i$ from $F\setminus B$, it is chosen from the whole of~$F$. Also, the recursion bottoms out when $F=\emptyset$, and not when $F=B$. The invariant $B\subseteq F$ is no longer maintained.

We refer to the second variant as \Bland\ as, as we shall see, it is actually equivalent to Bland's \cite{Bland77} pivoting rule. When $\sigma$ is a uniformly random permutation we refer to the algorithm as $\RandomBland$.

It is not difficult to verify the correctness of these two variants. $\RandomPerm(F,B,\sigma)$ finds an optimal solution when all variables not in~$F$ are required to be~$0$. If $\RandomBland(F,B,\sigma)$ returns a basis $B'$, then $B'$ is an optimal solution when all variables not in $F\cup B'$ are required to be~$0$. This holds for any permutation~$\sigma$. The permutation~$\sigma$ may determine, however, the sequence of pivoting steps performed. (In both cases, the proof follows from the fact that all pivoting steps performed reduce the value of the objective function, and that when the algorithm terminates, there are no such improving pivoting steps.)


\begin{figure}[t]
\begin{center}
\parbox{3in}{
\SetAlgoFuncName{Algorithm}{anautorefname}
\begin{function}[H]
\DontPrintSemicolon
\SetAlgoRefName{}
\eIf{$F =B$}
{\Return{$B$}\;}
{
    $i\gets \RANDOM(F\setminus B)$ \;
    $B' \gets \RandomFacet(F \setminus\{i\},B)$ \;
    \eIf{$\IMPROVE(B',i)$}
    {
        $B'' \gets \PIVOT(B',i)$ \;
        \Return{$\RandomFacet(F,B'')$}\;
    }
    {
      \Return{$B'$}\;
    }
}
\caption{\RandomFacet($F,B$)}
\end{function}
}
\\[10pt]
\parbox{3in}{
\SetAlgoFuncName{Algorithm}{anautorefname}
\begin{function}[H]
\DontPrintSemicolon
\SetAlgoRefName{}
\eIf{$F =B$}
{\Return{$B$}\;}
{
    $i\gets \argmin_{j \in F \setminus B} \sigma(j)$ \;
    $B' \gets \RandomPerm(F \setminus\{i\},B,\sigma)$ \;
    \eIf{$\IMPROVE(B',i)$}
    {
        $B'' \gets \PIVOT(B',i)$ \;
        \Return{$\RandomPerm(F,B'',\sigma)$}\;
    }
    {
      \Return{$B'$}\;
    }
}
\caption{\RandomPerm($F,B,\sigma$)}
\end{function}
}
\hspace*{10pt}
\parbox{3in}{
\SetAlgoFuncName{Algorithm}{anautorefname}
\begin{function}[H]
\DontPrintSemicolon
\SetAlgoRefName{}
\eIf{$F =\emptyset$}
{\Return{$B$}\;}
{
    $i\gets \argmin_{j \in F} \sigma(j)$ \;
    $B' \gets \Bland(F \setminus\{i\},B,\sigma)$ \;
    \eIf{$\IMPROVE(B',i)$}
    {
        $B'' \gets \PIVOT(B',i)$ \;
        \Return{$\Bland(F,B'',\sigma)$}\;
    }
    {
      \Return{$B'$}\;
    }
}
\caption{\Bland($F,B,\sigma$)}
\end{function}
}
\end{center}
\caption{The \RandomFacet, $\RandomPerm$, and $\Bland$ algorithms.}\label{F-RandomFacet}
\end{figure}

In~\cite{FriedmannHansenZwick/SODA11}, we erroneously claimed that for every linear program, the expected number of pivoting steps performed by \RandomFacet\ and \RandomPerm, with a randomly chosen permutation~$\sigma$, is the same. A counterexample to this claim is given in~\cite{FrHaZw14}.
The `proof' given in~\cite{FriedmannHansenZwick/SODA11} relied on the linearity of expectations, claiming that the fact that the two recursive calls of \RandomPerm\ share some random choices does not effect the expected number of pivoting steps performed. The `proof', however, contained a subtle flaw. Thus, while \RandomFacet\ has a subexponential upper bound on its complexity, no such subexponential upper bounds are currently known for \RandomPerm\ and \RandomBland.

We next describe an equivalent non-recursive version of \RandomFacet. Let $\xx$ be the current bfs, corresponding to the set~$B$ of basic indices, and the set~$N$ of non-basic indices. The algorithm maintains a \emph{permutation} $\langle i_1,i_2,\ldots,i_{d}\rangle$, where $d=n-m$, of the indices of~$N$. If~$\xx$ is the initial bfs, then a random permutation of~$N$ is chosen. If $\bcc\ge \0$, then $\xx$ is optimal, and we are done. Otherwise, choose the \emph{first} index~$i_j$ in the permutation for which $\bcc_{i_j}<0$. Perform a pivoting step from~$B$ to $B'=B\cup\{i_j\}\setminus\{i'\}$, for some $i'\not\in B\cup\{i_j\}$, so that  $N'=N\setminus\{i_j\}\cup\{i'\}$. The permutation corresponding to~$N'$ is obtained by randomly permuting $i_1,\ldots,i_{j-1},i'$, and keeping the original order of $i_{j+1},\ldots,i_d$. Proving, by induction, the equivalence of the recursive and non-recursive definitions of \RandomFacet\ is an instructive exercise. The permutation of~$N$ kept by the algorithm is exactly the `memory' referred to earlier. It also corresponds to the stack kept by the recursive version of \RandomFacet.

The non-recursive formulation of \RandomFacet\ is similar, yet different, from the following randomized version of Bland's rule. Choose a random permutation $\langle i_1,i_2,\ldots,i_{n}\rangle$ of $[n]$. At each step, choose the \emph{first} index~$i_j$ in the permutation for which $\bcc_{i_j}<0$ to enter the basis. We show below that this randomized version of Bland's rule is exactly the non-recursive version of the variant \Bland\ given above. There are two main differences between the non-recursive versions of \RandomFacet\ and \Bland. The first is that \Bland\ uses a permutation of the indices of all variables, not only those that are currently non-basic. The second is that \Bland\ uses the same permutation throughout the operation of the algorithm, while \RandomFacet\ `refreshes' the permutation at each step by randomly permuting the prefix $i_1,\ldots,i_{j-1},i'$.

The proof that the recursive and non-recursive formulations of \Bland\ are equivalent follows easily by induction. To get the exact equivalence, we assume that $\Bland$ chooses the \emph{last} index~$i_j$ in the permutation~$\sigma$ for which $\bcc_{i_j}<0$ to enter the basis. Let $\sigma^{-1}=\langle i_1,i_2,\ldots,i_n \rangle$. Note that every recursive call of $\Bland$ is of the form $\Bland(F_k,B,\sigma)$, where $F_k=\{i_k,i_{k+1},\ldots,i_n\}$ for some $k\in[n+1]$. (We let $F_{n+1}=\emptyset$.) The non-recursive algorithm iteratively considers $i_n,i_{n-1},\ldots$ until finding the first index~$j$ for which $\IMPROVE(B,j)$ returns true. If no such improving switch is found, then~$B$ is an optimal basis, and the algorithm terminates. We let $\Bland'(k,B,\sigma)$ denote the non-recursive algorithm which only considers the indices $i_n,i_{n-1},\ldots,i_k$. We claim that for every linear program, every initial basis $B\subseteq [n]$, every permutation $\sigma$, and every $k\in[n+1]$, $\Bland(F_k,B,\sigma)$ and $\Bland'(k,B,\sigma)$ perform exactly the same sequence of pivoting steps and hence return the same basis. Suppose that the claim is true for all larger values of~$k$, and for all better bases $B$. (A basis $B'$ is better than $B$ if the value of the bfs corresponding to~$B'$ is smaller than that of~$B$.) If~$B$ is an optimal basis, then both algorithms stop immediately. Otherwise, \Bland\ performs the recursive call $\Bland(F_{k+1},B,\sigma)$. By induction, this recursive call is equivalent to $\Bland(k+1,B,\sigma)$, so both return the same basis $B'$. Now, if $\IMPROVE(B',i_k)$ is false, then both $\Bland(F_k,B,\sigma)$ and $\Bland'(k,B,\sigma)$ are done. Otherwise, they both perform the pivoting step $B''\gets \PIVOT(B',i_k)$ and then perform $\Bland(F_k,B'',\sigma)$ and $\Bland'(k,B'',\sigma)$, respectively. As~$B''$ is a better basis than $B$, it follows by induction that these two calls are again equivalent.

\section{Shortest paths}\label{S-short-path}

In~\cite{FriedmannHansenZwick/SODA11,FriedmannHansenZwick/STOC11} we showed that \RandomPerm\ performs a subexponential number of pivoting steps on some linear programs that correspond to MDPs. Here we obtain similar subexponential lower bounds also for \RandomFacet\ and \RandomBland. Furthermore, we show that such lower bounds for \RandomFacet, \RandomPerm\ and \RandomBland\ can be obtained using linear programs that correspond to the \emph{purely combinatorial} problem of finding \emph{shortest paths} in directed graphs. The graphs we use are even acyclic and all their edge weights are non-negative. It is slightly more convenient for us to consider shortest paths from all vertices \emph{to} a given \emph{target} vertex, rather than shortest paths \emph{from} a given \emph{source} vertex. The two problems, however, are clearly equivalent.

Let $G=(V,E,c)$ be a weighted directed graph, where $c:E\to\RR$ is a \emph{cost} (or \emph{length} function) defined on its edges. Let $\TT\in V$ be a specific vertex designated as the \emph{target} vertex. We let $n=|V\setminus\{\TT\}|$ and $m=|E|$ be the number of vertices, not counting the target, and edges in~$G$, respectively. We are interested in finding a tree of shortest paths from all vertices to~$\TT$. The problem, for general graphs with positive and negative edge weights, but no negative cycles, can be solved in $O(mn)$ time using a classical algorithm of Bellman and Ford \cite{Bellman58,Ford56}. When the edge weights are non-negative, the problem can be solved in $O(m+n\log n)$ time using Dijkstra's algorithm \cite{Di59}. When the graph is acyclic, the problem can be easily solved in $O(m+n)$ time.

The simplex algorithm, specialized to the \emph{min cost flow} problem, is usually referred to as the \emph{network simplex} algorithm. For a thorough treatment of the network simplex algorithm, see
Chv{\'a}tal \cite{Chvatal83}, Ahuja \etal\ \cite{AhMaOr93}, and Bertsimas and Tsitsiklis \cite{Bertsimas1997introduction}. As the shortest paths problem is a very special case of the min cost flow problem, it can also be solved using the network simplex algorithm.

The shortest path problem can be formulated as a min cost flow problem, and hence as a linear program, as follows. Finding a tree of shortest paths from all vertices to the target $\TT$ is equivalent to finding the min cost flow in which we have a supply of one unit at each vertex, other than~$\TT$, and a demand of $n$ units at~$\TT$. (Recall that~$n$ is the number of non-terminal vertices.) The corresponding primal and dual linear programs are:
\[
(P)~
\begin{array}{llll}
\min & \cc^T \xx&& \\
\mbox{s.t.} &\AAA \xx &=& \ee \\
     &\xx&\ge& \0
\end{array} \quad\quad\quad\quad
(D)~
\begin{array}{llll}
\max & \ee^T \yy&& \\
\mbox{s.t.} &\AAA^T \yy &\le& \cc \\
\end{array}
\]
where $\AAA\in \RR^{n\times m}$ is the \emph{incidence matrix} of the graph. We assume, without loss of generality, that $V\setminus\{\TT\}=[n]$.~\footnote{When considering linear programs in standard form, it is customary to use~$n$ for the number of variables and~$m$ the number of equality constraints. When considering graphs, it is customary to use~$n$ for the number of vertices and~$m$ for the number of edges. Unfortunately, these two conventions clash in our case. We switch now to graph terminology, so~$n$ is now the number of vertices, and hence the number of equality constraints, and~$m$ is the number of edges, and hence the number of variables and inequality constraints.} Each row of~$\AAA$ corresponds to a vertex of the graph~$G$. Each column of~$\AAA$ corresponds to an edge of~$G$. Each column contain a single $+1$ entry, and possibly a $-1$ entry. If the $i$-th edge is $(j,k)$, then $\AAA_{j,i}=+1$ and $\AAA_{k,i}=-1$. All other entries of the $i$-th column are zeros. If the $i$-th edge is $(j,\TT)$, then $\AAA_{j,i}=+1$ and all other entries of the $i$-th column are zeros. (Note that some authors define the incidence matrix to be the negation of our incidence matrix.) The vector $\cc\in\RR^m$ contains the edge costs. The vector $\ee\in \RR^n$ is the all one vector. The primal variable vector $\xx\in\RR^m$ is the \emph{flow} vector, specifying the flow on each edge. The constraints $\AAA\xx=\ee$ ensure that the net flow out of each vertex is $1$. Note that the target $\TT$ does not appear explicitly in this formulation. Finally, $\yy\in \RR^n$, the dual variables vector, directly specifies stipulated distances from each vertex to~$\TT$.

It is not difficult to check that every bfs~$B$ of $(P)$ corresponds to a \emph{tree} containing paths from all vertices to~$\TT$. If $i\in B$, i.e., the $i$-th edge $e_i=(j,k)$ belongs to the tree, then $\xx_i$, the flow on $e_i$, is the number of \emph{descendants} of~$j$ in this tree, including~$j$ itself. The cost of~$B$ is thus the sum of the lengths of the paths along the tree from all vertices to~$\TT$. The cost is therefore minimized when~$B$ corresponds to a shortest paths tree. The dual variables corresponding to~$B$ are the \emph{distances} along the tree. Thus, $\yy_j$ is simply the length of the path from~$j$ to~$\TT$ in the tree. The reduced costs~$\cc$ also have a simple combinatorial interpretation. If $e_i=(j,k)$, then $\bcc_i=\cc_i+\yy_k-\yy_j$. (If $e_i=(j,\TT)$, then $\bcc_i=\cc_i-\yy_j$.) Note that if $\bcc_i<0$, then $\cc_i+\yy_k<\yy_j$, and thus the path from~$j$ to~$\TT$ that starts with the edge $e_i=(j,k)$, which is currently not in the tree, is shorter than the path from~$j$ to~$\TT$ along the tree. The tree can therefore be improved by performing a \emph{switch} in which the edge $e_i=(j,k)$ is inserted, and the edge of the tree currently emanating from~$j$ is removed. If the graph does not contain \emph{negative} cycles, then a new and improved tree is obtained. This is exactly the pivoting step in which~$i$ enters the basis.

Each tree~$B$ of paths from all vertices to the terminal~$\TT$ is obtained by choosing one outgoing edge from each non-terminal vertex. This may also be viewed as specifying a \emph{policy} in a 1-player game.
In the case of shortest paths with no negative cycles, the set of edges chosen, i.e., the policy, is required to be \emph{acyclic}. In more general classes of games, a policy may contain cycles.

In the sequel, we view a basis, i.e., tree, $B$, not as a set of indices but as a set of edges. We also let $y_B(j)=\yy_j$ denote the distance in~$B$ from vertex~$j$ to~$\TT$, for every $j\in [n]$.
If $e=(i,j)$ is an edge of the
graph and $c(i,j)+y_B(j)<y_B(i)$, then~$e$ is said to be an
\emph{improving switch} with respect to~$B$. We let $B[e]$ be
the tree obtained by performing the switch, i.e.,
$B[e]=B\cup\{e\}\setminus\{e'\}$, where~$e'$ is the edge of~$B$
emanating from~$i$. We let $y(j)$ denote the \emph{distance} from~$j$ to~$\TT$ in the graph, i.e., $y(j)=y_{B*}(j)$, where $B^*$ is a shortest paths tree. An edge $e=(i,j)$ is said to be \emph{optimal} if it appears on some shortest path from~$i$ to~$\TT$. It is not difficult to see that an edge $(i,j)$ is optimal if and only if $y(i)=c(i,j)+y(j)$, and
a tree~$B$ is optimal, i.e., is a tree of shortest paths, if and only if all its edges are optimal.

We end the section by describing the way \RandomFacet\ is used to find a tree of shortest paths in a weighted directed graph $G=(V,E,c)$ to a terminal~$\TT$ using the terminology used throughout the rest of the paper. The algorithm starts with an initial tree $B$. It randomly chooses an edge~$e$ \emph{not} in~$B$. A recursive call of the algorithm is used to find a tree~$B'$ of shortest paths in the graph $G\setminus\{e\}$ obtained by removing~$e$. If~$e$ is not an improving switch with respect to~$B'$, then $B'$ is also a tree of shortest paths of the original graph~$G$. Otherwise, the improving switch $B''\gets B'[e]$ is performed, and the algorithm is called recursively on the whole graph with~$B''$.

\section{A high-level description of the lower bound proofs}\label{S-randomized-counter}

Our lower bounds for the $\RandomFacet$, $\RandomPerm$, and
$\RandomBland$ algorithms are obtained by simulating the behavior of
the \emph{randomized counter} shown in Figure~\ref{fig: randomized
  counter}.
Such a counter is composed of $n$ bits, $bit_1,\ldots,bit_n$, all
initially~$0$.
The randomized counter works in a
recursive manner, focusing each time on a subset $N\subseteq [n]:=
\{1,\ldots,n\}$ of the bits, such that $bit_i =
0$ for all $i \in N$. Initially $N=[n]$. If $N=\emptyset$,
then nothing is done. Otherwise, the counter chooses a random index
$i\in N$ and recursively performs a randomized count on
$N\setminus\{i\}$. When this recursive count is done, we have
$bit_j = 1$, for every $j\in N\setminus\{i\}$, while
$bit_i=0$. Next, the $i$-th bit is set to~$1$, and all bits
$j \in N \cap [i-1]$ are reset to~$0$.
Finally, a recursive randomized count is performed on $N\cap[i-1]$.

\begin{figure}[t]
\begin{center}

\parbox{3in}{
\begin{function}[H]
\DontPrintSemicolon
\If{$N\neq\emptyset$}
{
    $i \gets \RANDOM(N)$ \;
    $\RandCount(N \setminus \{i\})$ \;
    $bit_i \gets 1$ \;
    \lFor{$j \in N \cap [i-1]$}{~$bit_j \gets 0$} \;
    $\RandCount(N \cap [i-1])$ \;
}
\caption{\RandCount($N$)}
\end{function}
}
\hspace*{10pt}
\parbox{3in}{
\begin{function}[H]
\DontPrintSemicolon
\If{$N\neq\emptyset$}
{
    $i \gets \argmin_{j \in N}\; \sigma(j)$\;
    $\RandCount^{1P}(N \setminus \{i\}, \sigma)$ \;
    $bit_i \gets 1$ \;
    \lFor{$j \in N \cap [i-1]$}{~$bit_j \gets 0$} \;
    $\RandCount^{1P}(N \cap [i-1], \sigma)$ \;
}
\caption{\RandCount$^{1P}$($N,\sigma$)}
\end{function}
}
\end{center}
\caption{The randomized counter. Original version on the left. One-permutation variant on the right.}\label{fig: randomized counter}
\end{figure}

Let $f(n)$ be the expected number of times the call $\RandCount([n])$
sets a bit of the randomized counter to~$1$. It is not difficult to
check that the behavior of $\RandCount(N)$, where $|N|=k$, is
equivalent to the behavior of $\RandCount([k])$. It is then easy to see
that $f(n)$ satisfies the following recurrence relation:
\begin{align*}
f(0) ~&=~ 0 \\
f(n) ~&=~ f(n-1) + 1 + \frac{1}{n} \sum_{i=0}^{n-1} f(i)
\quad\text{for $n > 0$}
\end{align*}

\begin{lemma}\label{L-recurrence}
$\displaystyle f(n) \;=\; \sum_{k=1}^n \frac{1}{k!} {n \choose k} ~.$
\end{lemma}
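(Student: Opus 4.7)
The plan is to prove the identity by induction on $n$. Let $g(n) = \sum_{k=1}^n \frac{1}{k!}\binom{n}{k}$, with the convention $g(0)=0$. The base case $f(0)=g(0)=0$ is immediate. For the inductive step, assuming $f(i)=g(i)$ for all $i<n$, it suffices to show that $g$ satisfies the same recurrence as $f$, namely
\[
g(n) \;=\; g(n-1) + 1 + \frac{1}{n}\sum_{i=0}^{n-1} g(i),
\]
or equivalently $n\bigl(g(n)-g(n-1)\bigr) \;=\; n + \sum_{i=0}^{n-1}g(i)$.

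First I would compute $g(n)-g(n-1)$ using Pascal's rule $\binom{n}{k}=\binom{n-1}{k}+\binom{n-1}{k-1}$ (noting $\binom{n-1}{n}=0$) to obtain
\[
g(n)-g(n-1) \;=\; \sum_{k=1}^{n} \frac{1}{k!}\binom{n-1}{k-1}.
\]
Multiplying through by $n$ and using the identity $\frac{n}{k}\binom{n-1}{k-1}=\binom{n}{k}$ gives
\[
n\bigl(g(n)-g(n-1)\bigr) \;=\; \sum_{k=1}^n \frac{1}{(k-1)!}\binom{n}{k}.
\]

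Next I would evaluate $\sum_{i=0}^{n-1} g(i)$ by swapping the order of summation and applying the hockey-stick identity $\sum_{i=k}^{n-1}\binom{i}{k}=\binom{n}{k+1}$:
\[
\sum_{i=0}^{n-1} g(i) \;=\; \sum_{k=1}^{n-1}\frac{1}{k!}\sum_{i=k}^{n-1}\binom{i}{k} \;=\; \sum_{k=1}^{n-1} \frac{1}{k!}\binom{n}{k+1} \;=\; \sum_{j=2}^{n}\frac{1}{(j-1)!}\binom{n}{j}.
\]
Adding $n = \frac{1}{0!}\binom{n}{1}$ to both sides absorbs the missing $j=1$ term, yielding
\[
n + \sum_{i=0}^{n-1} g(i) \;=\; \sum_{j=1}^{n}\frac{1}{(j-1)!}\binom{n}{j},
\]
which matches the expression for $n(g(n)-g(n-1))$ above, completing the induction.

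There is no real obstacle here—the argument is essentially two standard binomial manipulations (Pascal and hockey-stick) glued together. The only point requiring a bit of care is matching the summation ranges so that the extra ``$+1$'' in the recurrence and the missing $k=0$ term on the other side line up; handling the boundary contribution $n = \binom{n}{1}/0!$ is the critical bookkeeping step.
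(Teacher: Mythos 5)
Your proof is correct and takes essentially the same approach as the paper's: both verify by induction that the closed form satisfies the recurrence, with the hockey-stick identity $\sum_{i=k}^{n-1}\binom{i}{k}=\binom{n}{k+1}$ handling the $\sum_{i=0}^{n-1}$ term and Pascal's rule recombining the binomial coefficients. You clear the $1/n$ factor and work with $g(n)-g(n-1)$ whereas the paper substitutes directly into the recurrence, but this is just a cosmetic reorganization of the same bookkeeping.
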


The proof of the lemma can be found in Appendix~\ref{A-recurrence}.

According to Lemma~\ref{L-recurrence} we can interpret
$f(n)$ as the expected number of increasing subsequences in a
uniformly random permutation of $[n] = \{1,\dots,n\}$, i.e., every non-empty
subset $S \subseteq [n]$ appears as an increasing subsequence with
probability $1/|S|!$.
The asymptotic behavior of~$f(n)$ is known quite precisely:
\begin{lemma}[\cite{LiPi81},\mbox{\cite[p. 596--597]{flajoletsedgewick/analyticcombinatorics}}] \label{lemma:asymp}
$\displaystyle
f(n) \sim \frac{{\rm e}^{2\sqrt{n}}}{
  2\sqrt{\pi{\rm e}}\, n^{1/4}} 
$
\end{lemma}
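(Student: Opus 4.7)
The plan is to exploit a special-function identity and then invoke a standard asymptotic expansion. First, I would rewrite the sum as
$f(n)+1 \;=\; \sum_{k=0}^n \binom{n}{k}\frac{1}{k!} \;=\; L_n(-1),$
where $L_n(x)=\sum_{k=0}^n\binom{n}{k}\frac{(-x)^k}{k!}$ is the $n$-th (classical) Laguerre polynomial. Thus the statement reduces to the asymptotic behaviour of $L_n(-1)$ as $n\to\infty$.

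Next, I would invoke Perron's asymptotic for Laguerre polynomials at a fixed point outside the oscillation interval (equivalently, the Plancherel--Rotach asymptotics specialized to a fixed negative argument): for fixed $x>0$,
$L_n(-x) \;=\; \frac{1}{2\sqrt{\pi}}\,e^{-x/2}\,(nx)^{-1/4}\,e^{2\sqrt{nx}}\,\bigl(1+O(n^{-1/2})\bigr).$
Setting $x=1$ yields $L_n(-1)\sim \dfrac{e^{2\sqrt{n}}}{2\sqrt{\pi e}\,n^{1/4}}$; since this grows super-polynomially, the additive constant $-1$ is absorbed in the error term and the lemma follows.

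For a self-contained derivation avoiding any black-box theory of orthogonal polynomials, I would apply Laplace's method directly to the sum. A short calculation shows that $\log(\binom{n}{k}/k!)$ is concave in $k$ with maximum near $k^\star\approx\sqrt{n}$ (solve $k^2\approx n-k$). Substituting $k=\sqrt{n}+t$ and applying Stirling's formula gives
$\frac{\binom{n}{k}}{k!} \;\sim\; \frac{e^{2\sqrt{n}}}{\sqrt{2\pi}\,n^{1/2}\,\sqrt{e}}\,\exp\!\Bigl(-\tfrac{t^2}{\sqrt{n}}\Bigr),$
where the factor $1/\sqrt{e}$ comes from $\binom{n}{k^\star}/n^{k^\star}\sim \exp(-\sum_{j<k^\star}j/n)\sim e^{-1/2}$; this is precisely what distinguishes the answer from the naive modified-Bessel estimate $I_0(2\sqrt{n})\sim e^{2\sqrt{n}}/(2\sqrt{\pi}\,n^{1/4})$. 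Approximating the sum by a Gaussian integral of width $\Theta(n^{1/4})$ contributes a multiplicative factor $\sqrt{\pi}\,n^{1/4}$, and collecting all factors yields the stated $f(n)\sim e^{2\sqrt{n}}/(2\sqrt{\pi e}\,n^{1/4})$.

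The main obstacle in the self-contained route is tail control: one must bound the contribution of indices $k$ with $|k-\sqrt{n}|$ larger than, say, $n^{1/4}\log n$, which is standard but requires a careful log-concavity / monotone-ratio argument. Taking the Laguerre route sidesteps this issue entirely at the cost of citing a classical asymptotic.
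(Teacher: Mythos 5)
The paper itself does not prove this lemma; it is cited from Lifschitz--Pittel and from Flajolet--Sedgewick, so any argument you supply is by construction filling in what the paper leaves to the references. Your first route---recognizing $f(n)+1 = L_n(-1)$ and invoking Perron's asymptotic for Laguerre polynomials at a fixed negative argument---is correct: with $\alpha=0$ and $x=-1$, the classical formula $L_n^{(\alpha)}(x) \sim \frac{1}{2\sqrt{\pi}}\, e^{x/2}(-x)^{-\alpha/2-1/4}\, n^{\alpha/2-1/4}\, e^{2\sqrt{-nx}}$ gives exactly the stated constant, and the additive $-1$ is absorbed into the error term. This is a different (and shorter) derivation than the one on the cited pages of Flajolet--Sedgewick, which runs a saddle-point analysis on the generating function $\sum_{n\ge0} L_n(-1)z^n = e^{z/(1-z)}/(1-z)$ near its essential singularity at $z=1$; both are legitimate, and the Laguerre route is quicker if one is content to quote Szeg\H{o}.

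Your self-contained Laplace-method sketch is structurally sound (saddle at $k^\star\approx\sqrt{n}$, Gaussian width $\Theta(n^{1/4})$, and you correctly identify the source of the $e^{-1/2}$ as $\prod_{j<k^\star}(1-j/n)\sim e^{-1/2}$), but the displayed intermediate formula has a constant slip. The Stirling prefactor of $a_k=\binom{n}{k}/k!$ at the saddle is $\tfrac12\log(2\pi n)-\log(2\pi k^\star)-\tfrac12\log\bigl(2\pi(n-k^\star)\bigr) = -\log(2\pi\sqrt{n})+o(1)$, so the peak value is $a_{k^\star}\sim \frac{e^{2\sqrt{n}-1/2}}{2\pi\sqrt{n}}$, not $\frac{e^{2\sqrt{n}-1/2}}{\sqrt{2\pi}\,\sqrt{n}}$ as written. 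Taken literally, your display times the Gaussian mass $\sqrt{\pi}\,n^{1/4}$ would give $\frac{e^{2\sqrt{n}}}{\sqrt{2e}\,n^{1/4}}$, overshooting by $\sqrt{2\pi}$; with the corrected prefactor $\frac{1}{2\pi\sqrt{n}}$ the computation does close to $\frac{e^{2\sqrt{n}}}{2\sqrt{\pi e}\,n^{1/4}}$. The tail-control caveat you flag is real but routine, since $a_k$ is log-concave in $k$.
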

Note, in particular, that $f(n)$ is subexponential. The challenge,
of course, is to construct weighted directed graphs such that
the behavior of $\RandCount$ is mimicked by the
$\RandomFacet$, $\RandomPerm$, and
$\RandomBland$ algorithms.

Just like it was natural to define one-permutation variants of the
$\RandomFacet$ algorithm, the randomized counter, $\RandCount$, can be
implemented such that the random choices are based on a single, given,
random permutation $\sigma: [n] \to [n]$. We refer to the
one-permutation variant of $\RandCount$ as $\RandCount^{1P}$. It
is also shown in Figure \ref{fig: randomized counter}. We let
$f^{1P}(N,\sigma)$, where $N \subseteq [n]$, be the number of times
the call $\RandCount^{1P}(N,\sigma)$ sets a bit to 1. Note that
$\RandCount^{1P}$ is deterministic, and that:
\begin{align*}
f^{1P}(\emptyset,\sigma) &~=~ 0 \\
f^{1P}(N,\sigma) &~=~ f^{1P}(N\setminus\{i\},\sigma) + 1 + f^{1P}(N
\cap [i-1],\sigma) \quad\quad\text{where~ $N \ne \emptyset$~ and ~$i \in
  \argmin_{j \in N}\; \sigma(j)$}
\end{align*}
We let $f^{1P}(n)$ be the expected value of $f^{1P}([n],\sigma)$ when
$\sigma$ is a uniformly random permutation of $[n]$.
The following lemma is proved by using induction and linearity of
expectation. The proof of the lemma can be found in Appendix
\ref{A-counters}. \footnote{The fact that the expected number of steps performed by $\RandCount$ and  
$\RandCount^{1P}$ is the same led us to (mistakenly) believe that the expected number of steps performed by \RandomFacet\ and \RandomPerm\ is also the same.}
\begin{lemma}\label{lemma:counters}
$f(n) = f^{1P}(n)$.
\end{lemma}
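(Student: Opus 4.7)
Proof proposal. I would prove by induction on $n$ the slightly stronger statement that $\EE_\sigma[f^{1P}(N,\sigma)] = f(|N|)$ for every $N\subseteq[n]$, where the expectation is over a uniformly random permutation $\sigma$ of $[n]$; setting $N=[n]$ then gives the lemma. The base case $|N|=0$ is immediate from the definitions, since both sides equal~$0$. For the inductive step, take the expectation of the recurrence
\[
f^{1P}(N,\sigma) \;=\; f^{1P}(N\setminus\{i\},\sigma) \;+\; 1 \;+\; f^{1P}(N\cap[i-1],\sigma),\qquad i = \argmin_{j\in N}\sigma(j),
\]
and apply linearity of expectation to the two recursive terms separately, despite the fact that both depend on the same random~$\sigma$.

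For each of the two expectations I would condition on $i=i_0$. By symmetry of the uniform distribution on permutations, the conditional relative order of~$\sigma$ on $N\setminus\{i_0\}$ is uniform, and since $f^{1P}(N',\sigma)$ depends on $\sigma$ only through its relative orders on the subsets of $N'$ that arise in the recursion, the inductive hypothesis gives $\EE[f^{1P}(N\setminus\{i_0\},\sigma)\mid i=i_0] = f(|N|-1)$, independently of~$i_0$. The same argument produces $\EE[f^{1P}(N\cap[i_0-1],\sigma)\mid i=i_0] = f(|N\cap[i_0-1]|)$. The key combinatorial observation is that if $i_0$ is the $r$-th smallest element of $N$, then $|N\cap[i_0-1]|=r-1$; since $i$ is uniform on $N$ when $\sigma$ is uniform, the rank $r$ is uniform on $\{1,\ldots,|N|\}$, and therefore
\[
\EE\bigl[f^{1P}(N\cap[i-1],\sigma)\bigr] \;=\; \frac{1}{|N|}\sum_{k=0}^{|N|-1} f(k).
\]

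Summing the three contributions reproduces exactly the recurrence $f(|N|) = f(|N|-1) + 1 + \frac{1}{|N|}\sum_{k=0}^{|N|-1} f(k)$ from Section~\ref{S-randomized-counter}, closing the induction. The only delicate point is recognizing that reusing the same $\sigma$ in the two recursive calls is harmless here: linearity of expectation sidesteps the dependence entirely, so the expected contributions match term by term. As the footnote emphasizes, this is precisely the deceptively clean linearity argument that does \emph{not} survive the passage from counters to pivoting rules, and whose apparent success here led to the incorrect claim of identical expected complexity for $\RandomFacet$ and $\RandomPerm$ in~\cite{FriedmannHansenZwick/SODA11}.
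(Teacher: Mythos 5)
Your proof is correct and follows essentially the same route as the paper's: induction paired with linearity of expectation, using the key fact that $f^{1P}(N,\sigma)$ depends on $\sigma$ only through its relative order on $N$ (and hence on the subsets arising in the recursion). The paper writes this as an explicit average over permutations of $[n]\setminus\{i\}$, while you phrase it via conditioning on $i=i_0$ and the uniformity of the rank of $i$ in $N$, but the underlying computation and ideas coincide.
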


\subsubsection*{Lower bound graphs and interpretation of trees.}
We next
describe how to obtain the graphs used for our lower bounds. We use the same family
of graphs for all three lower bounds. The graphs $G_{n,r,s,t}$ are
parameterized by four parameters, $n,r,s,t \in \NN$, where $n$ is the
number of bits. For simplicity
we initially consider the case when $r=s=t=1$. The
graphs are acyclic and are composed of~$n$ levels. The $i$-th level represents the $i$-th bit of the counter, with the $n$-th bit being the
most significant. Figure \ref{fig:one_level_simple} shows the $i$-th level of
the graph. The target~$\TT$ is identified with~$u_{n+1}$ and~$w_{n+1}$. The initial tree consists entirely of edges with non-zero cost.

\begin{figure}[t]
\center
\includegraphics[scale=0.8]{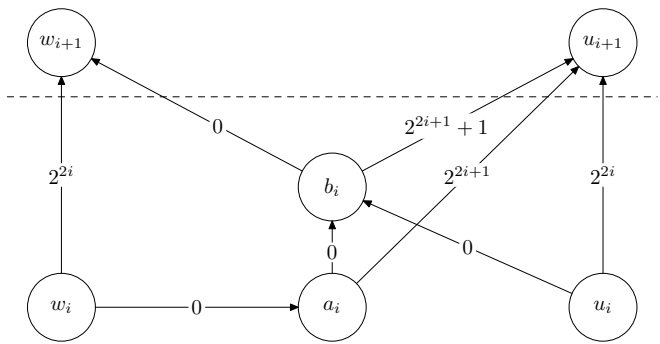}
\caption{The $i$-th level of the graph $G_{n,r,s,t}$ when $r=s=t=1$.}
\label{fig:one_level_simple}
\end{figure}

The critical vertex is the vertex $b_i$. A tree~$B$
consists of one out-going edge from every vertex. Since the choice at
$b_i$ is binary, we can interpret~$B$ as a setting of the binary
counter as follows. If $(b_i,w_{i+1}) \in B$ we say that the $i$-th
bit is 1 and write $bit_i(B) = 1$. Otherwise we say that the $i$-th
bit is 0 and write $bit_i(B) = 0$.

The critical edge is the edge $(b_i,w_{i+1})$. Recall that the
$\RandomFacet$, $\RandomPerm$, and $\RandomBland$ algorithms work by
removing edges and recursively solving sub-problems. When
$(b_i,w_{i+1})$ is removed, the $i$-th bit is fixed to 0 during the
following recursive call, which exactly corresponds to the behavior of
$\RandCount$. In order to argue that the recursive call sets the
remaining bits to 1, we characterize optimal trees when certain key
edges, such as $(b_i,w_{i+1})$, are removed.

\subsubsection*{Optimal trees when edges are removed.}
We say that level $i$ is \emph{stable} in a tree $B$ if
either $B$ contains all the~0 cost edges of the $i$-th level, or
$B$ contains all the non-zero cost edges of the $i$-th
level. We also say that $B$ is stable from level $i$ if the $j$-th level is
stable in $B$ for all $j \ge i$. It is not difficult to check that
$y_B(u_i) = y_B(w_i)$ if $B$ is stable from level $i$. Moreover,
if $B$ is stable from $i+1$ and no edge is missing in the $i$-th
level, then it is optimal to use the 0 cost edges of level~$i$. On the other hand, if
$(b_i,w_{i+1})$ is missing, then it is optimal to use the non-zero cost edges. In particular, the distances decrease as more and more
bits are set to 1.

It follows from the above discussion that when $(b_i,w_{i+1})$ is
removed the optimal tree is stable for all levels, and all bits,
except $i$, are 1. After returning from the first recursive call, the
$\RandomFacet$, $\RandomPerm$, and $\RandomBland$ algorithms
all perform the improving switch $(b_i,w_{i+1})$, setting the
$i$-th bit to 1 and making the level unstable. Suppose
the edge $(a_i,b_i)$ is removed next, so that no path from~$w_i$ to~$w_{i+1}$ has cost~0. The distance from~$u_i$ to~$w_{i+1}$ remains~0, and for the optimal tree $B$ we get $y_B(w_i) =
2^{2i} + y_B(u_i)$. In level $i-1$ all vertices then go to $u_{i}$, so that
$y_B(w_{i-1}) = 2^{2(i-1)} + y_B(u_{i-1})$, and by induction
all the lower bits are reset. Performing the improving switch $(a_i,b_i)$ reduces the cost of reaching
$w_{i+1}$ from $w_i$ to 0, which enables the lower bits to count again. Note that the lower levels are unstable at this stage, but it is, in fact, only important that
$(b_j,w_{j+1})\not\in B$ for $j < i$.

\subsubsection*{Gadgets and full construction.} The behavior described
above requires $\RandomFacet$, $\RandomPerm$, and
$\RandomBland$ to pick $(b_i,w_{i+1})$ before $(a_i,b_i)$, for $i
\in [n]$, and to pick the remaining
edges after $(a_i,b_i)$. We introduce two basic
gadgets to ensure this order occurs with high probability.

The first idea is to duplicate edges whose removal should be
delayed. We thus make $t$ copies of every edge other than
$(b_i,w_{i+1})$ and $(a_i,b_i)$, for $i \in [n]$. The second idea is to speed up the removal of an edge by replacing it with a path; removing
a single edge along the path corresponds to removing the original
edge. We thus replace $(b_i,w_{i+1})$ and $(a_i,b_i)$, for
$i \in [n]$, by paths of length $s$. We show that $(b_i,w_{i+1})$ and $(a_i,b_i)$, for $i\in [n]$, are removed before other edges with high probability when $s$ and $t$ are chosen appropriately. We apply the same
idea to ensure that $(b_i,w_{i+1})$ is
removed before $(a_i,b_i)$. We make $r$ copies of the path corresponding to $(a_i,b_i)$, and
we make the $(b_i,w_{i+1})$-path $r$ times longer.

The resulting graph $G_{n,r,s,t}$ is shown in Figure
\ref{fig:one_level}. Every vertex on
an edge-path can escape the path with an edge that corresponds to the
original choice. In order for the paths to be reset correctly,
the costs along the paths increase in increments
of size $\epsilon = \frac{1}{rs}$, making it
cheaper to escape a path as soon as possible.

\begin{figure}[t]
\center
\includegraphics[scale=0.8]{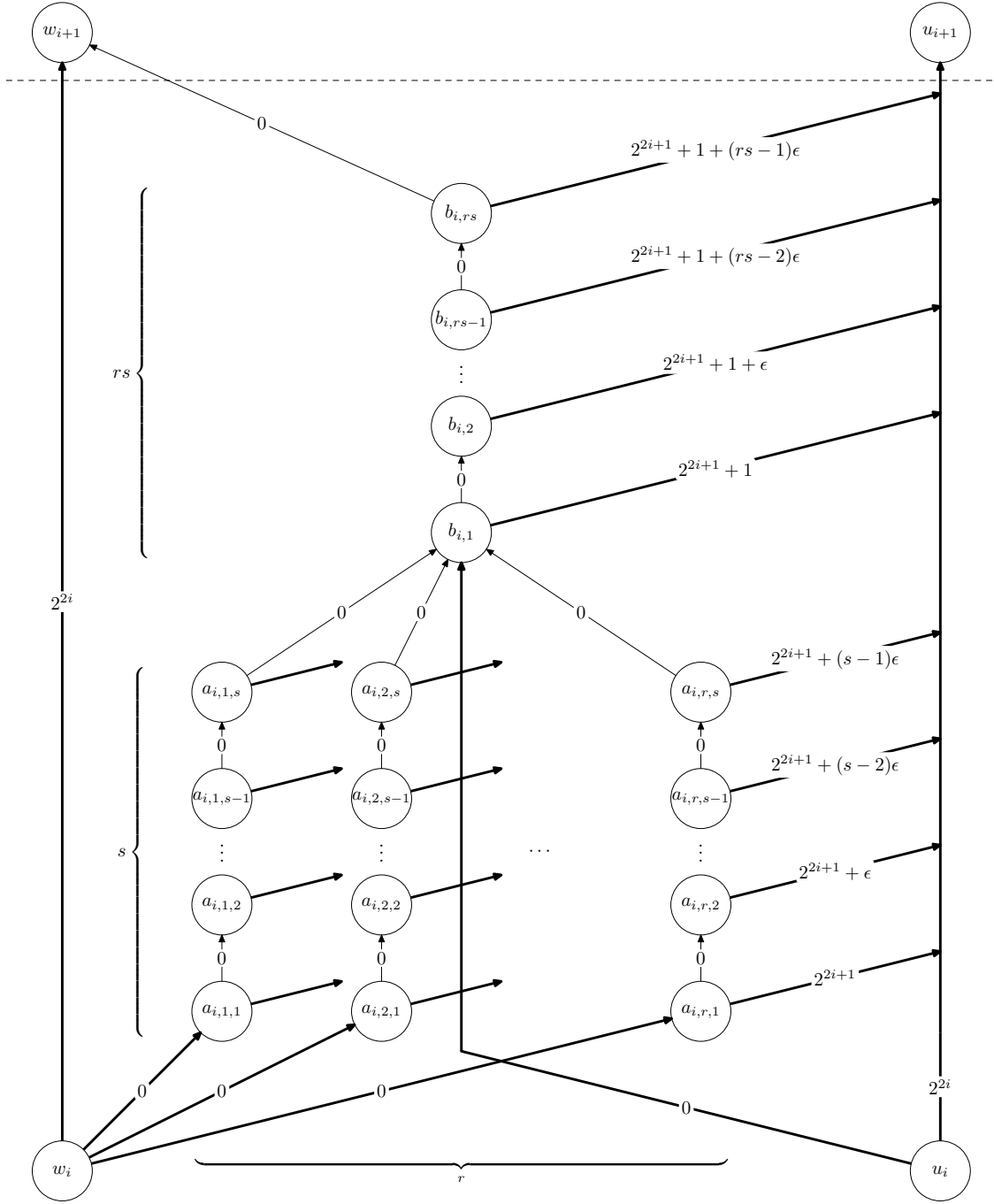}
\caption{The $i$-th level of the graph $G_{n,r,s,t}$. The bold edges
  are multi-edges with multiplicity $t$.}
\label{fig:one_level}
\end{figure}

\subsubsection*{Lower bound for $\RandomPerm$.} Recall that $\RandomPerm$ takes as input a set of edges $F$, a tree~$B$, and a permutation $\sigma$, and that it initially removes the first edge $e \in F \setminus B$ according to $\sigma$. We interpret~$F$ and~$B$ as a configuration of the randomized counter as follows. For every $i \in [n]$, let $\bbb^1_i$ be the set of edges along the $(b_i,w_{i+1})$-path. If $\bbb^1_i \not\subseteq F$ then the $i$-th bit is fixed to~0. If $\bbb^1_i \subseteq F$ and $\bbb^1_i \cap B = \emptyset$ then the $i$-th bit is 0, and otherwise it is 1.

Note that $B$ need not contain all of $\bbb^1_i$ for the $i$-th bit to be 1. When $i$ is fixed to 0 by removing $e \in \bbb^1_i$, the resulting optimal tree $B'$ contains the edges in $\bbb^1_i$ ahead of $e$ but not the edges behind $e$. In \cite{FriedmannHansenZwick/SODA11} and \cite{FriedmannHansenZwick/STOC11} we used stronger gadgets, that cannot be implemented for shortest paths, to ensure that $B'$ included $\bbb^1_i \setminus \{e\}$. We observe, however, that the current tree is less important than the set of remaining edges, and this allows us to use the path gadget.

Recall that $\RandCountP$ is a one-permutation variant of the randomized counter, and that it fixes the first available bit according to a given permutation $\hat\sigma: [n] \to [n]$. Recall also that, according to Lemma \ref{lemma:counters}, $\RandCountP$ and $\RandCount$ perform the same expected number of increments when $\hat\sigma$ is uniformly random.
The permutation $\sigma$ of the edges defines an induced permutation~$\hat\sigma$ of the bits, where $\hat{\sigma}$ is obtained from $\sigma$ from the first element of $\bbb^1_i$ for every $i \in [n]$. We show that $\RandomPerm(F,B,\sigma)$ performs at least as many improving swithces as $\RandCountP(N(F,B),\hat\sigma)$, where $N(F,B)$ is the set of unfixed 0 bits. To be precise, we assume that $\sigma$ is \emph{well-behaved} and that $B$ has the relevant structure.

A permutation is well-behaved if it ensures that, for all $i\in [n]$, an edge from~$\bbb^1_i$ is removed before $w_i$ is disconnected from $b_{i,1}$, which in turn happens before any multi-edge is exhausted. Concretely, at least one edge from $\bbb^1_i$ comes before all edges from one of the $r$ copies of the $(a_i,b_i)$-path. Also, at least one edge from each $(a_i,b_i)$-path comes before the last copy of every multi-edge.

The proof is by induction, and the critical case is when an improving switch increments a bit $i$, i.e., when $\bbb^1_i \subseteq F$, $\bbb^1_i \cap B = \emptyset$, and the chosen edge~$e$ is from $\bbb^1_i$. In fact these are the only improving switches we count in the analysis, and it is the only situation for which the count includes both recursive calls. For all other situations we specify a single recursive call that is considered for the induction step:
\begin{enumerate}
\item
When $e \in \bbb^1_i$, $\bbb^1_i \subseteq F$, and $\bbb^1_i \cap B \ne \emptyset$, the count is for the second call.
\item
When $e$ belongs to an $(a_i,b_i)$-path, $\bbb^1_i \subseteq F$, and removing $e$ disconnects $w_i$ from $b_{i,1}$, the count is for the second call.
\item
In all other cases the count is for first call.
\end{enumerate}

Note that the structure of the graph is preserved by the second recursive call since no edge is removed. Also, due to the assumption that the permutation $\sigma$ is well-behaved, only redundant edges are removed in the third case.

Let us note that the first case is new compared to our analysis in \cite{FriedmannHansenZwick/SODA11} and \cite{FriedmannHansenZwick/STOC11}. It is needed because not all edges from $\bbb^1_i$ are immediately included in a tree where the $i$-th bit is set to 1. We observe, however, that during the first recursive call the tree is only updated by including edges along the $(b_i,w_{i+1})$-path ahead of the chosen edge~$e$. The remaining edges are included when the tree is updated in a situation corresponding to the second case, which also causes the lower bits to reset. A similar issue occurs after a reset.

Finally, we show that a uniformly random permutation is well-behaved with high probability when $r=s=t=3 \lceil\log n\rceil$. Since a uniformly random well-behaved permutation gives a uniformly random induced permutation, this shows that $\RandomPerm$ simulates $\RandCountP$ with high probability. Thus, when the total number of edges is $M = \tilde\Theta(n)$, the expected number of improving switches performed is $2^{\tilde\Omega(\sqrt{n})}$, which proves a $2^{\tilde\Omega(\sqrt{M})}$ lower bound.

\subsubsection*{Lower bound for $\RandomBland$.} The proof for $\RandomBland$ is similar to the proof for $\RandomPerm$. Recall that $\Bland(F,B,\sigma)$ can remove edges from $F$ that are in $B$, meaning that the invariant $B \subseteq F$ is not maintained. The only guarantee is that the tree $B^\dagger$ returned by the algorithm is optimal for the subgraph defined by $F \cup B^\dagger$. The edges in $B \setminus (F \cup B^\dagger)$ are lost and can only be reintroduced higher up in the recursion.

To use the same approach as for $\RandomPerm$ we keep track of the edges in $B$ that remain in $B^\dagger$. For this purpose we introduce the notion of \emph{fixed edges}, which are edges that are optimal for $F\cup B$. Note that, since the graph is acyclic, if an edge $e$ is removed and $B'$ is optimal for the resulting graph, then $B'$ remains optimal at every vertex ahead of $e$ when $e$ is reintroduced. In particular, when the $i$-th bit is set to 1, the edges at higher levels become fixed and maintain their configuration while the lower bits are reset and used for additional counting. This shows that the desired edges remain in $B^\dagger$.

Another challenge is to show that edges that should be removed do not remain in $B^\dagger$. Suppose, for instance, that the $i$-th bit was set to 1, and that the lower bits are being reset. If the algorithm picks an edge $e \in \bbb^1_j$ with $\bbb^1_j \subseteq F \cap B$ and $j < i$, then the $j$-th bit should be fixed to 0, but since $e \in B$ this may not be the case. We show however that, when the permutation $\sigma$ is well-behaved, the lower bits are reset before any additional counting is done, such that the edge $e$ is truly removed.

The remainder of the proof is the same as for $\RandomPerm$, and we again set $r=s=t=3 \lceil\log n\rceil$ to get a $2^{\tilde\Omega(\sqrt{M})}$ lower bound, where $M = \tilde\Theta(n)$ is the number of edges.

\subsubsection*{Lower bound for $\RandomFacet$.} The proof of the lower bound for $\RandomFacet$ is more involved than the proofs for $\RandomPerm$ and $\RandomBland$. The main complication is that nothing prevents multi-edges from being exhausted. For $\RandomPerm$ and $\RandomBland$, as long as one instance of a multi-edge appears sufficiently late in the given permutation, then that multi-edge is not exhausted until after the desired counting behavior is observed. For $\RandomFacet$, on the other hand, the multi-edges are always available for removal, and the edges along $(b_i,w_{i+1})$-paths and $(a_i,b_i)$-paths are often part of the current tree and cannot be removed.

We use a technique by G{\"a}rtner~\cite{Gartner02} to bound the expected number of improving switches by bounding the probabilities that certain \emph{computation paths} are generated by the algorithm. The call $\RandomFacet(F,B)$ generates a random, binary computation tree where the nodes of the tree correspond to recursive calls. The tree is defined by the edges from the graph that are picked at the beginning of every recursive call. A computation path is a sequence of successive recursive calls that form a path from the root in a computation tree. It is described by the sequence of chosen edges, and by a sequence of directions that indicate whether the path follows the branch of the first or the second recursive call.

The edges from the graph chosen along a computation path essentially represent a permutation, which allows us to use the same type of arguments as in the proofs for $\RandomPerm$ and $\RandomBland$. We restrict our attention to \emph{canonical paths} that correspond to well-behaved permutations. We also restrict our attention to computation paths where at most $\sqrt{n}$ bits are set to 1 by second recursive calls along the path. The probability of generating such shorter paths dominate the probability of generating longer paths. We use a Chernoff bound argument to bound the probability that a multi-edge is exhausted along a computation path, and the fact that we restrict our attention to shorter paths allow us to obtain a better bound.

We show that when $r = \Theta(\log n)$, $s = \Theta(\sqrt{n}\log n)$, and $t = \Theta(\log n)$, the expected number of improving switches performed by the $\RandomFacet$ algorithm is at least $2^{\Omega(\sqrt{n})}$. Thus, the number of edges is $M = \tilde\Theta(n^{3/2})$, which gives a lower bound of $2^{\tilde\Omega(\sqrt[3]{M})}$. Note that we increase the parameter $s$ instead of the parameter $t$ because adding more copies of an edge also makes those copies more likely to be removed.

\section{The lower bound construction}\label{sec:construction}

In this section we formally define the family of weighted directed graphs
$G_{n,r,s,t}$, where $n,r,s,t \in \NN$, that we use to prove
lower bounds for $\RandomFacet$, $\RandomPerm$, and
$\RandomBland$. The graph $G_{n,r,s,t}$ can be divided into $n$
levels, where the $i$-th level corresponds to the $i$-th bit of an
$n$-bit binary counter. Figure \ref{fig:one_level} gives a schematic
description of one level of the graph. The graph contains one
additional \emph{target} vertex $\TT$ which we identify with
$u_{n+1}$ and $w_{n+1}$. We are interested in finding the shortest
paths from all vertices to $\TT$.

Formally, the graph $G_{n,r,s,t} = (V, E, c)$, where $c: E \to \RR$,
has vertex set
\[
V ~=~ \{\TT\} \cup \{u_i,w_i \mid i \in [n]\} \cup 
\{a_{i,j,k} \mid i \in [n], j\in [r], k \in [s]\}
\cup \{b_{i,j} \mid i \in [n], j\in [rs]\}\;,
\]
and the edges are specified in Table~\ref{figure: edges and
  priorities}. We use $\epsilon = 1/(rs)$ to define the costs. We
also assign a name to every edge. Note that the graph is acyclic, that
all costs are non-negative, and that there is an edge with cost
0 that leaves every vertex. Hence, any tree composed of edges
whose costs are all 0 is optimal.

\begin{table}[h]
\begin{center}
\center
\renewcommand{\arraystretch}{1.3}
\begin{tabular}[ht]{|c|rcl|c|c|c|}
\hline
  Quantification & \multicolumn{3}{c|}{Edge} & Cost & Name\\
  \hline
  \hline
  \multirow{2}{*}{$i \in [n], j\in [r], k\in[s-1], \ell \in [t]$} 
  & $a_{i,j,k}$ & $\to$ & $a_{i,j,k+1}$ & 0 & $a^1_{i,j,k}$\\
  & $a_{i,j,k}$ & $\to$ & $u_{i+1}$    & $2^{2i+1}+(k-1)\epsilon$ & $a^{0,\ell}_{i,j,k}$  \\
  \hline
  \multirow{2}{*}{$i \in [n], j\in [r], \ell \in [t]$} 
  & $a_{i,j,s}$ & $\to$ & $b_{i,1}$ & 0 & $a^1_{i,j,s}$\\
  & $a_{i,j,s}$ & $\to$ & $u_{i+1}$ & $2^{2i+1}+(s-1)\epsilon$ & $a^{0,\ell}_{i,j,s}$  \\
  \hline
  \multirow{2}{*}{$i \in [n], j\in [rs-1], \ell \in [t]$} 
  & $b_{i,j}$ & $\to$ & $b_{i,j+1}$ & 0 & $b^1_{i,j}$\\
  & $b_{i,j}$ & $\to$ & $u_{i+1}$    & $2^{2i+1}+1+(j-1)\epsilon$ & $b^{0,\ell}_{i,j}$  \\
  \hline
  \multirow{2}{*}{$i \in [n], \ell \in [t]$} 
  & $b_{i,rs}$ & $\to$ & $w_{i+1}$ & 0 & $b^1_{i,rs}$\\
  & $b_{i,rs}$ & $\to$ & $u_{i+1}$ & $2^{2i+1}+1+(rs-1)\epsilon$ & $b^{0,\ell}_{i,rs}$  \\
  \hline
  \multirow{2}{*}{$i \in [n], \ell \in [t]$} 
  & $u_{i}$ & $\to$ & $b_{i,1}$ & 0 & $u^{1,\ell}_{i}$\\
  & $u_{i}$ & $\to$ & $u_{i+1}$ & $2^{2i}$ & $u^{0,\ell}_{i}$  \\
  \hline
  \multirow{2}{*}{$i \in [n], j \in [r], \ell \in [t]$} 
  & $w_{i}$ & $\to$ & $a_{i,j,1}$ & 0 & $w^{j,\ell}_{i}$\\
  & $w_{i}$ & $\to$ & $w_{i+1}$ & $2^{2i}$ & $w^{0,\ell}_{i}$  \\
  \hline
\end{tabular}
\end{center}
\caption{Edges, costs, and names.}
\label{figure: edges and priorities}
\end{table}

It will be useful to partition the set of edges into smaller sets of
edges with similar roles. We define:
\[
\renewcommand{\arraystretch}{1.3}
\begin{array}{rrcl}
\forall i \in [n], \forall j \in [r]: &\quad \aaa^1_{i,j} &=&
\{a^1_{i,j,k} \mid k \in [s]\} \\
\forall i \in [n], \forall j \in [r], \forall k \in [s]: &\quad
\aaa^0_{i,j,k} &=& \{a^{0,\ell}_{i,j,k} \mid \ell \in [t]\}\\
\forall i \in [n]: &\quad \bbb^1_{i} &=& \{b^1_{i,j} \mid j \in [rs]\}
\\
\forall i \in [n], \forall j \in [rs]: &\quad \bbb^0_{i,j} &=& \{b^{0,\ell}_{i,j} \mid \ell \in [t]\}\\
\forall i \in [n]: &\quad \uuu^1_{i} &=& \{u^{1,\ell}_{i} \mid \ell \in [t]\}\\
\forall i \in [n]: &\quad \uuu^0_{i} &=& \{u^{0,\ell}_{i} \mid \ell \in [t]\}\\
\forall i \in [n], \forall j \in [r]: &\quad \www^j_{i} &=& \{w^{j,\ell}_{i} \mid \ell \in [t]\}\\
\forall i \in [n]: &\quad \www^0_{i} &=& \{w^{0,\ell}_{i} \mid \ell \in [t]\}
\end{array}
\]
Note that six of the definitions correspond to multi-edges with
multiplicity $t$. It will also be helpful to define a set of
multi-edges $\mathcal{M}$, i.e., $\mathcal{M}$ is a set of sets of edges:
\begin{align*}
\mathcal{M} ~=~ &\{\uuu^1_{i}, \uuu^0_{i}, \www^0_{i} \mid i \in [n]\} \,\cup\,
\{\bbb^0_{i,j} \mid i\in [n], j\in [rs]\} \,\cup\\
&\{\aaa^0_{i,j,k} \mid i\in [n], j \in [r], k \in [s]\} \,\cup\,
\{\www^j_{i}\mid i\in [n], j\in [r]\}\;.
\end{align*}

We are going to interpret certain trees $B \subseteq E$ as
configurations of the binary counter.
Note that the names of the edges are given superscripts, typically 0
or 1. We refer to edges with superscript 0 as zero-edges, and to the
remaining edges as one-edges. Intuitively, the superscripts 0 and 1
describe whether the edges are used in trees for which the
corresponding bit is 0 or 1, respectively. Note that by this convention zero-cost edges are referred to as one-edges.
Let $last(\bbb^1_i,F)$ and $last(\aaa^1_{i,j},F)$ be
the largest index of an edge from $\bbb^1_i$ and $\aaa^1_{i,j}$,
respectively, that is missing from $F$:
\[
\renewcommand{\arraystretch}{1.3}
\begin{array}{rrcl}
\forall i\in [n]: &\quad last(\bbb^1_i,F) &=& \max (\{0\} \cup \{j \in [rs] \mid b^1_{i,j}
\not\in F \}) \\
\forall i\in [n], \forall j \in [r]: & \quad
last(\aaa^1_{i,j},F) &=& \max (\{0\} \cup \{k \in [s] \mid a^1_{i,j,k}
\not\in F \})
\end{array}
\]

\begin{definition}[$bit_i(F,B)$]
For every $i\in [n]$, every tree $B$, and every $F \subseteq E$ we
say that $bit_i(F,B) = 1$ if:
\begin{itemize}
\item
For all $j \in [rs]$: $b^1_{i,j} \in B$ iff $j > last(\bbb^1_i,F)$.
\item
If $\bbb^1_i \subseteq F$ then for all $j\in [r]$ and $k\in[s]$:
$a^1_{i,j,k}\in B$ iff $k > last(\aaa^1_{i,j},F)$.
\item
If $\bbb^1_i \not\subseteq F$ then for all $j\in [r]$:
$\aaa^1_{i,j}\cap B = \emptyset$.
\end{itemize}
Similarly, we say that $bit_i(F,B) = 0$ if:
\begin{itemize}
\item
$\bbb^1_{i}\cap B = \emptyset$.
\item
For all $j\in [r]$: $\aaa^1_{i,j}\cap B = \emptyset$.
\end{itemize}
\end{definition}

When proving our lower bound, $\RandomFacet$, $\RandomPerm$, and
$\RandomBland$ will be given any initial tree $B_0$ composed
entirely of zero-edges:
\[
B_0 ~\subseteq~ \{a^{0,\ell}_{i,j,k} \mid i\in[n], j\in[r], k \in [s],\ell\in[t]\}
\cup \{b^{0,\ell}_{i,j} \mid i\in[n], j\in[rs],\ell\in[t]\}
\cup \{u^{0,\ell}_i,w^{0,,\ell}_i \mid i \in [n],\ell\in[t]\}
\]
On the other hand, any tree composed entirely of one-edges is
optimal. Hence, for the initial tree all bits are interpreted as
being 0, and for the final, optimal tree all bits are interpreted as
being 1.

\begin{definition}[Functional sets]\label{def:functional}
A subset of edges $F\subseteq E$ is said to be \emph{functional} if
and only if it contains at least one copy of every multi-edge, i.e.,
$F \cap \eee \ne \emptyset$ for all $\eee \in \mathcal{M}$.
\end{definition}

Let $F \subseteq E$ be a subset of edges. It will be convenient to use
the following short-hand notation:
\[
\aaa_i^1 \sqsubseteq F \quad\iff\quad \exists j \in [r]: ~
\aaa_{i,j}^1 \subseteq F
\]
Note that if $r=1$ then $\aaa_i^1 \sqsubseteq F$ has the same meaning
as $\aaa_{i,1}^1 \subseteq F$. When no set $\aaa_{i,j}^1$, for $j\in
[r]$, is completely contained in $F$ we write $\aaa_i^1
\not\sqsubseteq F$.

\begin{definition}[Reset level]
For every $F \subseteq E$ we define the \emph{reset
  level} to be:
\[
reset(F) = \max\left(\{0\} \cup \{i \in [n] \mid \bbb^1_i\subseteq F
\land \aaa^1_i\not\sqsubseteq F\}\right)
\]
\end{definition}

As we shall see, all bits with index lower than~$reset(F)$, for some
functional set $F \subseteq E$, are reset in any optimal tree for
the subgraph $G_F$ defined by $F$.

Recall that an edge $e=(u,v)$ is optimal if and only if it satisfies
$c(u,v) + \val(v) = \val(u)$, where $\val(u)$ is the optimal value of
$u$, and that
a tree is optimal if and only if all its edges are optimal.
For every functional set $F\subseteq E$ we next define a set of
edges $\BB_F \subseteq F$ that we later show is optimal for the subgraph
$G_F$, such that every tree consisting only of edges
from $\BB_F$ is optimal.
The set of edges $\BB_F$ is depicted in figures \ref{case1},
\ref{case2}, \ref{case3}, and \ref{case4}. For simplicity, the
figures show the case when $r=s=t=1$. Edges not in $\BB_F$ are shown as
dotted arrows and edges in $\BB_F$ are shown as unbroken arrows. Edges
that are not in $F$ have been removed; for the general case this should be interpreted as at least one edge missing from the corresponding path.

\begin{figure}[t]
\center
\parbox{3in}{
\center
\includegraphics[scale=0.9]{shortest_paths.3}
\caption{Case $(i)$: $i > reset(F)$ and $\bbb^1_i \subseteq F$}
\label{case1}
}
\hspace*{10pt}
\parbox{3in}{
\center
\includegraphics[scale=0.9]{shortest_paths.4}
\caption{Case $(ii)$: $i > reset(F)$ and $\bbb^1_i \not\subseteq F$}
\label{case2}
}

\vspace*{10pt}

\parbox{3in}{
\center
\includegraphics[scale=0.9]{shortest_paths.5}
\caption{Case $(iii)$: $i = reset(F)$}
\label{case3}
}
\hspace*{10pt}
\parbox{3in}{
\center
\includegraphics[scale=0.9]{shortest_paths.6}
\caption{Case $(iv)$: $i < reset(F)$}
\label{case4}
}

\end{figure}

\begin{definition}[$\BB_F$]\label{D-Sigma-F}
Let $F\subseteq E$ be functional. Define $\BB_F \subseteq F$ to
contain exactly the following edges:
\begin{itemize}
\item[\it(i)]
For all $i > reset(F)$ where $\bbb^1_i \subseteq F$:
\begin{itemize}
\item
For all $j\in [rs]$: ~$b^1_{i,j} \in \BB_F$.
\item
For all $j\in [r]$ and $k\in[s]$:~ $a^1_{i,j,k}\in \BB_F$ if $k >
last(\aaa^1_{i,j},F)$, and $\aaa^0_{i,j,k}\cap F\subseteq \BB_F$ otherwise.
\item
$\uuu^1_i\cap F \subseteq \BB_F$.
\item
$\www^j_i\cap F \subseteq \BB_F$ for all $j$ with $\aaa^1_{i,j} \subseteq F$.
\end{itemize}
\item[\it(ii)]
For all $i > reset(F)$ where $\bbb^1_i \not\subseteq F$:
\begin{itemize}
\item
For all $j \in [rs]$:~ $b^1_{i,j} \in \BB_F$ if $j >
last(\bbb^1_i,F)$, and $\bbb^0_{i,j} \cap F \subseteq \BB_F$ otherwise.
\item
For all $j\in [r]$ and $k\in[s]$:~ $\aaa^0_{i,j,k} \cap F \subseteq \BB_F$.
\item
$\uuu^0_i \cap F \subseteq \BB_F$.
\item
$\www^0_i \cap F \subseteq \BB_F$.
\end{itemize}
\item[\it(iii)]
For $i = reset(F)$:
\begin{itemize}
\item
For all $j\in [rs]$: ~$b^1_{i,j} \in \BB_F$.
\item
For all $j\in [r]$ and $k\in[s]$:~ $a^1_{i,j,k}\in \BB_F$ if $k >
last(\aaa^1_{i,j},F)$, and $\aaa^0_{i,j,k} \cap F \subseteq \BB_F$ otherwise.
\item
$\uuu^1_i\cap F \subseteq \BB_F$.
\item
$\www^0_i \cap F \subseteq \BB_F$.
\end{itemize}
\item[\it(iv)]
For all $i < reset(F)$:
\begin{itemize}
\item
For all $j \in [rs]$:~ $\bbb^0_{i,j}\cap F \subseteq \BB_F$.
\item
For all $j\in [r]$ and $k\in[s]$:~ $\aaa^0_{i,j,k}\cap F \subseteq \BB_F$.
\item
$\uuu^0_i\cap F \subseteq \BB_F$.
\item
$\www^j_i\cap F \subseteq \BB_F$ for all $j \in [r]$.
\end{itemize}
\end{itemize}
\end{definition}

Note that when $i > reset(F)$ and $\bbb^1_i \subseteq F$ then we must
also have $\aaa^1_i \sqsubseteq F$. Hence, in case $(i)$ there exists
a $j\in [r]$ such that $\aaa^1_{i,j} \subseteq F$, which means that
at least one edge in $\BB_F$ is available from $w_i$. It is then not
difficult to check that for every vertex $v$ there is at least one edge in
$\BB_F$ that leaves $v$.
Also note that for every tree $B \subseteq \BB_F$ we have
$bit_i(F,B) = 1$ for all $i \ge reset(F)$, and 
$bit_i(F,B) = 0$ for all $i < reset(F)$.

\begin{lemma}\label{L-Sigma-optimal}
For every functional set $F \subseteq E$, the set of optimal edges
for the subgraph $G_F$ is $\BB_F$.
\end{lemma}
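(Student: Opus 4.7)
The plan is to compute the shortest-path distance $y_F(v)$ from each vertex~$v$ to $\TT$ in the subgraph $G_F$ by downward induction on the level~$i$, from $i=n+1$ down to $i=1$, and to verify at each step that the edges $(u,v)\in F$ going out of level-$i$ vertices which lie in $\BB_F$ are exactly those satisfying $c(u,v)+y_F(v)=y_F(u)$. Since an edge is optimal in $G_F$ iff it satisfies this tight Bellman equation, this yields the lemma. The base of the induction is $y_F(u_{n+1})=y_F(w_{n+1})=0$, and the inductive step is carried out separately for the four cases of Definition~\ref{D-Sigma-F}.

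In case $(i)$ the zero-cost chain $u_i\to b_{i,1}\to\cdots\to b_{i,rs}\to w_{i+1}$ prescribed by $\BB_F$ lies entirely in $F$, and some $\aaa^1_{i,j}$ is completely contained in $F$ (forced by $i>reset(F)$ together with $\bbb^1_i\subseteq F$, by the definition of $reset(F)$), so $y_F(u_i)=y_F(w_i)=y_F(w_{i+1})$. In case $(ii)$ the $b$-chain is broken, so the prescribed direct edges yield $y_F(u_i)=2^{2i}+y_F(u_{i+1})$ and $y_F(w_i)=2^{2i}+y_F(w_{i+1})$. In case $(iii)$, $u_i$ still reaches $w_{i+1}$ through the intact $b$-chain so $y_F(u_i)=y_F(w_{i+1})$, but $w_i$ is forced onto the direct edge, giving $y_F(w_i)=2^{2i}+y_F(w_{i+1})$. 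In case $(iv)$ both $u_i$ and $w_i$ are routed down to $u_{i+1}$, yielding $y_F(u_i)=2^{2i}+y_F(u_{i+1})$ and $y_F(w_i)=2^{2i+1}+y_F(u_{i+1})$. The values at the intermediate $a_{i,j,k}$ and $b_{i,j}$ vertices are read off the same rule; the role of the $\epsilon$-increments is precisely to make the earliest available escape along an incomplete $a$- or $b$-chain the unique local minimizer, since chaining one more zero-cost step before escaping costs exactly $\epsilon$ more.

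The main obstacle is the verification in case $(iv)$, where even though $\BB_F$ sends $w_i$ and every intermediate vertex down to $u_{i+1}$, the set $F$ may still contain a completely intact zero-cost bypass $w_i\to a_{i,j,1}\to\cdots\to b_{i,rs}\to w_{i+1}$ (whenever $\aaa^1_{i,j}\subseteq F$ for some $j\in[r]$ and $\bbb^1_i\subseteq F$), which would yield $y_F(w_i)\le y_F(w_{i+1})$. To rule this out I would propagate along the induction the monotonicity invariant that $y_F(w_j)-y_F(u_j)=0$ whenever $j>reset(F)$ and $y_F(w_j)-y_F(u_j)=2^{2j}$ whenever $j\le reset(F)$; the former is preserved by cases $(i)$ and $(ii)$, while the latter is produced by case $(iii)$ at $j=reset(F)$ and then preserved by case $(iv)$ at $j<reset(F)$, all directly from the distance formulas above. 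The invariant then gives $y_F(w_{i+1})-y_F(u_{i+1})=2^{2(i+1)}=2\cdot 2^{2i+1}$, so the bypass costs $2\cdot 2^{2i+1}+y_F(u_{i+1})$, strictly more than the prescribed $2^{2i+1}+y_F(u_{i+1})$. The same invariant rules out $b_{i,rs}\to w_{i+1}$ in favor of the prescribed $\bbb^0$-escape via the slack $1+(rs-1)\epsilon<2^{2i+1}$, and symmetric but simpler checks at $u_i$, $w_i$, and the intermediate vertices in cases $(i)$--$(iii)$ conclude the verification, at which point the lemma follows since $\BB_F$ is precisely the set of edges of $F$ where the Bellman inequality is tight.
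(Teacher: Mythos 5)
Your proposal is correct and follows essentially the same line as the paper's own proof: backward induction on the level, computing distances and checking tightness of the Bellman equation, with the invariant $y_F(w_j)-y_F(u_j)=0$ for $j>reset(F)$ and $y_F(w_j)-y_F(u_j)=2^{2j}$ for $j\le reset(F)$ being exactly the induction hypothesis used in the paper, and your explicit treatment of the zero-cost bypass in case $(iv)$ matching the role of the paper's Claims~4.5--4.8. The only cosmetic difference is that the paper factors the vertex-by-vertex comparisons into four separate claims before the case analysis, whereas you fold them into the case analysis directly.
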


\begin{proof}
In the following we let $\val(u)$ be
the optimal value of $u$ in the graph $G_F$.
Since $G_{n,r,s,t}$ is acyclic then so is $G_F$. The idea of the proof
is to find the distances to the terminal vertex $\TT$ by
backward induction from $\TT$.
In the process of doing so we show that
the edges allowed by $\BB_F$ are exactly the optimal choices. Our
induction hypothesis is that $\val(w_i) = \val(u_i)$ for all
$i > reset(F)$, and that $\val(w_i) = \val(u_i) + 2^{2i}$ for
all $i \le reset(F)$.

We start by making the following two useful observations. We let
$\BB_F^*$ be the actual set of optimal edges, i.e., we hope to show
that $\BB_F = \BB_F^*$.
\begin{claim}\label{claim:1}
Let $i\in[n]$ be given.
If $\val(w_{i+1}) > (2^{2i+1} + 1 + (rs-1)\epsilon) + \val(u_{i+1})$,
then $\bbb^0_{i,j}\cap F \subseteq \BB_F^*$ for
all $j \in [rs]$. Furthermore,
$\val(b_{i,j}) = (2^{2i+1} + 1 + (j-1)\epsilon) +\val(u_{i+1})$ for all 
$j \in [rs]$. In particular, $\val(b_{i,1}) = (2^{2i+1} + 1)
+\val(u_{i+1})$.
\end{claim}
\begin{claim}\label{claim:2}
Let $i\in[n]$ be given.
If $\val(w_{i+1}) < (2^{2i+1} + 1) +\val(u_{i+1})$, then 
$b^1_{i,j} \in \BB_F^*$ for
$j > last(\bbb^1_i,F)$, and $\bbb^0_{i,j} \cap F \subseteq \BB_F^*$ for $j \le
last(\bbb^1_i,F)$. Furthermore,
$\val(b_{i,j}) = \val(w_{i+1})$ for all $j > last(\bbb^1_i,F)$, and 
$\val(b_{i,j}) = (2^{2i+1} + 1 + (j-1)\epsilon) +\val(u_{i+1})$ for all 
$j \le last(\bbb^1_i,F)$. In particular, if $\bbb^1_i \subseteq F$
then $\val(b_{i,1}) = \val(w_{i+1})$, and otherwise $\val(b_{i,1}) =
(2^{2i+1} + 1) +\val(u_{i+1})$.
\end{claim}
To prove the claims observe that the cost of going to $u_{i+1}$ goes
up by $\epsilon$ for every step made along the chain of $b_{i,j}$
vertices. Hence, if it is optimal to leave the chain at some vertex
$b_{i,j}$, then the same is true for all vertices $b_{i,j'}$ with $j'
< j$. In particular, if $\val(w_{i+1}) > (2^{2i} + 1 + (rs-1)\epsilon)
+ \val(u_{i+1})$ then $b_{i,rs}$, and all other $b_{i,j}$ vertices,
should go to $u_{i+1}$. This proves Claim \ref{claim:1}. The same
argument proves Claim \ref{claim:2} for the case where $j \le
last(\bbb^1_i,F)$. When $\val(w_{i+1}) < (2^{2i} + 1) +\val(u_{i+1})$
every $b_{i,j}$ vertex that can reach $w_{i+1}$ would do so with cost
0. This is cheaper than going to $u_{i+1}$, which proves Claim
\ref{claim:2} for the case where $j > last(\bbb^1_i,F)$.
The following two claims are proved in the same way for the
$a_{i,j,k}$ vertices.

\begin{claim}\label{claim:3}
Let $i\in[n]$ and $j \in [r]$ be given.
If $\val(b_{i,1}) > (2^{2i+1} + (s-1)\epsilon) + \val(u_{i+1})$,
then $\aaa^0_{i,j,k} \cap F \subseteq \BB_F^*$ for 
all $k \in [s]$.
Furthermore, $\val(a_{i,j,k}) = (2^{2i+1} + (k-1)\epsilon)
+\val(u_{i+1})$ for all $k \in [s]$. In particular, $\val(a_{i,j,1}) =
2^{2i+1}+\val(u_{i+1})$.
\end{claim}
\begin{claim}\label{claim:4}
Let $i\in[n]$ and $j \in [r]$ be given.
If $\val(b_{i,1}) < 2^{2i+1} +\val(u_{i+1})$, then 
$a^1_{i,j,k} \in \BB_F^*$ for $k > last(\aaa^1_{i,j},F)$, and 
$\aaa^0_{i,j,k} \cap F \subseteq \BB_F^*$ for $k \le last(\aaa^1_{i,j},F)$.
Furthermore,
$\val(a_{i,j,k}) =
\val(b_{i,1})$ for all $k > last(\aaa^1_{i,j},F)$, and 
$\val(a_{i,j,k}) = (2^{2i+1} + (k-1)\epsilon) +\val(u_{i+1})$ for all 
$k \le last(\aaa^1_{i,j},F)$. In particular, if $\aaa^1_{i,j} \subseteq F$
then $\val(a_{i,j,1}) = \val(b_{i,1})$, and otherwise $\val(a_{i,j,1})
= 2^{2i+1} +\val(u_{i+1})$.
\end{claim}

We are now ready to prove the lemma by backward induction in $i$. For
$i = n+1$ we have $\val(\TT) = \val(w_{n+1}) = \val(u_{i+1})$. We
consider four cases, corresponding to the four cases in Definition
\ref{D-Sigma-F}. It is not difficult to verify the claims below by
studying figures \ref{case1},
\ref{case2}, \ref{case3}, and \ref{case4}.

\medskip
\noindent
\textbf{Case \textit{(i)}:}
Assume that $i > reset(F)$ and $\bbb^1_i \subseteq F$. Using claims 
\ref{claim:2} and \ref{claim:4}, and the fact that $\aaa^1_i
\sqsubseteq F$, we see that:
\[
\renewcommand{\arraystretch}{1.3}
\begin{array}{rcll}
\val(b_{i,1}) &=& \val(w_{i+1}) & \\
\val(u_i) &=& \val(b_{i,1}) ~=~ \val(w_{i+1}) &\\
\val(a_{i,j,1}) &=& \val(b_{i,1}) ~=~ \val(w_{i+1}) &\quad \text{for
  all $j \in [r]$ with $\aaa^1_{i,j} \subseteq F$}\\
\val(a_{i,j,1}) &=& 2^{2i+1} + \val(u_{i+1}) &\quad \text{for
  all $j \in [r]$ with $\aaa^1_{i,j} \not\subseteq F$}\\
\val(w_i) &=& \val(a_{i,j,1}) ~=~ \val(b_{i,1}) ~=~ \val(w_{i+1})&
\quad\text{where $\aaa^1_{i,j} \subseteq F$}
\end{array}
\]
and that the optimal edges are those specified by $\BB_F$. Note that
$\val(w_i) = \val(u_i)$, which proves the induction hypothesis.

\medskip
\noindent
\textbf{Case \textit{(ii)}:}
Assume that $i > reset(F)$ and $\bbb^1_i \not\subseteq F$. Using claims 
\ref{claim:2} and \ref{claim:3} we see that:
\[
\renewcommand{\arraystretch}{1.3}
\begin{array}{rcll}
\val(b_{i,1}) &=& (2^{2i+1}+1) + \val(u_{i+1}) & \\
\val(u_i) &=& 2^{2i}+ \val(u_{i+1}) &\\
\val(a_{i,j,1}) &=& 2^{2i+1} + \val(u_{i+1}) &\quad \text{for
  all $j \in [r]$}\\
\val(w_i) &=& 2^{2i}+\val(w_{i+1}) 
\end{array}
\]
and that the optimal edges are those specified by $\BB_F$. Note that
$\val(w_i) = \val(u_i)$, which proves the induction hypothesis.

\medskip
\noindent
\textbf{Case \textit{(iii)}:}
Assume that $i = reset(F)$ such that $\bbb^1_i \subseteq F$ and
$\aaa^1_i \not\sqsubseteq F$. Using claims
\ref{claim:2} and \ref{claim:4} we see that:
\[
\renewcommand{\arraystretch}{1.3}
\begin{array}{rcll}
\val(b_{i,1}) &=& \val(w_{i+1}) & \\
\val(u_i) &=& \val(b_{i,1}) ~=~ \val(w_{i+1}) &\\
\val(a_{i,j,1}) &=& 2^{2i+1} + \val(u_{i+1}) &\quad \text{for
  all $j \in [r]$}\\
\val(w_i) &=& 2^{2i}+\val(w_{i+1})
\end{array}
\]
and that the optimal edges are those specified by $\BB_F$. Note that
$\val(w_i) = 2^{2i}+\val(u_i)$ as desired.

\medskip
\noindent
\textbf{Case \textit{(iv)}:}
Assume that $i < reset(F)$. Note that by induction we have 
$\val(w_{i+1}) = 2^{2(i+1)}+\val(u_{i+1})$.
Using claims 
\ref{claim:1} and \ref{claim:3} we see that:
\[
\renewcommand{\arraystretch}{1.3}
\begin{array}{rcll}
\val(b_{i,1}) &=& (2^{2i+1}+1) + \val(u_{i+1}) & \\
\val(u_i) &=& 2^{2i}+ \val(u_{i+1}) &\\
\val(a_{i,j,1}) &=& 2^{2i+1} + \val(u_{i+1}) &\quad \text{for
  all $j \in [r]$}\\
\val(w_i) &=& \val(a_{i,j,1}) ~=~ 2^{2i+1} + \val(u_{i+1})&\quad \text{for
  any $j \in [r]$}
\end{array}
\]
and that the optimal edges are those specified by $\BB_F$. Note that
$\val(w_i) = 2^{2i}+\val(u_i)$, which proves the induction hypothesis.

This completes the proof.
\end{proof}

\begin{lemma}\label{L-make_switch}
Let $F\subseteq E$ be functional. Then
\begin{itemize}
\item[$(i)$]
Let $i \in [n]$ and $j \in [rs]$ be given. If $b^1_{i,j} \not\in F$,
$b^1_{i,j'} \in F$ for all $j' > j$, $i \ge reset(F)$, and $B
\subseteq \BB_F$, then $b^1_{i,j}$ is an improving switch with respect
to $B$.
\item[$(ii)$]
Let $i \in [n]$, $j \in [r]$, and $k \in [s]$ be given.
If $a^1_{i,j,k} \not\in F$, $a^1_{i,j,k'} \in F$ for all $k' > k$, $i
\ge reset(F)$, $\bbb^1_i \subseteq F$, and $B \subseteq \BB_F$, then
$a^1_{i,j,k}$ is an improving switch with respect to $B$.
\end{itemize}
\end{lemma}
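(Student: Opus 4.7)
The plan is to reduce the improving-switch test to a direct comparison of optimal values in the subgraph $G_F$, using the value formulas already established in the proof of Lemma~\ref{L-Sigma-optimal}. Since $B \subseteq \BB_F$ and by that lemma $\BB_F$ is precisely the set of optimal edges in $G_F$, every edge of $B$ lies on a shortest path, so $y_B(v) = \val(v)$ for every vertex~$v$, where $\val$ denotes the optimal value in $G_F$. Hence for any edge $e=(u,v)$ of the original graph (whether or not $e \in F$), $e$ is an improving switch with respect to~$B$ if and only if $c(u,v) + \val(v) < \val(u)$, which is a purely numerical check against the values computed in the backward induction of the proof of Lemma~\ref{L-Sigma-optimal}. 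Both parts then become short case analyses.

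For part~$(i)$, I would first locate the relevant case of Definition~\ref{D-Sigma-F}. The hypothesis $b^1_{i,j} \notin F$ gives $\bbb^1_i \not\subseteq F$, and since case~$(iii)$ requires $\bbb^1_i \subseteq F$ at $i = reset(F)$, combining with $i \ge reset(F)$ forces $i > reset(F)$, placing us in case~$(ii)$. The additional hypothesis $b^1_{i,j'} \in F$ for all $j' > j$ yields $last(\bbb^1_i,F) = j$, so Claim~2 in the proof of Lemma~\ref{L-Sigma-optimal} gives $\val(b_{i,j}) = 2^{2i+1} + 1 + (j-1)\epsilon + \val(u_{i+1})$. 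The backward induction of that proof establishes $\val(w_{i+1}) = \val(u_{i+1})$ because $i+1 > reset(F)$. The head of $b^1_{i,j}$ is $b_{i,j+1}$ when $j < rs$, with $\val(b_{i,j+1}) = \val(w_{i+1}) = \val(u_{i+1})$ by Claim~2 at index $j+1 > last(\bbb^1_i,F)$, and is $w_{i+1}$ when $j = rs$, with the same value. Adding the cost $0$ of $b^1_{i,j}$ yields $\val(u_{i+1}) < 2^{2i+1} + 1 + (j-1)\epsilon + \val(u_{i+1}) = \val(b_{i,j})$, so the switch is improving.

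Part~$(ii)$ is handled symmetrically. Since $\bbb^1_i \subseteq F$ and $i \ge reset(F)$, we are in case~$(i)$ (when $i > reset(F)$) or case~$(iii)$ (when $i = reset(F)$). In either case $last(\bbb^1_i,F) = 0$, so Claim~2 yields $\val(b_{i,1}) = \val(w_{i+1})$, and the backward induction again gives $\val(w_{i+1}) = \val(u_{i+1})$ because $i+1 > reset(F)$. The hypothesis on $a^1_{i,j,k}$ gives $last(\aaa^1_{i,j},F) = k$, so Claim~4 produces $\val(a_{i,j,k}) = 2^{2i+1} + (k-1)\epsilon + \val(u_{i+1})$, while the head of $a^1_{i,j,k}$ is $a_{i,j,k+1}$ (when $k<s$) or $b_{i,1}$ (when $k=s$), each of value $\val(u_{i+1})$ by Claim~4 or by the computation of $\val(b_{i,1})$ just mentioned. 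The improving-switch inequality follows exactly as in part~$(i)$.

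The argument is essentially bookkeeping; the only step requiring any care is the case distinction in Definition~\ref{D-Sigma-F}. In particular, one must notice that $b^1_{i,j} \notin F$ combined with $i \ge reset(F)$ excludes $i = reset(F)$ in part~$(i)$, and that $\bbb^1_i \subseteq F$ excludes cases $(ii)$ and $(iv)$ in part~$(ii)$. No machinery beyond Lemma~\ref{L-Sigma-optimal} and the identity $y_B = \val$ on $B \subseteq \BB_F$ is needed.
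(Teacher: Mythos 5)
Your proof is correct, but it takes a genuinely different route from the paper's. You reduce the improving-switch test to a direct numerical comparison $c(u,v)+\val(v)<\val(u)$, using the explicit value formulas (Claims~2 and~4) established inside the proof of Lemma~\ref{L-Sigma-optimal} together with the identity $y_B=\val$ that holds because $B\subseteq\BB_F$. The paper argues indirectly: it sets $F'=F\cup\{e\}$ where $e$ is the edge in question, observes that $last(\bbb^1_i,F')<j$ (resp. $last(\aaa^1_{i,j},F')<k$) and $i\ge reset(F')$, so by Definition~\ref{D-Sigma-F} every optimal tree for $G_{F'}$ must use $e$ at its tail vertex; hence $B\not\subseteq\BB_{F'}$, so $B$ is not optimal for $G_{F'}$ and some edge of $F'$ is an improving switch with respect to $B$, and since $B\subseteq\BB_F$ implies no edge of $F$ is improving, the switch must be $e$ itself. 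The paper's argument is shorter and avoids re-examining the value formulas, leaning entirely on the characterization of $\BB_{F'}$; your argument is more explicit and self-contained at the cost of a few more lines of bookkeeping. Both handle the case split correctly (you rule out $i=reset(F)$ in part~$(i)$ using $\bbb^1_i\not\subseteq F$, and note $last(\bbb^1_i,F)=0$ in part~$(ii)$), and both correctly verify the Claim~2/4 preconditions via $\val(w_{i+1})=\val(u_{i+1})$ for $i+1>reset(F)$.
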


\begin{proof}
We first prove $(i)$. Let $F' = F \cup \{b^1_{i,j}\}$. It is not
difficult to see that $i \ge reset(F')$. Observe also that
$last(\bbb^1_i,F') < j$.
By Definition~\ref{D-Sigma-F}
and Lemma~\ref{L-Sigma-optimal} every tree of $\BB_{F'}$ must
contain $b^1_{i,j}$. Thus, $B \not\subseteq \BB_{F'}$, and there
must be an edge in $F'$
which is an improving switch with respect to $B$. Since no edge
in $F$ is an improving switch, $b^1_{i,j}$ must be an improving
switch with respect to $B$.

The proof of $(ii)$ is similar. Let $F' = F \cup
\{a^1_{i,j,k}\}$. Since $\bbb^1_i \subseteq F$ we again have $i \ge
reset(F')$, and $B \not\subseteq \BB_{F'}$, and it again follows
that $a^1_{i,j,k}$ must be an improving switch with respect to $B$.
\end{proof}

\section{Lower bound for \RandomFacet}\label{S-lower-bound}

Before considering the behavior of the $\RandomFacet$ algorithm when
applied to the lower bound construction $G_{n,r,s,t}$, we first make
some general observations about the algorithm. For this purpose let
$G=(V,E,c)$ be any directed weighted graph, and let $B_0$ be some
initial tree.

The operation of the \RandomFacet\  algorithm may be described by a
binary \emph{computation tree} $T$ as follows.
Each node $u$ of the computation tree corresponds to a
recursive call $\RandomFacet(F(u),B(u))$, where $F(u)\subseteq
E$ is a subset of edges assigned to $u$, and $B(u)$ is a tree
assigned to $u$.
For the root, $u_0$, we have $F(u_0) =
E$ and $B(u_0) = B_0$, where $B_0$ is the initial tree.
For every node $u$ with $F(u)\ne B(u)$,
we assign an edge $e(u) \in F(u) \setminus B(u)$ to $u$. When $F(u)
= B(u)$ we write $e(u) = \bot$. The edge $e(u)$ corresponds to the
edge picked by the $\RandomFacet$ algorithm at the highest level of
the recursion for the recursive call $\RandomFacet(F(u),B(u))$.

Every node~$u$ may have a left child $u_L$ and a
right child $u_R$, corresponding to the first and second recursive
call of the \RandomFacet\ algorithm, respectively.
The left child $u_L$ exists if and only if
$F(u) \ne B(u)$, in which case $F(u_L)=F(u)\setminus
\{e(u)\}$ and $B(u_L)=B(u)$. 
Let $u^*$ be the \emph{rightmost} leaf in the subtree of~$u$. It is
obtained by following a path from~$u$ that makes a right turn whenever
possible until reaching a leaf. It can also be defined recursively as
follows. If $u$ is a leaf then $u^*=u$. If $u_R$ exists, then
$u^*=(u_R)^*$. Otherwise, $u^*=(u_L)^*$. Note that $B(u^*)$ is
the tree returned by the recursive call of \RandomFacet\ at
$u$. The correctness of the algorithm thus implies that
$B(u^*)$ is an optimal shortest path tree for the graph defined by $F(u)$.
The right child~$u_R$ exists if and only if~$u_L$ exists and $e(u)$ is
an improving switch with respect to $B((u_L)^*)$. 
If $u_R$ exists,
then $F(u_R)= F(u)$ and $B(u_R)=B((u_L)^*)[e(u)]$. 

The \RandomFacet\ algorithm is of course a randomized algorithm, so
the computation tree it defines is not unique. Instead, the
computation tree is a random variable, and
$\RandomFacet$ defines a probability distribution over
computation trees. The random choices made
by the algorithm manifest themselves in the edges $u(e)$ assigned to
the nodes of the computation tree.
Note that every right-edge $(u,u_R)$ in the computation tree
corresponds to an improving switch performed by the $\RandomFacet$
algorithm. Since we are interested in the expected number of improving
switches performed this motivates the following definition. For every
computation tree $T$, let $switch(T)$ be the set of nodes $u$ for which
there exists a right child $u_R$. Then the expected number of
improving switches performed by the $\RandomFacet$ algorithm is
$\E{|switch(T)|}$.

Let us note that every element $u$ of $switch(T)$ can be uniquely
identified with the path from the root to $u$. Such a path can be
described by a sequence of $L$ and $R$ labels. Since $u$ has a right
child, the path obtained by appending an additional $R$ label must
also appear in $T$. Furthermore, the sets of edges assigned to nodes
along the path are uniquely determined by the labels and the
encountered edges picked by the $\RandomFacet$ algorithm. It will
therefore be helpful to include the picked edges in the description of
the path. This leads us to the following definition.

\begin{definition}[\bf Computation path]
A \emph{computation path} $P$ is a sequence of pairs $\langle
(e_0,d_0),(e_1,d_1),\ldots,(e_k,d_k) \rangle$, where
$e_0,e_1,\ldots,e_k\in E$ are distinct edges of the graph $G$, and
where $d_0,d_1,\ldots,d_{k}\in \{L,R\}$. We let $F_0,F_1,\dots,F_{k+1}$ be
the sets of edges assigned to the nodes of the path, such that $F_0 =
E$, and for all $\ell \le k$ we have $F_{\ell+1} = 
F_\ell\setminus\{e_\ell\}$ if $d_\ell = L$ and $F_{\ell+1}=F_\ell$ otherwise.
Furthermore, we let $paths(G)$ be the set
of all computation paths for $G$ for which $d_k=R$, i.e., we require the
paths of $paths(G)$ to end with a right-edge. 
\end{definition}

If a computation path $P$ appears in a computation
tree $T$ we write $P \in T$. 
Furthermore, we let $\mathcal{T}_{E,B_0}$ be the set
of all possible computation trees generated by a call to
$\RandomFacet(E,B_0)$, and $p_{E,B_0}(T)$ be the probability that
the computation tree $T$ is generated. Using that every right-edge of a
computation tree corresponds uniquely to a path, we get the following
useful lemma. This technique was first used by
G{\"a}rtner~\cite{Gartner02}.

\begin{lemma}\label{lemma:count_paths}
$\E{|switch(T)|} \;=\; \sum_{P\in paths(G)} \PPr{P \in T}$.
\end{lemma}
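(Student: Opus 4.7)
The plan is to exhibit, for each fixed computation tree $T$, a bijection between the set $switch(T)$ of nodes that have a right child and the set of computation paths from $paths(G)$ that appear in $T$. Once this bijection is in place, the lemma is an immediate consequence of linearity of expectation.

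Concretely, I would first observe that every node $v$ of a computation tree $T$ is uniquely identified by the sequence of directions $(d_0,\dots,d_{k})\in\{L,R\}^{k+1}$ taken from the root to $v$, together with the edges $e_0,\dots,e_k$ assigned to the nodes visited along the way (the edge at $v$ itself does not matter, only the edges at the proper ancestors). Starting with $F_0 = E$ at the root, the set $F_{\ell+1}$ is then determined from $F_\ell$ by the recursive rule in the definition ($F_{\ell+1}=F_\ell\setminus\{e_\ell\}$ if $d_\ell=L$, and $F_{\ell+1}=F_\ell$ if $d_\ell=R$), exactly as in the description of \RandomFacet. In particular, the sequence $\langle (e_0,d_0),\dots,(e_k,d_k)\rangle$ read along the path from the root to $v$ is a computation path in the sense of the definition.

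Next I would set up the bijection as follows. Given $u\in switch(T)$, its right child $u_R$ exists, and the unique root-to-$u_R$ path reads off a sequence $P(u)=\langle (e_0,d_0),\dots,(e_k,d_k)\rangle$ whose last direction $d_k$ equals $R$; thus $P(u)\in paths(G)$, and by construction $P(u)\in T$. Conversely, given any $P=\langle (e_0,d_0),\dots,(e_k,d_k)\rangle\in paths(G)$ with $P\in T$, following the directions $d_0,\dots,d_k$ from the root of $T$ reaches a right child of some node $u\in switch(T)$ (right because $d_k=R$), and the preceding edge assignments in $T$ must coincide with $e_0,\dots,e_{k-1}$. These two maps are inverse to each other, so the bijection is established.

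Using this bijection, for every realization of $T$ we can rewrite
\begin{equation*}
|switch(T)| \;=\; \sum_{P\in paths(G)} \mathbb{1}[P \in T].
\end{equation*}
Taking expectations over the distribution on computation trees induced by \RandomFacet\ and applying linearity of expectation yields
\begin{equation*}
\E{|switch(T)|} \;=\; \sum_{P\in paths(G)} \E{\mathbb{1}[P \in T]} \;=\; \sum_{P\in paths(G)} \PPr{P \in T},
\end{equation*}
which is exactly the claim. The only nontrivial step is verifying that the recursive rules for $F_\ell$ in the definition of a computation path agree with the $F$-sets attached to the nodes of an actual computation tree; this is a direct consequence of the definitions given for $\RandomFacet$ and for $F(u_L)$, $F(u_R)$, and thus should not pose any real obstacle.
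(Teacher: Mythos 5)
Your argument is essentially the paper's own proof: both rest on the observation that $|switch(T)|$ equals the number of computation paths in $paths(G)$ that appear in $T$ (via the right-edge/path correspondence you describe), followed by exchanging the order of summation, i.e., linearity of expectation. You spell out the bijection more explicitly than the paper, which records it in a single sentence before the lemma, but the argument is the same.
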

\begin{proof}
Let $\id_{P \in T}$ be an indicator variable that is 1 if $P \in T$
and 0 otherwise. Then we have:
\begin{align*}
\E{|switch(T)|} &~=~ \sum_{T \in \mathcal{T}_{E,B_0}} p_{E,B_0}(T) \,|switch(T)|\\
&~=~
\sum_{T \in \mathcal{T}_{E,B_0}} p_{E,B_0}(T) \,|\{P \in
paths(G) \mid P \in T\}| \\
&~=~
\sum_{P \in paths(G)}
\sum_{T \in \mathcal{T}_{E,B_0}} p_{E,B_0}(T)\, \id_{P \in T}\\
&~=~
\sum_{P \in paths(G)}
\PPr{P \in T}\;.
\end{align*}
\end{proof}

We next focus on the behavior of the $\RandomFacet$ algorithm when
applied to the graph $G = G_{n,r,s,t}$ starting from a tree $B_0$
consisting entirely of zero-edges. $n \in \NN$ is the number of bits
of the corresponding binary counter. $r,s,t \in \NN$ are parameters that
will be specified by the analysis to ensure that the expected number
of improving switches performed by the
$\RandomFacet$ algorithm is at least ${\rm e}^{\Omega(\sqrt{n})}$. We
show that it suffices to use $r = \Theta(\log n)$, $s =
\Theta(\sqrt{n}\log n)$, and $t = \Theta(\log n)$.
Since the graph $G_{n,r,s,t}$ contains
$N=\Theta(nrs)=\Theta(n^{3/2}\log^2 n)$ vertices and
$M=\Theta(nrst)=\Theta(n^{3/2}\log^3 n)$ edges, we prove the following
theorem.

\begin{theorem}\label{thm:main}
The expected number of switches performed by the
\RandomFacet\ algorithm, when run on the graph $G_{n,r,s,t}$, where 
$r = \Theta(\log n)$, $s = \Theta(\sqrt{n}\log n)$, and $t =
\Theta(\log n)$, with initial tree $B_0$, is at least ${\rm
  e}^{\Omega(\sqrt{n})} = {\rm e}^{\Omega(M^{1/3}/\log M)}$,
where $M$ is the number of edges in $G_{n,r,s,t}$.
\end{theorem}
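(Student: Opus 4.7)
The plan is to apply Lemma~\ref{lemma:count_paths}, so that instead of analyzing the entire random computation tree, I only need to exhibit a sufficiently large family of computation paths $P \in paths(G)$ and lower bound each $\PPr{P \in T}$. Guided by the high-level description in Section~\ref{S-randomized-counter}, I would define a family of \emph{canonical} computation paths, one for every ``well-behaved'' permutation $\sigma$ of the edges of $G_{n,r,s,t}$ and every execution trace of $\RandCount([n])$ that sets at most $\sqrt{n}$ bits via its second recursive call. Given such $\sigma$, at each node $u$ on the canonical path the edge $e(u)$ is chosen to be the smallest remaining edge under $\sigma$, and the direction $d_\ell$ is $R$ precisely when the corresponding step of the simulated $\RandCount$ would set a new bit to $1$ (i.e., the chosen edge lies in some $\bbb^1_i$ with $\bbb^1_i \subseteq F_\ell$ and $\bbb^1_i \cap B(u) = \emptyset$). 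By Lemma~\ref{L-make_switch} and the characterization of $\BB_F$ in Lemma~\ref{L-Sigma-optimal}, such steps are indeed improving switches, so the path is a legal element of $paths(G)$ provided the graph remains ``functional'' throughout.

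Next I would import the bit-tracking machinery already sketched for $\RandomPerm$: translate $\sigma$ into an induced permutation $\hat{\sigma}$ on $[n]$, and show by induction down the canonical path that $bit_i(F_\ell,B(\cdot))$ matches the state of the simulated $\RandCount^{1P}$ counter at the corresponding recursion depth. The three-case bookkeeping used for $\RandomPerm$ (which recursive call absorbs the current step) transfers verbatim, because nothing about that inductive step used the one-permutation nature of the algorithm, only the structural interpretation of removed edges via $\BB_F$. Combined with Lemma~\ref{lemma:counters} and the asymptotics of Lemma~\ref{lemma:asymp}, a \emph{single} well-behaved $\sigma$ thus forces every canonical continuation consistent with $\hat{\sigma}$ to contain $\tilde{\Omega}(f(n)) = e^{\Omega(\sqrt{n})}$ right-edges, i.e., that many canonical paths.

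The probability that a canonical path $P$ of length $\ell$ is realized is $\prod_u 1/|F(u)\setminus B(u)|$, which is $1/\ell!$ up to lower-order factors; summing $1/\ell!$ over all $\ell$ edge-orderings that both (a) induce a well-behaved permutation and (b) keep the graph functional is a fraction of all orderings bounded below by a constant. Concretely, a uniformly random permutation of $E$ is well-behaved with high probability provided $r = \Theta(\log n)$ (so $(b_i,w_{i+1})$ beats at least one of the $r$ copies of the $(a_i,b_i)$-path) and $t = \Theta(\log n)$ (so each multi-edge has a late instance), while functionality along the path is controlled by a Chernoff bound: on a canonical path of length at most $\sqrt{n}$ second-calls the lengths of the subpaths requiring survival of each multi-edge are bounded, so $t = \Theta(\log n)$ copies suffice to make every multi-edge survive with high probability. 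The restriction to paths with $\le\sqrt{n}$ right-turns is what makes this Chernoff argument feasible while still retaining enough paths to dominate $f(n)$, and the stretching $s = \Theta(\sqrt{n}\log n)$ is what ensures the $(b_i,w_{i+1})$-path and $(a_i,b_i)$-paths are broken early rather than left intact.

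The main obstacle is precisely this last Chernoff/union bound step: unlike for $\RandomPerm$ and $\RandomBland$ where a single fixed permutation easily avoids exhausting any multi-edge, here every recursive call makes a fresh random draw, so multi-edges \emph{can} be exhausted, and the probability of this happening must be controlled uniformly over all $e^{\Omega(\sqrt{n})}$ canonical paths simultaneously. Getting the parameters $r,s,t$ balanced so that (i) the number of canonical paths remains $e^{\Omega(\sqrt{n})}$, (ii) each canonical event has probability at least $e^{-O(\sqrt{n})}$ of occurring, and (iii) the total number of edges is $M = \tilde{\Theta}(n^{3/2})$ (yielding the $e^{\tilde{\Omega}(M^{1/3})}$ bound) is the delicate point, and is the reason $s$ must scale with $\sqrt{n}$ rather than $t$ being inflated: enlarging $t$ would increase the chance of a multi-edge being picked early, whereas enlarging $s$ only helps the desired counting structure.
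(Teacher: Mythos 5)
The high-level skeleton of your proposal is right: you correctly identify the Gärtner computation-path decomposition (Lemma~\ref{lemma:count_paths}), the need to restrict attention to paths with $O(\sqrt{n})$ right turns, the Chernoff-style argument for keeping multi-edges alive, and the reason $s$ (not $t$) must scale with $\sqrt{n}$. But the quantitative core of the argument is wrong and would not produce the stated bound.

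The key flaw is in the sentence claiming that a canonical path of length $\ell$ has probability $\prod_u 1/|F(u)\setminus B(u)| \approx 1/\ell!$. The quantities $|F(u)\setminus B(u)|$ are on the order of $M$ (the total number of edges), not on the order of the path length, so the probability of any fixed path is more like $(M-\ell)!/M!$, which is astronomically small. Consequently, fixing a single well-behaved permutation $\sigma$ and tracking ``the'' deterministic $\sigma$-path, as you propose, cannot contribute enough probability mass: \RandomFacet\ makes a fresh random draw at every call, so there is no persistent permutation, and the probability that it happens to reproduce the $\RandomPerm$ trajectory of a given $\sigma$ is negligible. Your invocation of Lemmas~\ref{lemma:counters} and~\ref{lemma:asymp} (about $f(n)$ and $f^{1P}(n)$) is a tell that you are importing the $\RandomPerm$/$\RandomBland$ machinery; the paper uses those lemmas there, but not for \RandomFacet.

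What is missing is the paper's decomposition by \emph{subsets} $S\subseteq[n]$ of size $p=\lfloor\sqrt{n}\rfloor$. Canonical paths are defined relative to a fixed $S=\{i_1>\cdots>i_p\}$, the families $canon_S(G)$ are disjoint for distinct $S$, and the union-of-paths event ``$T$ contains an $(i_1,\dots,i_p)$-canonical path'' has probability at least $\tfrac{1}{2p!}$ (Lemma~\ref{L-appear-S}). The $\tfrac{1}{p!}$ comes not from per-step edge-count ratios but from the fact that at each of $p-1$ critical moments the algorithm must choose among $\le p$ equally-sized untouched sets $\bbb^1_{i_q}$ (Lemma~\ref{lemma: good 3}), and the losses from $Bad_2$ (missing an $(a_i,b_i)$-path first) and $Bad_3$ (exhausting a multi-edge) are each bounded by $\tfrac14$ for the chosen $r,s,t$. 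Summing $\tfrac{1}{2p!}$ over all $\binom{n}{p}$ subsets then gives $\tfrac{1}{2p!}\binom{n}{p}\ge e^{(2-o(1))\sqrt{n}}$, which is where the exponential really comes from. Without this disjoint decomposition over $S$ your argument yields at best a constant, not $e^{\Omega(\sqrt{n})}$.
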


First, we need some definitions that allow us to identify interesting events
along a computation path $P=\langle
(e_0,d_0),(e_1,d_1),\ldots,(e_k,d_k) \rangle$. Note that the edges
$e_0,e_1,\ldots,e_k$ are all distinct. For every $e\in E$, we define
$\sigma_P(e)$ to be the index of $e$ in $P$, if it appears in~$P$, 
and $\infty$ otherwise:
\[
\sigma_P(e) ~=~ \begin{cases}
\ell & \text{if $e = e_\ell$ for some $\ell \in \{0,\dots,k\}$}\\
\infty & \text{otherwise}
\end{cases}
\]
It will again be helpful to work with sets of edges rather than single
edges. We therefore define $\sigma_P(\bbb^1_i)$ to be the first
occurrence of an edge from $\bbb^1_i$ in $P$. We also define 
$\sigma_P(\aaa^1_i)$ to be the smallest index for which at least one
edge from every set $\aaa^1_{i,j}$ has appeared in $P$. Note that
removing an edge from $\bbb^1_i$ makes it impossible to reach
$w_{i+1}$ from $b_{i,1}$, and, similarly, removing at least one edge
from every set $\aaa^1_{i,j}$, for all $j\in[r]$, makes it impossible
to reach $b_{i,1}$ from any vertex $a_{i,j,1}$, for $j\in
[r]$. Formally, we define:
\[
\renewcommand{\arraystretch}{1.5}
\begin{array}{rrcl}
\forall i \in [n]: &
\sigma_P(\bbb^1_i) &\!=\!& \displaystyle\min_{j \in [rs]} ~ \sigma_P(b^1_{i,j}) \\
\forall i \in [n], \forall j \in [r]: &
\sigma_P(\aaa^1_{i,j}) &\!=\!& \displaystyle\min_{k\in[s]} ~ \sigma_P(a_{i,j,k}^1)\\
\forall i \in [n]: &
\sigma_P(\aaa^1_i) &\!=\!& \displaystyle\max_{j \in [r]} ~
\displaystyle\min_{k\in[s]} ~ \sigma_P(a_{i,j,k}^1)\\
\end{array}
\]

Next, we define computation paths for which the random selection of edges is
well-behaved. 

\begin{definition}[\bf Canonical paths] \label{D-canonical}
Let $n\ge i_1>i_2>\cdots>i_p\ge 1$, let $S = \{i_1,i_2,\ldots,i_p\}
\subseteq [n]$, let
$P=\langle(e_0,d_0),(e_1,d_1),\ldots,(e_k,d_k) \rangle$ be a
computation path, and let $F_0,F_1,\dots,F_{k+1}$ be the sets of edges
assigned to the nodes of the path.
We say that $P$ is \emph{$(i_1,i_2,\ldots,i_p)$-canonical} if and only
if it satisfies:
\begin{itemize}
\item[$(i)$]
$\sigma_P(\bbb^1_{i_q}) < \sigma_P(\bbb^1_{i_{q+1}})$ for all
  $q \in [p-1]$.
\item[$(ii)$]
$\sigma_P(\bbb^1_i) \le \sigma_P(\aaa^1_i)$ for all $i \in [n]$.
\item[$(iii)$]
$F_\ell$ is functional for all $\ell \in \{0,\dots,k+1\}$.
\item[$(iv)$]
For every $\ell \in \{0,\dots,k\}$, $d_\ell = R$ if and only if $\ell
\in \mathcal{R}_{P,S}$, where:
\[
\mathcal{R}_{P,S} ~:=~ \{\sigma_P(e) \mid e \in \bbb^1_{i_q}, q \in [p]\} \cup
\{\sigma_P(e) \mid e \in \aaa^1_{i_q,j}, q\in [p], j\in [r],
\aaa^1_{i_q} \not\sqsubseteq F_{\sigma_P(e)} \setminus \{e\}\}\;.
\]
Also, $\sigma_P(\aaa^1_{i_p}) = k$ such that $d_k = R$.
\end{itemize}
We let $canon_S(G)$ be the set of $(i_1,i_2,\ldots,i_p)$-canonical
computation paths.
\end{definition}

Note that in Definition \ref{D-canonical} $(ii)$ weak inequality is used to allow that $\sigma_P(\bbb^1_i) = \sigma_P(\aaa^1_i) = \infty$ for some $i$.
Since every canonical path ends with a right-edge we have
$canon_S(G) \subseteq paths(G)$ for every $S \subseteq [n]$.
Also note that $canon_S(G) \cap canon_{S'}(G) = \emptyset$ for $S \ne
S'$. Recall that $P \in T$ is the event that the computation path $P$
appears in a random computation tree $T$ generated by the
$\RandomFacet$ algorithm. Let $p = \lfloor\sqrt{n}\rfloor$.
It follows from Lemma
\ref{lemma:count_paths} that:
\begin{align*}
\E{|switch(T)|} &~=~ \sum_{P\in paths(G)} \PPr{P \in T} \\
&~\ge~
\sum_{S \subseteq [n]} ~\sum_{P \in canon_S(G)} \PPr{P \in T} \\
&~\ge~
\sum_{S \subseteq [n]: |S| = p} ~\sum_{P \in canon_S(G)} \PPr{P
  \in T} \;.
\end{align*}
We use this inequality to prove Theorem \ref{thm:main}, i.e., we show
that:
\[
\sum_{S \subseteq [n]: |S| = p} ~\sum_{P \in canon_S(G)} \PPr{P
  \in T} ~=~ {\rm e}^{\Omega(\sqrt{n})}\;.
\]

The main technical lemma of this section, from which the desired lower bound on the expected number of steps performed by the \RandomFacet\ algorithm easily follows, is the following lemma.

\begin{lemma}\label{L-appear-S}
For every $S\subseteq[n]$,  where $|S| = p = \lfloor\sqrt{n}\rfloor$, we have
\[
\sum_{P \in canon_S(G)} \PPr{P \in T} ~\ge~
\frac{1}{2\, p!} \;.
\]
\end{lemma}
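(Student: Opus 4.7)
My plan is to construct, for each $S = \{i_1 > i_2 > \cdots > i_p\} \subseteq [n]$, an event $\mathcal{E}_S$ in the probability space of the random choices made by \RandomFacet\ such that (a) on $\mathcal{E}_S$ the generated computation tree $T$ contains some canonical path $P \in canon_S(G)$, and (b) $\PPr{\mathcal{E}_S} \ge \frac{1}{2\,p!}$. Since $canon_S(G)$ and $canon_{S'}(G)$ are disjoint for $S \neq S'$, combining (a) and (b) with Lemma~\ref{lemma:count_paths} yields the claim.

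The event $\mathcal{E}_S$ will be a conjunction of three conditions on the sequence of edges picked by \RandomFacet\ along a distinguished branch of the recursion tree. The \emph{ordering} condition asks that, among the first occurrences of edges from the sets $\bbb^1_{i_1}, \dots, \bbb^1_{i_p}$, they appear in exactly this order; by symmetry of the random choices (all these sets have the same size $rs$) this condition has probability $1/p!$. The \emph{functionality} condition asks that every $F_\ell$ along the relevant prefix of the path be functional, i.e., that no multi-edge from $\mathcal{M}$ be exhausted; this will be controlled by a Chernoff/union-bound argument using $t = \Theta(\log n)$ and the fact that the length of the prefix is polynomial in $n$, $r$, $s$, $t$. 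The \emph{compatibility} condition asks that for each $q \in [p]$ the first edge of $\bbb^1_{i_q}$ be picked before all copies of $\aaa^1_{i_q,j}$ have been blocked for every $j$; this is again controlled by Chernoff arguments using $r = \Theta(\log n)$ and $s = \Theta(\sqrt{n}\log n)$. Together, the last two conditions fail with probability at most $1/2$.

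Once $\mathcal{E}_S$ is established, verifying (a) proceeds by induction on the recursion depth, using Lemma~\ref{L-Sigma-optimal} to characterize the tree returned by every recursive call on a functional subgraph, and Lemma~\ref{L-make_switch} to identify the improving switches that trigger right-edges in the computation tree. Specifically, when the random pick at some recursive call is the first edge from $\bbb^1_{i_q}$, the inductive hypothesis tells me the left subtree finishes with a tree in $\BB_{F \setminus \{e\}}$; because $i_q \ge reset(F \setminus \{e\})$ and the remaining copies of $\bbb^1_{i_q}$ are still in $F$, Lemma~\ref{L-make_switch}(i) forces an improving switch and hence a right-edge, and the index enters $\mathcal{R}_{P,S}$ as required by Definition~\ref{D-canonical}(iv). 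The analogous argument handles the $\aaa^1$ picks that satisfy the right-edge criterion.

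The main obstacle is the probability analysis on $\mathcal{E}_S$: the edges picked at successive recursive levels of \RandomFacet\ are \emph{not} drawn from a single permutation (this is precisely the distinction with \RandomPerm), and the sets from which they are drawn shrink in a data-dependent way. I would handle this by conditioning on the filtration generated by the picked edges along the distinguished branch and verifying that, conditional on the history so far being consistent with $\mathcal{E}_S$, the next pick lies in the required set with the probability needed for the Chernoff tail bounds to go through. Restricting to $p = \lfloor\sqrt{n}\rfloor$ (the choice dictated by the statement) is exactly what keeps the path prefix short enough for the $1/2$ slack in the functionality bound to hold simultaneously for all $\Theta(nrs)$ multi-edges in $\mathcal{M}$.
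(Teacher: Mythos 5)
Your proposal is essentially the paper's own argument: the three conditions you impose (ordering, compatibility, functionality) correspond exactly to the paper's $Good_1$, $Good_2$, $Good_3$ events, the $1/p!$ bound for the ordering matches Lemma~\ref{lemma: good 3}, the Chernoff/union-bound control of the other two matches Lemmas~\ref{lemma:bad_2} and~\ref{lemma: bad 1}, and your inductive verification that the distinguished branch really does produce a canonical path via Lemmas~\ref{L-Sigma-optimal} and~\ref{L-make_switch} is precisely the role of Lemma~\ref{lemma:policies}. The only structural caveat worth noting is that the paper bounds $\PPr{Bad_2\mid Good_1}$ and $\PPr{Bad_3\mid Good_1}$ conditionally and multiplies out (it is not enough to show the unconditional failure probabilities sum to at most $1/2$, since $\PPr{Good_1}=1/p!$ is tiny), but your filtration/conditioning remark shows you are aware of this, so the approach is the same.
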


The proof of Lemma~\ref{L-appear-S} is quite involved and is the main technical contribution of this section. Before presenting this proof, we show that Lemma~\ref{L-appear-S} allows us to obtain the subexponential lower bound we are after.

\medskip\noindent
\textbf{Proof of Theorem \ref{thm:main}:\ }
Let $p = \lfloor\sqrt{n}\rfloor$. We get from Lemma
\ref{lemma:count_paths} and Lemma \ref{L-appear-S} that:
\begin{align*}
\E{|switch(T)|} ~\ge~ 
\sum_{S \subseteq [n]: |S|=p} ~\sum_{P \in canon_S(G)} \PPr{P
  \in T}
~\ge~
\sum_{S \subseteq [n]: |S|=p} \frac{1}{2\, |S|!}
~=~
\frac{1}{2}\frac{1}{p!} { n \choose p }\;.
\end{align*}
It follows that $\E{|switch(T)|} \;\ge\; {\rm e}^{\Omega(\sqrt{n})}$ because $\frac{1}{2}\frac{1}{p!} { n \choose p } \;\ge\; {\rm e}^{(2-o(1))\sqrt{n}}$, as

\[
\frac{1}{p!} {n \choose p} 
~=~ \frac{n!}{(p!)^2(n-p)!} 
~\ge~ \frac{(n-p)^{p}}{(p!)^2} 
~=~ \frac{\Bigl(n\Bigl(1-\frac{1}{p}\Bigr)\Bigr)^{p}}{(p!)^2}
~\sim~ \frac{\frac{n^{\sqrt{n}}}{\rm e}}{2\pi \sqrt{n} \Bigl(\frac{n}{{\rm e}^2} \Bigr)^{\sqrt{n}}}
~=~ \frac{{\rm e}^{2\sqrt{n}}}{2\pi {\rm e} \sqrt{n}}\;.
\]
\qed

The remainder of this section is devoted to the proof of
Lemma~\ref{L-appear-S}. Throughout the remainder of the section we let
$S = \{i_1,i_2,\ldots,i_p\} \subseteq [n]$, where
$i_1>i_2>\cdots>i_p$, be fixed.

Let $T$ be a computation tree. Observe that in order to check whether
$T$ contains an 
$(i_1,i_2,\ldots,i_p)$-canonical computation path it suffices to
follow a single path in $T$ starting from the root. Indeed, Definition
\ref{D-canonical} $(iv)$ specifies at every step whether to go to the
left child or the right child. More precisely, let $P
=\langle(e_0,d_0),(e_1,d_1),\ldots,(e_k,d_k) \rangle \in T$ be a
computation path that is a proper prefix of some
$(i_1,i_2,\ldots,i_p)$-canonical path. Let $u(P)$ be the last node
on $P$, and let $P' =
\langle(e_0,d_0),(e_1,d_1),\ldots,(e_k,d_k),(e(u(P)),L) \rangle$ be the
path obtained by extending $P$ with the edge picked at $u(P)$. The
choice to use $L$ here is arbitrary. Then $P$ can only be extended into an 
$(i_1,i_2,\ldots,i_p)$-canonical path by going to the right child of
$u$ if $k+1 \in \mathcal{R}_{S,P'}$ and to the left child of $u$
otherwise. We define:
\[
d(P) ~=~ \begin{cases}
R & \text{if $k+1 \in \mathcal{R}_{S,P'}$}\\
L & \text{otherwise}\\
\end{cases}
\]
Hence, we can check whether $T$ contains an
$(i_1,i_2,\ldots,i_p)$-canonical path as follows. Starting with the
empty path $P$, we repeatedly append the pair $(e(u(P)),d(P))$ to
$P$. We stop when either $P$ is $(i_1,i_2,\ldots,i_p)$-canonical;
Definition \ref{D-canonical} $(i)$, $(ii)$, or $(iii)$ show that $P$
can not be extended to an $(i_1,i_2,\ldots,i_p)$-canonical path; or
there is no child of $u(P)$ in direction $d(P)$. In fact, Lemma
\ref{lemma:policies} below shows that the last case never happens.
This procedure is
guaranteed to find an $(i_1,i_2,\ldots,i_p)$-canonical path in $T$ if
it exists. Note also that $T$ can contain at most one
$(i_1,i_2,\ldots,i_p)$-canonical path.

\begin{definition}[$P_S(T)$]
Let $T$ be a computation tree, and let $P$ be the maximal path in $T$ that
is a proper prefix of an $(i_1,i_2,\ldots,i_p)$-canonical path. We
define $P_S(T)$ to be the path obtained by appending the pair
$(e(u(P)),d(P))$ to $P$.
\end{definition}

Note that if $T$
contains an $(i_1,i_2,\ldots,i_p)$-canonical path then $P_S(T)$ is exactly
this path. Also note that when $T$ is generated at random then
$P_S(T)$ is a random variable. We next define events
corresponding to the different ways in which we can fail to find an
$(i_1,i_2,\ldots,i_p)$-canonical path in $T$. These events correspond
directly to the requirements in Definition \ref{D-canonical}. 

\begin{definition}[\bf Bad events]\label{def:bad}
Let $T$ be generated at random by the $\RandomFacet$ algorithm.
Suppose that $P_S(T) = \langle (e_0,d_0),(e_1,d_1),\ldots,(e_k,d_k)
\rangle$, and that $F_0,F_1,\dots,F_{k+1}$ are the sets of edges
assigned to the nodes of the path.
\begin{itemize}
\item[$(i)$]
For every $q \in [p-1]$, define $Bad_1(q)$ to be the event that
$\sigma_{P_S(T)}(\bbb^1_{i_q}) = \infty$ and
$\sigma_{P_S(T)}(\bbb^1_{i_{q'}}) = k$ for some $q' > q$. Also
define $Bad_1 := \bigcup_{q\in [p-1]} Bad_1(q)$.
\item[$(ii)$]
For every $i \in [n]$, define $Bad_2(i)$ to be the event that 
$\sigma_{P_S(T)}(\bbb^1_{i}) = \infty$ and
$\sigma_{P_S(T)}(\aaa^1_{i}) = k$.
Also define $Bad_2 := \bigcup_{i\in [n]} Bad_2(i)$.
\item[$(iii)$]
For every $\eee \in \mathcal{M}$, define $Bad_3(\eee)$ to be the event that $F_{k+1}$
is not functional because $F_{k+1} \cap \eee = \emptyset$.
Also define $Bad_3 := \bigcup_{\eee\in \mathcal{M}} Bad_3(\eee)$.
\end{itemize}
We let $Good_1$, $Good_2$, and $Good_3$ be the
complements of $Bad_1$, $Bad_2$, and $Bad_3$ respectively.
\end{definition}

Note that the events $Bad_2$ and $Bad_3$ are disjoint since $Bad_2$
only occurs when $e_k \in \aaa^1_{i,j}$ for some $i$ and $j$, and
$Bad_3$ only occurs when $e_k \in \eee$ for some $\eee \in
\mathcal{M}$.

We now make the first
step towards proving Lemma \ref{L-appear-S}. Recall that $\sum_{P \in
  canon_S(G)} \PPr{P \in T}$ is the probability that the computation
tree $T$ generated by $\RandomFacet(E,B_0)$ contains an
$(i_1,i_2,\ldots,i_p)$-canonical path. Following the above discussion
we know that this happens if and only if none of the events $Bad_1$,
$Bad_2$, or $Bad_3$ occur. Since $Bad_2$ and $Bad_3$ are
disjoint we get:
\begin{align*}
\sum_{P \in
  canon_S(G)} \PPr{P \in T} &~=~ 
\PPr{ Good_1 \land Good_2 \land Good_3 } \\
&~=~ \PPr{ Good_1 }\, \PPr{ Good_2 \land Good_3 \mid Good_1 } \\
&~=~
\PPr{ Good_1 }\, ( 1 - \PPr{ Bad_2 \mid Good_1 } - \PPr{ Bad_3 \mid
  Good_1})
\end{align*}

Before estimating $\PPr{ Good_1 }$, $\PPr{ Bad_2 \mid Good_1 }$, and 
$\PPr{ Bad_3 \mid Good_1}$, we start by proving a couple of lemmas that
describe the trees assigned to nodes along the path 
$P_S(T)$, and show that we always have $P_S(T) \in T$. Recall that $S
= \{i_1,i_2,\dots,i_p\}$.

\begin{lemma}\label{lemma:F}
Let $T$ be a computation tree, let 
\begin{align*}
P &~=~ P_S(T) ~=~ \langle
(e_0,d_0),(e_1,d_1),\ldots,(e_k,d_k) \rangle\;, \\
P' &~=~ \langle (e_0,d_0),(e_1,d_1),\ldots,(e_{k-1},d_{k-1})
\rangle\;,
\end{align*}
and let $F_0,F_1,\dots,F_{k}$ be the sets of edges assigned to nodes
on $P'$. Then:
\begin{itemize}
\item[$(i)$]
For all $q \in [p]$ and $\ell \in \{0,\dots,k\}$: $\bbb^1_{i_q}
\subseteq F_\ell$ and $\aaa^1_{i_q} \sqsubseteq F_\ell$.
\item[$(ii)$]
For all $\ell \in \{0,\dots,k\}$: $reset(F_\ell) = 0$.
\end{itemize}
\end{lemma}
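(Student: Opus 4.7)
The plan is to use induction on $\ell$, exploiting the fact that $P'$, being a proper prefix of an $(i_1,\ldots,i_p)$-canonical path, must satisfy conditions $(i)$--$(iv)$ of Definition~\ref{D-canonical} at every index $\ell \in \{0,\dots,k-1\}$. In particular, the direction labels $d_0,\dots,d_{k-1}$ are forced by the canonicity set $\mathcal{R}_{P',S}$, and this is what drives both claims; note that the inclusion $\ell \in \mathcal{R}_{P',S}$ is well-defined from $P'$ alone, since it only refers to $e_\ell$ and $F_\ell$.

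For part $(i)$, I would argue by induction on $\ell$. The base case $\ell = 0$ is immediate from $F_0 = E$. For the inductive step, note that $F_{\ell+1}$ differs from $F_\ell$ only when $d_\ell = L$, in which case $e_\ell$ is removed. By canonicity condition $(iv)$, every index $\ell$ at which $e_\ell \in \bbb^1_{i_q}$ for some $q \in [p]$ must satisfy $d_\ell = R$; hence no edge of $\bbb^1_{i_q}$ is ever removed, and $\bbb^1_{i_q} \subseteq F_\ell$ is preserved. Similarly, if $e_\ell \in \aaa^1_{i_q,j}$ for some $q \in [p]$ and $j \in [r]$, and removing $e_\ell$ would cause $\aaa^1_{i_q} \not\sqsubseteq F_\ell \setminus \{e_\ell\}$, then condition $(iv)$ again forces $d_\ell = R$, preserving $\aaa^1_{i_q} \sqsubseteq F_\ell$.

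For part $(ii)$, I would fix $\ell \in \{0,\dots,k\}$ and $i \in [n]$ and show that $\aaa^1_i \not\sqsubseteq F_\ell$ implies $\bbb^1_i \not\subseteq F_\ell$; this gives $reset(F_\ell) = 0$ by definition. If $i \in S$, this is immediate from part $(i)$. Otherwise, $\aaa^1_i \not\sqsubseteq F_\ell$ means that for every $j \in [r]$ some edge of $\aaa^1_{i,j}$ has been removed at some step before $\ell$, and in particular appears as some $e_{\ell''}$ with $\ell'' < \ell$; hence $\sigma_{P'}(\aaa^1_i) < \ell$. Canonicity condition $(ii)$ then yields $\sigma_{P'}(\bbb^1_i) \le \sigma_{P'}(\aaa^1_i) < \ell$. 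Setting $\ell' = \sigma_{P'}(\bbb^1_i)$ gives $e_{\ell'} \in \bbb^1_i$; since $i \notin S$ and the edge sets $\bbb^1_{\cdot}$ and $\aaa^1_{\cdot,\cdot}$ are disjoint, $\ell' \notin \mathcal{R}_{P',S}$, so condition $(iv)$ forces $d_{\ell'} = L$. Thus $e_{\ell'} \notin F_{\ell'+1}$, and since $F_\ell \subseteq F_{\ell'+1}$ we conclude $e_{\ell'} \notin F_\ell$, so $\bbb^1_i \not\subseteq F_\ell$.

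The main subtlety I expect will be distinguishing carefully between an edge being \emph{picked} (appearing as some $e_\ell$ on $P'$) and an edge being \emph{removed} (picked with $d_\ell = L$), since the $\sigma$-indices in Definition~\ref{D-canonical} refer to picking while membership in $F_\ell$ tracks removal. The case split $i \in S$ versus $i \notin S$ in part $(ii)$ must also be handled cleanly; once these distinctions are made explicit, everything else follows routinely from the canonicity conditions.
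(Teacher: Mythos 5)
Your proof is correct and follows essentially the same approach as the paper's: part $(i)$ via the observation that condition $(iv)$ of canonicity forces $d_\ell = R$ exactly when removing $e_\ell$ would threaten the invariant, and part $(ii)$ via the case split $i \in S$ versus $i \notin S$, invoking condition $(ii)$ of canonicity for the latter. You have merely spelled out the steps the paper leaves implicit, in particular the fact that the picked edge at index $\sigma_{P'}(\bbb^1_i)$ for $i \notin S$ must carry label $L$ and hence actually leave $F_\ell$.
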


\begin{proof}
$(i)$ follows from Definition \ref{D-canonical} $(iv)$, i.e., every
  time a right-edge is used on $P$ no edge is removed, and
  right-edges are used exactly in a way that ensures $(i)$.

We next prove $(ii)$. Let $\ell \in \{0,\dots,k\}$ be given. For all
$i \not\in S$, Definition \ref{D-canonical} $(ii)$ shows that
$\sigma_P(\bbb^1_i) \le \sigma_P(\aaa^1_i)$. In particular, if
$\aaa^1_i \not\sqsubseteq F_\ell$ then we also have $\bbb^1_i
\not\subseteq F_\ell$. When this is combined with $(i)$ we see that
$reset(F_\ell) = 0$.
\end{proof}

\begin{lemma}\label{lemma:policies}
Let $T$ be a computation tree, let 
\begin{align*}
P &~=~ P_S(T) ~=~ \langle
(e_0,d_0),(e_1,d_1),\ldots,(e_k,d_k) \rangle;, \\
P' &~=~ \langle (e_0,d_0),(e_1,d_1),\ldots,(e_{k-1},d_{k-1})
\rangle\;,
\end{align*}
and let $u_0,u_1,\dots,u_{k}$ be the nodes on the path $P'$. 
Define $i_0 := n+1$, $\sigma_P(\aaa^1_{i_0}) := -1$, and
$\sigma_P(\bbb^1_{i_{p+1}}) := \infty$. Then:
\begin{itemize}
\item[$(i)$]
For all $q \in [p]$:
$\sigma_{P'}(\bbb^1_{i_q}) ~\le~ \sigma_{P'}(\aaa^1_{i_q}) ~\le~
\sigma_{P'}(\bbb^1_{i_{q+1}})$.
\item[$(ii)$]
For all $q \in [p]$ and all
$\sigma_{P'}(\aaa^1_{i_{q-1}}) < \ell \le \sigma_{P'}(\bbb^1_{i_q})$ we
have:
\begin{itemize}
\item
For all $i > i_{q-1}$: $bit_i(F(u_\ell),B(u_\ell)) = 1$. 
\item
$\bbb^1_{i_{q-1}} \subseteq B(u_\ell)$.
\item
For all $i < i_{q-1}$: $bit_i(F(u_\ell),B(u_\ell)) = 0$. 
\end{itemize}
We refer to the interval $\sigma_{P}(\aaa^1_{i_{q-1}}) < \ell \le
\sigma_{P}(\bbb^1_{i_q})$ as the $\bbb_q$-phase.
\item[$(iii)$]
For all $q \in [p]$ and all
$\sigma_{P'}(\bbb^1_{i_{q}}) < \ell \le \sigma_{P'}(\aaa^1_{i_q})$ we have:
\begin{itemize}
\item
For all $i \ne i_q$: $bit_i(F(u_\ell),B(u_\ell)) = 1$. 
\item
$\aaa^1_{i_{q},j} \cap B(u_\ell) = \emptyset$ for all $j \in [r]$.
\end{itemize}
We refer to the interval $\sigma_P(\bbb^1_{i_{q}}) < \ell \le
\sigma_P(\aaa^1_{i_q})$ as the $\aaa_q$-phase.
\item[$(iv)$]
$P \in T$.
\end{itemize}
\end{lemma}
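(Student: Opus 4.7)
The plan is to prove claim $(i)$ directly from Definition \ref{D-canonical} and then prove $(ii)$, $(iii)$, and $(iv)$ simultaneously by induction on $\ell$. For $(i)$, the inequality $\sigma_{P'}(\bbb^1_{i_q}) \le \sigma_{P'}(\aaa^1_{i_q})$ is exactly Definition \ref{D-canonical} $(ii)$ at $i = i_q$. The inequality $\sigma_{P'}(\aaa^1_{i_q}) \le \sigma_{P'}(\bbb^1_{i_{q+1}})$ follows for $q<p$ because both endpoints correspond to right-edges in $\mathcal{R}_{P,S}$: if some edge completing the blocking of $\aaa^1_{i_q}$ lay beyond $\sigma_{P'}(\bbb^1_{i_{q+1}})$, the definition of $\mathcal{R}_{P,S}$ would still classify it as $R$, contradicting the construction of $P_S(T)$ as an initial segment of a canonical path (whose $\bbb^1_{i_q}$-edges are ordered by Definition \ref{D-canonical} $(i)$). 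For $q=p$ the bound is trivial by the convention $\sigma_P(\bbb^1_{i_{p+1}}) := \infty$.

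For the induction, the base case $\ell=0$ falls inside the $\bbb_1$-phase (with $i_{q-1}=i_0=n+1$ and $\sigma_P(\aaa^1_{i_0})=-1$): $F(u_0)=E$ is functional, $B(u_0)=B_0$ consists entirely of zero-edges, the clauses ``for all $i>n+1$'' and ``$\bbb^1_{n+1}\subseteq B(u_0)$'' of $(ii)$ are vacuous, and by the definition of $bit_i(\cdot,\cdot)$ every bit of $B_0$ is $0$. In the inductive step, if $d_{\ell-1}=L$ then $B(u_\ell)=B(u_{\ell-1})$ and $F(u_\ell)=F(u_{\ell-1})\setminus\{e_{\ell-1}\}$. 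Because $\ell-1\notin \mathcal{R}_{P,S}$, the edge $e_{\ell-1}$ is neither the first $\bbb^1_{i_q}$-edge nor an edge that completes the blocking of some $\aaa^1_{i_q}$, so $\ell$ remains in the same phase as $\ell-1$ and the interpretation of every $bit_i(F(u_\ell),B(u_\ell))$ is unchanged.

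If $d_{\ell-1}=R$, then $B(u_\ell) = B((u_{\ell-1})_L^*)[e_{\ell-1}]$, where $B((u_{\ell-1})_L^*)$ is an optimal tree for the subgraph defined by $F(u_{\ell-1})\setminus\{e_{\ell-1}\}$. Lemma \ref{lemma:F} gives functionality and $reset(F(u_{\ell-1})\setminus\{e_{\ell-1}\}) = 0$, so Lemma \ref{L-Sigma-optimal} forces $B((u_{\ell-1})_L^*) \subseteq \BB_{F(u_{\ell-1})\setminus\{e_{\ell-1}\}}$; Definition \ref{D-Sigma-F}, cases $(i)$--$(iv)$, then pins down the exact structure of this tree in terms of the $last(\cdot,\cdot)$ quantities. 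When $e_{\ell-1} \in \bbb^1_{i_q}$ is the first $\bbb^1_{i_q}$-edge removed, this tree disables bit $i_q$ (Definition \ref{D-Sigma-F} case $(ii)$); when $e_{\ell-1}$ is the final blocker of $\aaa^1_{i_q}$, case $(iv)$ additionally resets all lower bits to $0$. Lemma \ref{L-make_switch} then certifies that $e_{\ell-1}$ is an improving switch with respect to $B((u_{\ell-1})_L^*)$, yielding existence of the right child and claim $(iv)$; pivoting on $e_{\ell-1}$ produces precisely the tree demanded at the start of the next phase, namely bit $i_q$ is flipped back to $1$ (starting the $\aaa_q$-phase) or the lower bits are newly reset and ready to count (starting the $\bbb_{q+1}$-phase).

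The main obstacle lies in tracking the intermediate $R$-steps inside an $\aaa_q$-phase, where further edges from $\bbb^1_{i_q}$ or from yet-unexhausted copies $\aaa^1_{i_q,j}$ may be removed one at a time, each subproblem-solve potentially altering $B$ at level $i_q$. Here one uses the refined clauses ``$b^1_{i,j} \in \BB_F$ iff $j > last(\bbb^1_i,F)$'' and ``$a^1_{i,j,k} \in \BB_F$ iff $k > last(\aaa^1_{i,j},F)$'' of Definition \ref{D-Sigma-F} cases $(i)$ and $(ii)$, combined with the fact that $reset(F_\ell)=0$ throughout (Lemma \ref{lemma:F} $(ii)$), to ensure that the lower bits are not disturbed and that the invariant $\aaa^1_{i_q,j}\cap B(u_\ell)=\emptyset$ for all $j\in[r]$ is preserved by every mid-phase update.
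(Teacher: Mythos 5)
Your overall skeleton — induction on $\ell$, splitting on $d_{\ell-1}=L$ versus $d_{\ell-1}=R$, and identifying the first-$\bbb^1_{i_q}$-edge and the final $\aaa^1_{i_q}$-blocker as the two transitions that move the tree to the next phase via Definition~\ref{D-Sigma-F}, Lemma~\ref{L-Sigma-optimal} and Lemma~\ref{L-make_switch} — is the right picture, and it does match the paper's approach. But the decomposition ``prove $(i)$ first, then induct on $(ii)$--$(iv)$'' has a genuine gap.

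The second inequality in $(i)$, namely $\sigma_{P'}(\aaa^1_{i_q}) \le \sigma_{P'}(\bbb^1_{i_{q+1}})$, is \emph{not} a consequence of Definition~\ref{D-canonical} alone, and the contradiction you claim does not materialize. Definition~\ref{D-canonical}~$(iv)$ places both the first $\bbb^1_{i_{q+1}}$-edge and the $\aaa^1_{i_q}$-blocking edge in $\mathcal{R}_{P,S}$, but imposes no relative ordering between them; an abstract computation path that interleaves them in the wrong order would still satisfy every clause of Definition~\ref{D-canonical}. The reason such a path cannot appear in $T$ is precisely the tree-structure invariant you are trying to derive: during the $\aaa_q$-phase the inductive hypothesis gives $bit_{i_{q+1}}(F(u_\ell),B(u_\ell))=1$, and together with $\bbb^1_{i_{q+1}}\subseteq F(u_\ell)$ (Lemma~\ref{lemma:F}~$(i)$) this forces $\bbb^1_{i_{q+1}}\subseteq B(u_\ell)$, so no edge of $\bbb^1_{i_{q+1}}$ lies in $F(u_\ell)\setminus B(u_\ell)$ and hence cannot be $e_\ell$. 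In other words, $(i)$ is inextricable from $(ii)$ and $(iii)$; the paper proves all three jointly by induction, and your separation of $(i)$ breaks the logical dependency.

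The same dependency is quietly used again inside your inductive step. When $d_{\ell-1}=R$ you write ``when $e_{\ell-1}\in\bbb^1_{i_q}$ is the first $\bbb^1_{i_q}$-edge removed \ldots; when $e_{\ell-1}$ is the final blocker of $\aaa^1_{i_q}$ \ldots'' as though the level $i_q$ of the picked edge is already known to match the phase. Establishing that the picked edge is from the expected level (i.e., ruling out $e_\ell \in \bbb^1_{i_{q'}}$ or $\aaa^1_{i_{q'},j}$ with the wrong $q'$) is exactly the content of the paper's Cases~2--5 and again needs the tree-structure hypothesis, which you have not carried into the argument. Finally, a smaller inconsistency: you invoke ``$reset(F(u_{\ell-1})\setminus\{e_{\ell-1}\})=0$'' as a blanket consequence of Lemma~\ref{lemma:F}, but Lemma~\ref{lemma:F}~$(ii)$ only gives $reset(F_\ell)=0$ for sets on the path; when $e_{\ell-1}$ is the final $\aaa^1_{i_q}$-blocker one has $reset(F(u_{\ell-1})\setminus\{e_{\ell-1}\})=i_q>0$, which you then (correctly) rely on two sentences later, contradicting the earlier claim.
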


\begin{proof}
We start by partitioning the indices $\ell \in \{0,\dots,k\}$ into
phases. We later show that the phases alternate as described by the
lemma. Let $\mathcal{R} = \{\sigma_P(\aaa^1_{i_{q}}),
\sigma_P(\bbb^1_{i_q}) \mid q\in [p]\}$ be the
set of critical moments when we transition from one phase to
another. We say that $\ell$ is in the $\bbb_q$-phase if the largest
element of $\mathcal{R}$ smaller than $\ell$ is
$\sigma_P(\aaa^1_{i_{q-1}})$, i.e., we later prove that this phase
leads to an edge from $\bbb^1_{i_q}$ being picked. 
If no element of $\mathcal{R}$ is smaller than $\ell$ we say that
$\ell$ is in the $\bbb_1$-phase. Similarly, we say
that $\ell$ is in the $\aaa_q$-phase if the largest
element of $\mathcal{R}$ smaller than $\ell$ is
$\sigma_P(\bbb^1_{i_{q}})$.

We prove $(i)$, $(ii)$, and $(iii)$ jointly by induction in
$\ell$. For $\ell = 0$, $\ell$ is in the $\bbb_1$-phase, $F(u_0) = E$,
and $B(u_0) = B_0$, a tree consisting entirely of zero-edges. Hence,
$(ii)$ is satisfied for $\ell = 0$. Assume that the lemma is satisfied
for some given $\ell < k$. To prove that the lemma also holds for
$\ell+1$ we consider five cases:
\begin{enumerate}
\item
$d_\ell = L$.
\item
$d_\ell = R$, $e_\ell \in \bbb^1_{i_{q'}}$, and
$\ell$ is in the $\bbb_q$-phase, for some $q,q' \in [p]$.
\item
$d_\ell = R$, $e_\ell \in \aaa^1_{i_{q'},j}$, and
$\ell$ is in the $\bbb_q$-phase, for some $q,q' \in [p]$ and $j \in
  [r]$.
\item
$d_\ell = R$, $e_\ell \in \bbb^1_{i_{q'}}$, and
$\ell$ is in the $\aaa_q$-phase, for some $q,q' \in [p]$.
\item
$d_\ell = R$, $e_\ell \in \aaa^1_{i_{q'},j}$, and
$\ell$ is in the $\aaa_q$-phase, for some $q,q' \in [p]$ and $j \in
  [r]$.
\end{enumerate}

\noindent
\textbf{Case 1:}
Assume that $d_\ell = L$. 
Recall from
Definition \ref{D-canonical} that $d_\ell = R$ if and only if $\ell
\in \mathcal{R}_{P,S}$ where $\mathcal{R} \subseteq
\mathcal{R}_{P,S}$. For the case when 
$d_\ell = L$ we therefore have $\ell \not\in \mathcal{R}$, which means
that the phase did not change from $\ell$ to $\ell+1$. We also
have $B(u_{\ell+1}) = B(u_\ell)$. Recall that $F(u_{\ell+1}) =
F(u_{\ell}) \setminus \{e_\ell\}$. Since $e_\ell \not\in B(u_\ell)$
we have $bit_i(F(u_\ell) \setminus \{e_\ell\}, B(u_\ell)) = 1$ if 
$bit_i(F(u_\ell), B(u_\ell)) = 1$, and $bit_i(F(u_\ell) \setminus
\{e_\ell\}, B(u_\ell)) = 0$ if $bit_i(F(u_\ell), B(u_\ell)) =
0$. This completes the induction step.

\medskip\noindent
\textbf{Case 2:}
Let $q,q' \in [p]$, and assume that $d_\ell = R$, that $e_\ell \in
\bbb^1_{i_{q'}}$, and that $\ell$ is in the $\bbb_q$-phase. 
For all $q'' < q-1$ we have $bit_{i_{q''}}(F(u_\ell),B(u_\ell)) 
= 1$, and since Lemma \ref{lemma:F} shows that $\bbb^1_{i_{q''}}
\subseteq F(u_\ell)$, it follows that $e_\ell \not\in
\bbb^1_{i_{q''}}$. Since $\bbb^1_{i_{q-1}} \subseteq B(u_\ell)$ it must
then be the case that $q'\ge q$. Since $P'$ satisfies Definition
\ref{D-canonical} $(i)$ it follows
that $q' = q$. Hence, $\ell = \sigma_P(\bbb^1_{i_q})$, and we
transition to the $\aaa_q$-phase. 

Let $B = B(((u_\ell)_L)^*)$ be
the tree returned by the first recursive call of the $\RandomFacet$
algorithm at $u_\ell$. Since $P'$ satisfies Definition
\ref{D-canonical} $(iii)$, $F(u_\ell) \setminus \{e_\ell\}$ is
functional, and Lemma \ref{L-Sigma-optimal} shows that $B
\subseteq \BB_{F(u_\ell) \setminus \{e_\ell\}}$. From Lemma
\ref{lemma:F} we know that $reset(F(u_\ell) \setminus \{e_\ell\}) =
0$. Lemma \ref{L-make_switch} then
shows that $e_\ell$ is an improving switch w.r.t. $B$. 
Moreover, since $reset(F(u_\ell) \setminus \{e_\ell\}) = 0$ we get
from Definition \ref{D-Sigma-F} that $B[e_\ell]$ has the form
described in $(iii)$. Note in particular that $\bbb^1_{i_q}
\not\subseteq F(u_\ell) \setminus \{e_\ell\}$.

\medskip\noindent
\textbf{Case 3:}
Let $q,q' \in [p]$ and $j \in [r]$, and assume that $d_\ell = R$, that
$e_\ell \in \aaa^1_{i_{q'},j}$, and that $\ell$ is in the
$\bbb_q$-phase. From the definition of $\mathcal{R}_{P,S}$ it follows
that $\aaa^1_{i_{q'}} \not\sqsubseteq F(u_\ell) \setminus \{e_\ell\}$.
We first prove that $q' = q-1$, which implies that $\ell+1$ is also in
the $\bbb_q$-phase. For $q'' < q-1$ we have
$bit_{i_{q''}}(F(u_\ell), B(u_\ell)) = 1$, and Lemma \ref{lemma:F}
shows that $\aaa^1_{i_{q''}} \sqsubseteq F(u_\ell)$. 
Since $P'$ satisfies Definition \ref{D-canonical} $(ii)$ it then
follows that $q' = q-1$. Hence, we have $\aaa^1_{i_{q-1}}
\not\sqsubseteq F(u_\ell) \setminus \{e_\ell\}$ and since
$\bbb^1_{i_{q-1}} \subseteq B(u_\ell)$ we get
$reset(F(u_\ell) \setminus \{e_\ell\}) = i_{q-1}$.

The remainder of the proof is similar to the last half of the proof of
case 2. Let $B = B(((u_\ell)_L)^*)$ be
the tree returned by the first recursive call of the $\RandomFacet$
algorithm at $u_\ell$. Since $P'$ satisfies Definition
\ref{D-canonical} $(iii)$, $F(u_\ell) \setminus \{e_\ell\}$ is
functional, and Lemma \ref{L-Sigma-optimal} shows that $B
\subseteq \BB_{F(u_\ell) \setminus \{e_\ell\}}$. Since
$reset(F(u_\ell) \setminus \{e_\ell\}) = i_{q-1}$,  
Lemma \ref{L-make_switch}
shows that $e_\ell$ is an improving switch w.r.t. $B$. 
Moreover, we get
from Definition \ref{D-Sigma-F} that $B[e_\ell]$ has the form
described in $(ii)$.

\medskip\noindent
\textbf{Case 4:}
Let $q,q' \in [p]$, and assume that $d_\ell = R$, that $e_\ell \in
\bbb^1_{i_{q'}}$, and that $\ell$ is in the $\aaa_q$-phase. 
Since for all $q'' \ne q$ we have
$bit_{i_{q''}}(F(u_\ell),B(u_\ell)) = 1$ and, by Lemma 
\ref{lemma:F}, $\bbb^1_{i_{q''}} \subseteq F(u_\ell)$ such that 
$\bbb^1_{i_{q''}} \subseteq B(u_\ell)$, we must have $q' = q$.
In particular, $\ell+1$ is also in the $\aaa_q$-phase. The remainder
of the proof is the same as the last half of the proof of case 2.

\medskip\noindent
\textbf{Case 5:}
Let $q,q' \in [p]$ and $j \in [r]$, and assume that $d_\ell = R$, that
$e_\ell \in \aaa^1_{i_{q'},j}$, and that $\ell$ is in the
$\aaa_q$-phase.
Since for all $q'' \ne q$ we have
$bit_{i_{q''}}(F(u_\ell),B(u_\ell)) = 1$ and, by Lemma 
\ref{lemma:F}, $\aaa^1_{i_{q''}} \sqsubseteq F(u_\ell)$ such that 
$\aaa^1_{i_{q''}} \sqsubseteq B(u_\ell)$, we must have $q' = q$.
In particular, $\ell= \sigma_P(\aaa^1_{i_q})$ such that $\ell+1$ is in
the $\bbb_{q+1}$-phase. The remainder
of the proof is the same as the last half of the proof of case 3.

\medskip

It remains to prove $(iv)$: $P \in T$. Per definition we have $P' \in
T$, so it suffices to show that $u_k$ has a child in direction $d_k$.
For $d_k = L$ we must show that there is a vertex from which more than
one edge is available in $F(u_\ell)$. This follows, for instance, from
the fact that $F(u_\ell)$ is functional such that any $w_i$ vertex,
for $i\in [n]$, has at least two out-going edges in $F(u_\ell)$. When
$d_k = R$ we proved in cases 2 to 5 that $e_\ell$ is an improving
switch with respect to $B = B(((u_\ell)_L)^*)$, which shows that
$u_k$ has a right child.

\end{proof}

\begin{lemma}\label{lemma: good 3} 
$\Pr[Good_1] \ge \frac{1}{p!}$.
\end{lemma}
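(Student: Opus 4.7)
The strategy is a symmetry argument. The $p$ sets $\bbb^1_{i_1},\ldots,\bbb^1_{i_p}$ are pairwise disjoint, each of cardinality $rs$, and each consists entirely of one-edges, hence disjoint from the initial zero-edge tree $B_0$. By Lemma \ref{lemma:F} $(i)$, $\bbb^1_{i_q} \subseteq F(u)$ at every node $u$ of $P'$; by Lemma \ref{lemma:policies}, for every $u$ preceding the first pick from $\bbb^1_{i_q}$ we have $bit_{i_q}(F(u),B(u)) = 0$, so $\bbb^1_{i_q} \cap B(u) = \emptyset$. Thus, before its first hit, each group $\bbb^1_{i_q}$ sits entirely in $F(u) \setminus B(u)$, the set from which $\RandomFacet$ samples uniformly.

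From this I will establish by induction on $\ell$ that, conditional on the history of $P_S(T)$ just before the $\ell$-th first pick from a $\bbb^1$-group and on this $\ell$-th first hit actually occurring, the identity of the hit group is uniform over the $p - \ell + 1$ as-yet-unhit groups. The key step is a case analysis showing that intermediate picks which are not first hits of unhit groups --- edges of an already-hit $\bbb^1_{i_{q'}}$, of $\aaa^1$, or of other multi-edges or auxiliary gadgets --- preserve the symmetric count: a left turn removes only the picked edge, which by hypothesis does not lie in any unhit $\bbb^1_{i_q}$; and a right turn pivots the picked edge into the current tree and pivots out an incident tree edge, neither of which can lie in any unhit $\bbb^1_{i_q}$ (the picked edge by hypothesis, the removed tree edge because $B \cap \bbb^1_{i_q} = \emptyset$). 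Hence each unhit group continues to contribute exactly $rs$ edges to $F \setminus B$ at every moment before the $\ell$-th first hit, and uniform sampling from $F \setminus B$ selects each unhit group with equal probability as the next first hit.

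Let $G \subseteq Good_1$ be the event that the sequence $\pi$ of first-hit groups along $P_S(T)$ is an initial segment of $(1,2,\ldots,p)$. The inclusion $G \subseteq Good_1$ holds because on $G$ either $e_k$ is not a first hit of any $\bbb^1$-group (so no $Bad_1(q)$ triggers), or $e_k$ is a first hit of some $\bbb^1_{i_{q^*}}$ with $q^* = N$ (the total number of first hits), in which case all $\bbb^1_{i_q}$ with $q < q^*$ were hit earlier. By the uniformity established above, $\PPr{\pi = (1,\ldots,j), N = j} = \PPr{N = j}\cdot(p-j)!/p! \ge \PPr{N = j}/p!$ for every $j \in \{0,1,\ldots,p\}$, and summing yields
\[
\PPr{Good_1} ~\ge~ \PPr{G} ~=~ \sum_{j=0}^{p} \PPr{\pi = (1,\ldots,j), N = j} ~\ge~ \sum_{j=0}^{p} \PPr{N = j}/p! ~=~ \frac{1}{p!}\;.
\]
The main obstacle is making the symmetry rigorous across the possibly many non-first-hit picks that can occur between consecutive first hits, which requires carefully tracking how $F$ and $B$ evolve; this is essentially the case analysis sketched in the second paragraph.
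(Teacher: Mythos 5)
Your intermediate claim that ``before its first hit, each group $\bbb^1_{i_q}$ sits entirely in $F(u) \setminus B(u)$'' at \emph{every} node $u$ along $P_S(T)$ is false, and the supporting case analysis rests on a mistaken model of the algorithm. Immediately after the first hit of $\bbb^1_{i_{q'}}$ for any $q' < q$, the path enters the $\aaa_{q'}$-phase, and Lemma~\ref{lemma:policies}~$(iii)$ gives $bit_{i}(F(u_\ell),B(u_\ell)) = 1$ for all $i \neq i_{q'}$; combined with $\bbb^1_{i_q}\subseteq F(u_\ell)$ from Lemma~\ref{lemma:F}, the definition of $bit_{i_q}=1$ forces $\bbb^1_{i_q}\subseteq B(u_\ell)$ --- the whole unhit group is \emph{inside} the tree, not outside it. The reason your case analysis misses this is that a right turn does not ``pivot the picked edge into the current tree and pivot out an incident tree edge.'' In the computation tree, $B(u_R)=B((u_L)^*)[e(u)]$, where $(u_L)^*$ is the rightmost leaf of the left subtree: the tree at the right child is the optimal tree for the restricted subgraph (pivoted once), which can differ from $B(u)$ at many vertices and, in particular, absorbs whole $\bbb^1$-groups on every right turn from a $\bbb$-phase into an $\aaa$-phase.

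The fix, and what the paper actually does, is to require the symmetric count only \emph{at the moment of the pick}, and to argue separately that the first pick from any $\bbb^1_{i_{q'}}$ with $q'\ge q$ cannot occur in an $\aaa$-phase --- precisely because there the unhit groups are inside $B$ and therefore not in $F\setminus B$. Lemma~\ref{lemma:policies}~$(ii)$ then gives, at the pick moment (which after conditioning on $Good_1(1),\dots,Good_1(q-1)$ must lie in the $\bbb_q$-phase), that $\bbb^1_{i_{q'}}\cap B(u_\ell)=\emptyset$ for all $q'\ge q$, and equal cardinalities yield $\Pr[Good_1(q)\mid Good_1(1),\dots,Good_1(q-1)]\ge\frac{1}{p-q+1}$; the chain rule gives $\Pr[Good_1]\ge\frac{1}{p!}$ directly. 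This also sidesteps your factorization $\Pr[\pi=(1,\dots,j),\,N=j]=\Pr[N=j]\,(p-j)!/p!$, which you assert without justification --- the stopping index $N$ and the identities of the hit groups are not obviously independent, since $Bad_2$ and $Bad_3$ can be triggered at phase-specific moments that in principle depend on which bit is being processed.
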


\begin{proof}
Let $Good_1(q)$ be the complement of the event $Bad_1(q)$ for every $q
\in [p-1]$. Observe first that:
\[
\PPr{Good_1} ~=~ \prod_{q=1}^{p-1} \PPr{Good_1(q) \mid
  Good_1(1),\dots,Good_1(q-1)}\;.
\]
We will prove that $\Pr[Good_1(q) \mid Good_1(1),\dots,Good_1(q-1)]
\ge \frac{1}{p-q+1}$ for every $q \in [p-1]$, from which it follows
that:
\[
\PPr{Good_1} ~\ge~ \prod_{q=1}^{p-1} \frac{1}{p-q+1} ~=~ \frac{1}{p!}\;.
\]

Let $q \in [p-1]$ be given, let $T$ be a computation tree
generated by $\RandomFacet(E,B_0)$, and let $u_0,\dots,u_{k+1}$ be
the nodes on the path $P_S(T) = \langle
(e_0,d_0),(e_1,d_1),\ldots,(e_k,d_k) \rangle$. Let $\ell$ be the index
of the first occurence of an edge picked from $\bbb^1_{i_{q'}}$ along
$P_S(T)$ for some
$q' \ge q$, i.e., $\ell = \min \{\sigma_{P_S(T)}(\bbb^1_{i_{q'}}) \mid
q' \ge q\}$. If no such edge was picked along 
$P_S(T)$ then we get the event $Good_1(q)$, and we may therefore
assume that $\ell$ exists. The index $\ell$ must be part of a
$\bbb$-phase, since, according to Lemma \ref{lemma:policies}, the edge
$e_\ell$ would otherwise be part of the current policy $B(u_\ell)$.
In fact, since we condition on $Good_1(1),\dots,Good_1(q-1)$, $\ell$ must
be part of the $\bbb_q$-phase.
From Lemma \ref{lemma:policies} we then know that $\bbb^1_{i_{q'}}
\cap B(u_\ell) = \emptyset$ for all $q' \ge q$. Since the sets
$\bbb^1_{i_{q'}}$ all have the same size, the probability that the
edge picked at $u_\ell$ was
from $\bbb^1_q$ is
$\frac{1}{p-q+1}$. In this case we again get the event $Good_1(q)$.
\end{proof}

\begin{lemma}\label{lemma:bad_2} For every $i \in [n]$ we have 
\[
\Pr[Bad_2(i) \mid Good_1] ~\le~ \frac{(r!)^2}{(2r)!} ~<~ 2^{-r}\;.
\]
\end{lemma}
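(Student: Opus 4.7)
The plan is to reduce $Bad_2(i)$ to an ordering event within $Q := \bbb^1_i \cup \bigcup_{j\in[r]}\aaa^1_{i,j}$ and bound it by a uniform-permutation argument. By the definitions of $\sigma_{P_S(T)}(\bbb^1_i)$ and $\sigma_{P_S(T)}(\aaa^1_i)$, the event $Bad_2(i)$ is contained in the event $A(i)$ that along $P_S(T)$ the first picked edge from $\bbb^1_i$ (if any) comes strictly after the first picked edge from each $\aaa^1_{i,j}$, $j\in[r]$. So it suffices to show $\Pr[A(i)\mid Good_1]\le (r!)^2/(2r)!$.

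The core claim is that, conditional on $Good_1$, the relative order of $Q$-picks made along $P_S(T)$ is uniform among permutations of the picked subset of $Q$. At each node $u_\ell$ of $P_S(T)$ the algorithm picks $e_\ell$ uniformly from $F(u_\ell)\setminus B(u_\ell)$, so conditional on $e_\ell\in Q$, it is uniform over $Q\cap (F(u_\ell)\setminus B(u_\ell))$. Using the phase structure of Lemma~\ref{lemma:policies}, throughout the window in which $Q$-edges for bit $i$ may be picked, the set of available $Q$-edges equals $Q$ minus the already-picked $Q$-edges: $B_0$ contains no $Q$-edges; the right-pivots along $P_S(T)$ only concern edges in $\bbb^1_{i_q}$ and $\aaa^1_{i_q,j}$ for $i_q\in S$; and the trees returned by the recursive left-subtree calls are exactly the $\BB_F$-trees of Definition~\ref{D-Sigma-F}, so they do not secretly move unpicked $Q$-edges into $B$ in that window. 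Finally, $Good_1$ constrains only the order of $\bbb^1_{i_q}$-picks for $i_q\in S$ and the identity of the terminal edge, and this does not skew the uniform distribution on the $Q$-order.

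Once uniformity is established, I would assign each $e\in Q$ an independent $\mathrm{Unif}[0,1]$ key $K_e$ so that $Q$-picks arrive in increasing key order. With $X=\min_{e\in\bbb^1_i}K_e$ and $Y_j=\min_{e\in\aaa^1_{i,j}}K_e$, the event $A(i)$ becomes $\{X>Y_j\text{ for all }j\in[r]\}$, whose probability, after the substitution $u=(1-x)^s$, is
\begin{equation*}
\int_0^1 rs(1-x)^{rs-1}\bigl(1-(1-x)^s\bigr)^r\,dx
~=~ r\int_0^1 u^{r-1}(1-u)^r\,du
~=~ \frac{(r!)^2}{(2r)!}.
\end{equation*}
The inequality $\frac{(r!)^2}{(2r)!}=1/\binom{2r}{r}<2^{-r}$ follows from $\binom{2r}{r}>2^r$ for $r\ge 2$ (the $r=1$ case gives equality).

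The main obstacle will be the uniformity step: the trees $B(u_\ell)$ along $P_S(T)$ evolve nontrivially through the recursive left-subtree computations, so one must carefully use Lemma~\ref{lemma:policies} to rule out that an unpicked $Q$-edge sits inside $B(u_\ell)$ before its own pick. Once this invariant is in place, the combinatorial calculation above yields the claimed bound.
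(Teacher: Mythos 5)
Your proposal is essentially correct and follows the same skeleton as the paper's argument: reduce $Bad_2(i)$ to an ordering event among the first picks from $\bbb^1_i$ and the $\aaa^1_{i,j}$'s, then use Lemma~\ref{lemma:policies} to show that those first picks can only happen while all the relevant edges sit in $F\setminus B$ (since during an $\aaa$-phase every $Q$-edge for a bit $i\ne i_q$ is in the tree, and $Bad_2(i)$ forces $i\ne i_q$ there because $\bbb^1_{i_q}$ was already hit). Where you diverge is in the last, purely combinatorial step: the paper chains conditional probabilities over the $r$ ``new $\aaa^1_{i,j}$-hit'' events, bounding each by $\frac{(r-w+1)s}{rs+(r-w+1)s}$ and multiplying to get $\prod_{w=1}^r\frac{w}{r+w}=\frac{(r!)^2}{(2r)!}$; you instead invoke a global uniform-order statement on the picked subset of $Q$ and then compute $\Pr[\min_{\bbb^1_i}>\max_j\min_{\aaa^1_{i,j}}]$ via a beta integral. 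The two routes give the same number and both rest on the same underlying fact that each $Q$-pick is uniform over the still-available $Q$-edges; the paper's step-by-step conditioning is slightly more economical because it never needs to formalize a global ``the relative order of picked $Q$-edges is a uniform permutation'' statement, and it handles early termination of the computation path automatically (you need to observe separately that early termination only kills the event $Bad_2(i)$). Your uniformity claim does need the extra care you flagged: the window characterization must be stated as ``whenever $bit_i(F,B)=0$'' and one has to verify, as you sketch, that resets during $\aaa$-phases place $Q$-edges in $B$ only temporarily and return them to availability afterwards. Finally, your remark that $\frac{(r!)^2}{(2r)!}=2^{-r}$ exactly when $r=1$ is a valid (if minor) sharpening of the paper's strict inequality claim; since the construction uses $r=\Theta(\log n)\ge 2$, the strict inequality holds where it matters.
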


\begin{proof} 
Let $T$ be a randomly generated computation tree,
let $P = P_S(T) = \langle(e_1,d_1),(e_2,d_2),\ldots,(e_k,d_k)\rangle$,
and let $u_0,\dots,u_{k+1}$ be the nodes on $P$ in $T$.

The event $Bad_2(i)$ occurs if $\sigma_{P}(\aaa^1_{i,j}) \le k$, for
every $j \in [r]$, and $\sigma_{P}(\bbb^1_{i}) = \infty$. Hence, 
when $Bad_2(i)$ occurs we may assume that there exists indices $\ell_j
= \sigma_{P}(\aaa^1_{i,j}) \le k$ for all $j \in [r]$.
For all $\ell_j$ we have $\bbb^1_i,\aaa^1_{i,j} \subseteq
F(u_{\ell_j})$. Hence, Lemma \ref{lemma:policies} shows that the index
$\ell_j$ must be part of a $\bbb$-phase, since the edge
$e_{\ell_j}$ would otherwise be part of the current tree $B(u_{\ell_j})$.
It follows from Lemma \ref{lemma:policies} that $\bbb^1_i \cap
B(u_{\ell_j}) = \emptyset$ for all $j \in [r]$. Since $|\bbb^1_i| =
rs$ and $|\aaa^1_{i,j}| = s$, for all $j\in[r]$,
it follows that the $w$-th time, for $w \in [r]$, we pick an
edge from a new set $\aaa^1_{i,j}$ without picking an edge from
$\bbb^1_i$, this happens with probability at most
$\frac{(r-w+1)s}{rs + (r-w+1)s} = \frac{r-w+1}{2r-w+1}$. 
Note that conditioning on $Good_1$ does not affect this probability.
Hence,
\[
\Pr[Bad_2(i) \mid Good_1] ~\le~ \prod_{w=1}^r \frac{r-w+1}{2r-w+1} ~=~
\prod_{w=1}^r \frac{w}{r+w} ~=~ \frac{(r!)^2}{(2r)!} ~<~ 2^{-r}\;.
\]
The last inequality follows from a simple proof by induction.
\end{proof}

Recall that $|S| = p$.

\begin{lemma}\label{lemma: bad 1} For every multi-edge $\eee\in \mathcal{M}$, we
have $\Pr[Bad_3(\eee) \mid Good_1] \le n^{-2}$, assuming
  $s=2p(r+1)+t$ and $t = 15 \lceil \log n \rceil = \Theta(\log n)$.
\end{lemma}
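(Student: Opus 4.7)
The event $Bad_3(\eee)$ occurs exactly when all $t$ copies of the multi-edge $\eee \in \mathcal{M}$ have been removed from $F$ by the end of the path $P_S(T)$. Since $\eee$ is disjoint from every $\bbb^1_{i_q}$ and $\aaa^1_{i_q,j}$ set, and Definition \ref{D-canonical} $(iv)$ specifies that right-turns along a canonical-extending path occur only at picks from those sets, each pick from $\eee$ along $P_S(T)$ is necessarily a left-turn that truly removes the edge. Consequently, $Bad_3(\eee)$ is exactly the event that at least $t$ of the picks along $P_S(T)$ land in $\eee$, and the plan is to bound this probability by a Chernoff-type concentration argument.

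At each position $\ell$ the algorithm picks uniformly at random from $F(u_\ell)\setminus B(u_\ell)$. Since left-turns remove exactly one edge from $F$, right-turns leave $F$ unchanged, and the tree $B$ contains at most one copy of each multi-edge (at most one outgoing edge per vertex), we have $|F(u_\ell)\setminus B(u_\ell)| \ge M - n - \ell$, and the conditional probability of picking from $\eee$ at step $\ell$ is at most $t/(M-n-\ell)$. The bulk of the work is to bound the length of $P_S(T)$. By Lemma \ref{lemma:policies}, conditional on $Good_1$ the path decomposes into $2p$ alternating phases $(\bbb_1,\aaa_1,\ldots,\bbb_p,\aaa_p)$. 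At each step within a $\bbb_q$-phase, the probability of ending the phase (picking some edge from $\bbb^1_{i_q}$, which has $rs$ edges all outside $B$) is $\Omega(rs/M)$; within an $\aaa_q$-phase, ending requires breaking each of the $r$ copies $\aaa^1_{i_q,j}$, a coupon-collector variant taking $O(r\log r)$ picks from $\aaa^1_{i_q}$ in expectation with sub-geometric tails. The assumption $s = 2p(r+1) + t$ ensures in particular that each $\aaa^1_{i,j}$ has enough edges to remain non-exhausted throughout the path with high probability, so that this phase decomposition applies. Summing the geometric/negative-binomial tails over the $2p$ phases gives that the path length satisfies $L \le L_0 := n^{3/2}\log^3 n$ with probability $1 - n^{-\omega(1)}$.

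Conditioned on $L \le L_0$, let $X$ be the number of picks from $\eee$. Each of the at most $L_0$ picks has conditional probability at most $2t/M$ of landing in $\eee$ (using $|F \setminus B| \ge M/2$ for $L_0 \le M/2$), so $X$ is stochastically dominated by a $\mathrm{Binom}(L_0, 2t/M)$ variable with mean $\mu \le 2L_0 t/M = O(\log\log n)$, using $M = \Theta(n^{3/2}\log^3 n)$ and $t = \Theta(\log n)$. Since $\mu \ll t$, the Chernoff bound yields $\Pr[X \ge t \mid L \le L_0] \le (e\mu/t)^t = n^{-\Omega(\log\log n)}$, which is below $n^{-2}$ for $t = 15 \lceil\log n\rceil$ and $n$ sufficiently large; combining with the path-length tail bound gives $\Pr[Bad_3(\eee) \mid Good_1] \le n^{-2}$. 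The main obstacle is justifying that conditioning on $Good_1$ does not materially inflate per-step probabilities for multi-edge picks or phase-length distributions; this holds because $Good_1$ concerns only the relative order of picks from the $\bbb^1_{i_q}$ sets, which are disjoint from every multi-edge in $\mathcal{M}$ and from each $\aaa^1_{i_q,j}$, so the conditional marginals used above go through with unchanged constants.
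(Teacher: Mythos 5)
Your approach is genuinely different from the paper's, and as written it has a quantitative gap that breaks the final Chernoff step.

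The paper never bounds the length of $P_S(T)$. Instead it sets up a race between $\eee$ and the family $\mathcal{A} := \{\aaa^1_{i_q,j} \mid q\in[p], j\in[r]\}\cup\{\bbb^1_{i_q}\mid q\in[p]\}$: Lemma~\ref{lemma:policies} guarantees that at every step of the path there is at least one set $A\in\mathcal{A}$ that has not yet been touched, is fully contained in $F(u_\ell)$, and is disjoint from $B(u_\ell)$. The path is a canonical path once an edge has been picked from each of the $p(r+1)$ sets in $\mathcal{A}$, so $Bad_3(\eee)$ requires $t$ picks from $\eee$ to occur before $p(r+1)$ "new-set" picks. Restricting attention to the steps where the pick is from $\eee$ or from a fresh $\mathcal{A}$-set, each such step lands in $\eee$ with probability at most $t/(s+t)$, and there are at most $p(r+1)+t$ such steps. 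With $s = 2p(r+1)+t$ this is a Binomial with mean exactly $t/2$, and Chernoff gives $\Pr[X\ge t] \le (e/4)^{t/2} < n^{-2}$. Crucially, the number of "irrelevant" picks never enters the argument.

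Your route — bound the path length $L$, then treat each of the $L$ picks as landing in $\eee$ with probability $O(t/M)$ — is plausible in principle but fails as stated. You set $L_0 := n^{3/2}\log^3 n = \Theta(M)$, and then $\mu \le 2L_0 t/M = \Theta(t) = \Theta(\log n)$, not $O(\log\log n)$ as you claim; with $\mu=\Theta(t)$ the bound $\Pr[X\ge t\mid L\le L_0]\le(e\mu/t)^t$ is not small, so the conclusion does not follow. (Note also that $L\le M$ holds deterministically since the $e_\ell$ along a computation path are distinct, so the high-probability claim for $L_0 = \Theta(M)$ is vacuous.) To salvage your argument you would need the much sharper bound $L = O(n^{3/2}\log n)$, say, so that $\mu = O(1/\log n)$, but establishing such a bound with the $n^{-\omega(1)}$ tail you invoke requires substantially more care than the one-line phase sketch you give — in particular the $\aaa_q$-phase length in \emph{total} picks (not just picks landing in $\aaa^1_{i_q}$) is a factor $\approx M/(rs)$ larger than the coupon-collector count, and the resulting sum has to be tracked carefully to beat $\Theta(M/\log^2 n)$. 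The paper's race argument avoids all of this, which is precisely why it chose the parameter coupling $s = 2p(r+1)+t$.
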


\begin{proof}
Let $T$ be a randomly generated computation tree,
let $P = P_S(T) = \langle(e_0,d_0),(e_1,d_1),\ldots,(e_k,d_k)\rangle$,
and let $u_0,\dots,u_{k+1}$ be the nodes on $P$ in $T$.

Define the set of sets of edges:
\[
\mathcal{A} ~:=~ \{\aaa^1_{i_q,j} \mid q \in [p], j \in [r]\} \cup
\{\bbb^1_{i_q} \mid q \in [p]\}\;.
\]
Observe that if an edge from each of the sets in $\mathcal{A}$ is
picked along $P$, 
then $P$ is  $(i_1,i_2,\ldots,i_p)$-canonical and we do not get the
event $Bad_3(\eee)$. Observe also that $|\mathcal{A}| = p(r+1)$, and
that each of the sets in $\mathcal{A}$ contains at least $s$
edges. Furthermore, Lemma \ref{lemma:policies} shows that for every
index $\ell \in \{0,\dots,k\}$ there exists a set $A \in \mathcal{A}$
such that $\sigma_P(A) \ge \ell$, $A \subseteq F(u_\ell)$, and $A \cap
B(u_\ell) = \emptyset$. Indeed, in every $\bbb_q$-phase the set
$\bbb^1_{i_q}$ has this property, and in every $\aaa_q$-phase there
exists some $\aaa^1_{i_q,j}$, for $j\in [r]$, with this property.
Hence, there is always a set in $\mathcal{A}$ from which no edge has
previously been picked and for which all edges are available to be picked.

The event $Bad_3(\eee)$ occurs only if we pick $t$ edges from $\eee$
before picking edges $p(r+1)$ times from new sets from
$\mathcal{A}$. Every time we either pick an edge from $\eee$ or from a
new set from $\mathcal{A}$, the edge is picked from $\eee$ with
probability at most $t/(s+t)$. 
Note that conditioning on $Good_1$ does not affect this probability.
Let $X$ be the sum of $p(r+1)+t$ Bernoulli
trials, each with a `success' probability of $p=t/(s+t)$. $Bad_3(\eee)$
occurs only if $X \ge t$. We now apply
Chernoff. $\mu=\E{X}=\frac{t}{s+t}(p(r+1)+t)$. As
$s=2p(r+1)+t$, we then have $\mu=\frac{t}{2}$. Then, 
\[ 
\Pr[X \ge t] ~=~ \Pr[X \ge 2\mu] ~\le~ \left(\frac{{\rm e}}{4}\right)^\mu ~=~
\left(\frac{{\rm e}}{4}\right)^{t/2} ~<~ n^{-2}.
\] 
It is possible that there are fewer than $p(r+1)+t$ trials, which only
lowers the probability.
\end{proof}

We next prove Lemma~\ref{L-appear-S} by showing that:
\[
\PPr{ Good_1 }\, ( 1 - \PPr{ Bad_2 \mid Good_1 } - \PPr{ Bad_3 \mid
  Good_1})
~\ge~ \frac{1}{2\, p!} \;,
\]
where $p = \lfloor \sqrt{n}\rfloor$. In order for the proof to work we pick
$r=\log (4n) = \Theta(\log n)$, $t=15 \lceil \log n 
\rceil=\Theta(\log n)$, and, to satisfy
the assumption for Lemma~\ref{lemma: bad 1},
$s=2p(r+1)+t = \Theta(\sqrt{n}\log n)$.

\medskip\noindent
\textbf{Proof of Lemma \ref{L-appear-S}:\ }
Observe first that the events $Bad_2(i)$ and $Bad_2(i')$ are disjoint for
$i \ne i'$. Using Lemma \ref{lemma:bad_2}, it follows that:
\[
\PPr{ Bad_2 \mid Good_1} ~=~ \sum_{i \in [n] } \PPr{ Bad_2(i) \mid
  Good_1} ~\le~ n\,2^{-r} \le \frac{1}{4}\;.
\]

Similarly, the events $Bad_3(\eee)$ and $Bad_3(\eee')$ are
disjoint for $\eee \ne \eee'$.
The number of multi-edges is $|\mathcal{M}| = 
n(2rs + r + 3) = O(n^{3/2}\log^2 n)$. We, thus, get from
Lemma~\ref{lemma: bad 1} that:
\[
\PPr{ Bad_3 \mid Good_1} ~=~ \sum_{\eee \in \mathcal{M} } \PPr{ Bad_3(\eee) \mid
  Good_1} ~\le~ |\mathcal{M}|\,n^{-2} \le \frac{1}{4}\;.
\]

Using Lemma~\ref{lemma: good 3} it follows that:
\begin{align*}
\sum_{P \in canon_S(G)} \PPr{P \in T} &~=~
\PPr{ Good_1 }\, ( 1 - \PPr{ Bad_2 \mid Good_1 } - \PPr{ Bad_3 \mid
  Good_1})\\
&~\ge~ 
\frac{1}{p!} \left(1-\frac{1}{4}-\frac{1}{4}\right) ~=~ \frac{1}{2\,p!}\;.
\end{align*}
\qed

\section{Lower bound for $\RandomPerm$}\label{S-one-pass-variant}

We next prove a lower bound for the $\RandomPerm$
algorithm. Recall that $\RandomPerm$  receives, as a third
argument, a \emph{permutation function} $\sigma: E \to \NN$ that
assigns to each edge
of $E$ a \emph{distinct} natural number. See
Figure~\ref{F-RandomFacet} for a precise description of this
variant of the $\RandomFacet$ algorithm.
Instead of choosing a \emph{random}
edge~$e$ from $F \setminus B$, $\RandomPerm(F,B,\sigma)$
chooses the edge $e \in F \setminus \sigma$ with the \emph{smallest
  permutation index} $\sigma(e)$.

Let $G_{n, r, s, t} = (V,E,c)$ be defined as in Section~\ref{sec:construction}.
$n \in \NN$ is the number of bits
of the corresponding binary counter, and $r,s,t \in \NN$ are
parameters that will be specified by the analysis
to ensure that the expected number
of improving switches performed by the
$\RandomPerm$ algorithm is at least ${\rm e}^{\Omega(\sqrt{n})}$. We
show that it suffices to use $r = s =
t = \Theta(\log n)$.
Since the graph $G_{n,r,s,t}$ contains
$N=\Theta(nrs)=\Theta(n\log^2 n)$ vertices and
$M=\Theta(nrst)=\Theta(n\log^3 n)$ edges, we prove the following
theorem.

\begin{theorem}\label{thm:main1pi}
The expected number of switches performed by the
$\RandomPerm$ algorithm, when run on the graph $G_{n,r,s,t}$, where
$r = s = t = \Theta(\log n)$, with initial tree $B_0$, is at least ${\rm
  e}^{\Omega(\sqrt{n})} = {\rm e}^{\Omega(\sqrt{M}/\log^{3/2} M)}$,
where $M$ is the number of edges in $G_{n,r,s,t}$.
\end{theorem}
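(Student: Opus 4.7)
The plan is to reduce $\RandomPerm$ on $G_{n,r,s,t}$ to the one-permutation counter $\RandCountP$ of Figure~2, following the template from Section~4 but avoiding the computation-tree machinery needed for $\RandomFacet$. First, I would call a permutation $\sigma:E\to\NN$ \emph{well-behaved} if it satisfies two properties: (a) for every $i\in[n]$, there exists $j\in[r]$ such that $\min_{e\in\bbb^1_i}\sigma(e) < \min_{e\in\aaa^1_{i,j}}\sigma(e)$ (i.e., some edge of the $(b_i,w_{i+1})$-path is removed before any one copy of the $(a_i,b_i)$-path is exhausted); and (b) for every $i\in[n]$ and $j\in[r]$, $\min_{e\in\aaa^1_{i,j}}\sigma(e)$ is smaller than the $t$-th smallest value of $\sigma$ on every multi-edge $\eee\in\mathcal{M}$. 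Given a well-behaved $\sigma$, I define the \emph{induced permutation} $\hat\sigma$ of $[n]$ by ranking bits according to $\min_{e\in\bbb^1_i}\sigma(e)$, and the \emph{counter state} of a pair $(F,B)$ as $N(F,B)=\{i\in[n]\mid \bbb^1_i\subseteq F,\ \bbb^1_i\cap B=\emptyset\}$.

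The heart of the proof is an inductive lemma stating: for every well-behaved $\sigma$, every $F\subseteq E$ that is functional and ``consistent'' with a tree $B\subseteq\BB_F$ in the sense of the bit interpretation in Section~4, the number of improving switches performed by $\RandomPerm(F,B,\sigma)$ is at least $f^{1P}(N(F,B),\hat\sigma)$. The induction is on $|F|$, and I would split on the edge $e$ chosen at the top of the call, following the three cases outlined in Section~4. If $e\in\bbb^1_i$ with $\bbb^1_i\cap B=\emptyset$ and $\bbb^1_i\subseteq F$ (the critical case), Lemma~\ref{L-Sigma-optimal} and Lemma~\ref{L-make_switch} guarantee that the first recursive call on $F\setminus\{e\}$ fixes bit $i$ to $0$ (matching the recursion of $\RandCountP$ on $N\setminus\{i\}$), returns a tree $B'$ in which $e$ is an improving switch, and that after the pivot the second recursive call operates on a configuration whose counter state is exactly $N(F,B)\cap[i-1]$. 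The remaining two cases (the chosen edge already sits partially in $B$; or it lies on an $(a_i,b_i)$-path and disconnects $w_i$) are charged to a single subcall using the inductive hypothesis, and, thanks to well-behavedness, the third ``catch-all'' case only removes redundant multi-edges, so $N(F,B)$ is preserved.

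Next I would estimate the probability that a uniformly random $\sigma$ is well-behaved. Each condition in (a) fails with probability at most $(r!)^2/(2r)!<2^{-r}$ (same computation as in Lemma~\ref{lemma:bad_2}), and each condition in (b) fails with probability at most $n^{-2}$ by a Chernoff bound exactly as in Lemma~\ref{lemma: bad 1}, once $r=s=t=c\lceil\log n\rceil$ is chosen large enough. A union bound over the $O(n)$ conditions in (a) and the $O(nrs)=O(n\log^2n)$ conditions in (b) gives well-behavedness with probability at least $1/2$. Moreover, symmetry among the sets $\bbb^1_1,\dots,\bbb^1_n$ shows that conditional on $\sigma$ being well-behaved, the induced permutation $\hat\sigma$ is uniformly distributed on the symmetric group of $[n]$. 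Combining the inductive lemma with Lemma~\ref{lemma:counters} and Lemma~\ref{lemma:asymp} then yields
\[
\E{\#\text{switches}} \;\ge\; \tfrac{1}{2}\,\E_{\hat\sigma}\bigl[f^{1P}([n],\hat\sigma)\bigr] \;=\; \tfrac{1}{2}\,f(n) \;=\; 2^{\tilde\Omega(\sqrt{n})}.
\]
Since $M=\Theta(n\log^3n)$, this is $2^{\tilde\Omega(\sqrt{M}/\log^{3/2}M)}$, proving the theorem.

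The main obstacle is the bookkeeping in the inductive lemma. Unlike in the earlier MDP/parity-game constructions of \cite{FriedmannHansenZwick/SODA11,FriedmannHansenZwick/STOC11}, the path-gadget replacing $(b_i,w_{i+1})$ means that after the first recursive call fixes bit $i$ to $0$, the returned optimal tree contains only the $\bbb^1_i$-edges ahead of the removed edge $e$, not all of $\bbb^1_i$. One must therefore verify that ``bit $i$ is $1$'' in the sense of Definition~5.1 is still witnessed by $\BB_F$ after the subsequent pivot on $e$, and that the subsequent evolution of $B$ during the second recursive call fills in the missing $\bbb^1_i$-edges behind $e$ via improving switches (which, as the Section~4 sketch notes, occur precisely in the ``second-case'' situations where the lower bits are then reset). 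Verifying this, together with the analogous invariants for the $\aaa^1_{i,j}$-paths after a reset, is where the argument departs nontrivially from the earlier MDP analysis.
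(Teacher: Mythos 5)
Your overall strategy matches the paper exactly: define a \emph{well-behaved} permutation (your conditions (a) and (b) are precisely the paper's Definition~\ref{def:good}), derive the induced permutation $\hat\sigma$, prove an inductive lemma that $\RandomPerm$ performs at least as many switches as $\RandCount^{1P}$, show a random $\sigma$ is well-behaved with probability $\ge 1/2$, and combine with Lemmas~\ref{lemma:counters} and~\ref{lemma:asymp}. Two points need repair.

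First, the inductive lemma as you state it is too weak. You define the counter state $N(F,B)=\{i\mid\bbb^1_i\subseteq F,\ \bbb^1_i\cap B=\emptyset\}$ and induct ``on $|F|$''. But the second recursive call is $\RandomPerm(F,B'',\sigma)$ with the \emph{same} $F$, so $|F|$ alone cannot drive the induction; the paper's Lemma~\ref{lemma:technical_star} adds induction on $|N(F,p)|$ and $p$, plus \emph{backward} induction on $|\bbb^1_p\cap B|$ and $|\aaa^1_{p,j}\cap B|$ to handle the cases where only the second subcall is charged. More importantly, $N(F,B)$ is wrong as the carrier of the inductive claim: in the configuration right after the first $\bbb^1_p$-edge is pivoted in (paper's condition $(ii)$), the lower bits still satisfy $bit_i(F,B)=1$ and in particular $\bbb^1_i\cap B\ne\emptyset$, so your $N(F,B)$ excludes them — yet these are precisely the bits that must be reset and recounted to give $f^{1P}(N(F,p)\cap[p-1],\hat\sigma)$ further switches. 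The paper therefore parameterizes the lemma by a level $p$ and uses $N(F,p)=\{i<p\mid\bbb^1_i\subseteq F\}$ together with two alternative structural conditions $(i)$ and $(ii)$ on $B$ to track whether the recursion is in a ``counting'' or ``resetting'' phase. Your closing paragraph shows you see the path-gadget subtlety, but the fix requires this two-condition, $p$-indexed formulation, not a tweak of $N(F,B)$.

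Second, a minor misstatement: you claim condition (b) fails with probability $\le n^{-2}$ ``by a Chernoff bound exactly as in Lemma~\ref{lemma: bad 1}''. That Chernoff argument is specific to $\RandomFacet$ (where edges along a computation path are drawn sequentially). For $\RandomPerm$ the permutation is fixed up front, and the paper (Lemma~\ref{lemma: probability of good permutation}) simply computes $\Pr[\sigma(\aaa^1_{i,j})>\sigma(\eee)]=s!\,t!/(s+t)!$ directly and union-bounds. The conclusion is the same, but the tool is elementary combinatorics, not a concentration inequality.
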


We let $\Sigma$ be the set of all permutation functions for
$G_{n,r,s,t} = (V,E,w)$ with range $\{1,2,\ldots,|E|\}$.
As in Section \ref{S-lower-bound} it is helpful to define notation
that allows us to identify interesting events.
Let $\sigma \in \Sigma$ be a permutation function.
Recall that $\mathcal{M}$ is the set of multi-edges, with $\eee \in
\mathcal{M}$ being a set $\eee = \{e^1,e^2,\dots,e^t\}$ of $t$
identical edges. Define:
\[
\renewcommand{\arraystretch}{1.5}
\begin{array}{rrcl}
\forall i \in [n]: &
\sigma(\bbb^1_{i}) &\!=\!& \displaystyle\min_{j\in[rs]} ~ \sigma(b_{i,j}^1) \\
\forall i \in [n], \forall j \in [r]: &
\sigma(\aaa^1_{i,j}) &\!=\!& \displaystyle\min_{k\in[s]} ~ \sigma(a_{i,j,k}^1)\\
\forall i \in [n]: &
\sigma(\aaa^1_i) &\!=\!& \displaystyle\max_{j \in [r]} ~
\displaystyle\min_{k\in[s]} ~ \sigma(a_{i,j,k}^1) \\
\forall \eee \in \mathcal{M}: &
\sigma(\eee) &\!=\!& \displaystyle\max_{e \in \eee} ~ \sigma(e)
\end{array}
\]

\begin{definition}[\bf Induced permutation function]\label{def:induced}
For every permutation function $\sigma \in \Sigma$ we define the
\emph{induced permutation function}, $\hat{\sigma}: [n] \to [n]$, for
$\sigma$ to be the (unique) permutation function that satisfies
$\hat{\sigma}(i) < \hat{\sigma}(j)$ if and only if $\sigma(\bbb^1_i) <
\sigma(\bbb^1_j)$, for all $i,j \in [n]$.
\end{definition}

\begin{definition}[\bf Well-behaved permutation function]\label{def:good}
We say that $\sigma \in \Sigma$ is \emph{well-behaved} if and only if:
\begin{itemize}
\item[$(i)$]
$\sigma(\bbb^1_i) < \sigma(\aaa^1_i)$ for all $i \in [n]$.
\item[$(ii)$]
$\sigma(\aaa^1_{i,j}) < \sigma(\eee)$ for all $\eee \in \mathcal{M}$, $i \in
  [n]$, and $j\in [r]$.
\end{itemize}
\end{definition}


The requirements $(i)$ and $(ii)$ of Definition~\ref{def:good}
correspond to the requirements $(ii)$ and $(iii)$, respectively, of
Definition \ref{D-canonical}. Recall that the one-permutation randomized
counter, $\RandCount^{1P}$, takes as its second argument a
permutation function $\hat\sigma: [n] \to [n]$. Recall also that
$f^{1P}([n],\hat\sigma)$ is
the number of times $\RandCount^{1P}([n],\hat\sigma)$ sets a bit to 1, and
that $f^{1P}(n)$ is the expected value of $f^{1P}([n],\hat\sigma)$
when $\hat\sigma$ is
uniformly random. Let $g^{1P}(F,B,\sigma)$ be the number
of improving switches performed by
$\RandomPerm(F,B,\sigma)$, and let $g^{1P}(F,B)$ be the
expected number of improving switches performed by
$\RandomPerm(F,B,\sigma)$ when $\sigma \in \Sigma$ is picked
uniformly at random.

The following lemma shows that $\RandomPerm$ essentially
simulates $\RandCount^{1P}$ when run
on $G_{n,r,s,t}$. In particular, for $F=E$ and $p = n+1$, the
condition described by $(i)$ is satisfied for the initial
tree $B_0$, such that the lemma says that
$g^{1P}(E,B_0,\sigma) \ge f^{1P}([n],\hat\sigma)$.
The proof of the lemma is
very similar to the proof of Lemma \ref{lemma:policies}. Note also
that the two lemmas describe the same kind of trees.

\begin{lemma}\label{lemma:technical_star}
Let $\sigma \in \Sigma$ be a well-behaved permutation function, and
let $F \subseteq E$ satisfy $reset(F) = 0$. Furthermore, assume that
for all $\eee \in \mathcal{M}$ we have $e \in F$ where $\sigma(e) =
\sigma(\eee)$. In particular, $F$ is functional.
Let $p \in [n+1]$, define $N(F,p) := \{i \in [n] \mid i
< p \land \bbb^1_i \subseteq F\}$, and let $B$ be a tree. Assume
that one of the following two conditions are satisfied:
\begin{itemize}
\item[$(i)$]
\begin{itemize}
\item
For all $i > p$: $bit_i(F,B) = 1$.
\item
$\bbb^1_{p} \subseteq B$, and $\aaa^1_{i,j}\subseteq F$ where
$\sigma(\aaa^1_{i,j}) = \sigma(\aaa^1_{i})$.
\item
For all $i < p$: $bit_i(F,B) = 0$. Furthermore, if $\bbb^1_i
\subseteq F$ then $\aaa^1_{i,j}\subseteq F$ where
$\sigma(\aaa^1_{i,j}) = \sigma(\aaa^1_{i})$.
\end{itemize}
\item[$(ii)$]
\begin{itemize}
\item
For all $i \ne p$: $bit_i(F,B) = 1$.
\item
$\bbb^1_p \subseteq F$, $\aaa^1_{p,j} \subseteq F$ where
  $\sigma(\aaa^1_{p,j}) = \sigma(\aaa^1_{p})$, and $\aaa^1_{p,j}
  \cap B = \emptyset$ for all $j \in [r]$. Furthermore, 
$e \in B$ where $\sigma(e) = \sigma(\bbb^1_p)$.
\end{itemize}
\end{itemize}
Then $g^{1P}(F,B,\sigma) \ge f^{1P}(N(F,p),\hat\sigma)$.
\end{lemma}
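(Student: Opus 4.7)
I would prove the lemma by strong induction on a well-founded measure, for instance the lexicographic pair $(|F|, p)$. The base case is $N(F, p) = \emptyset$, where $f^{1P}(\emptyset, \hat\sigma) = 0$ and the inequality is trivial. Another base case is $F = B$, where $\RandomPerm$ returns immediately and the assumed structural invariants force $N(F, p) = \emptyset$.

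For the inductive step, let $e = \argmin_{e' \in F \setminus B} \sigma(e')$ be the edge picked at the top level of $\RandomPerm(F, B, \sigma)$. The algorithm first invokes $B' \gets \RandomPerm(F \setminus \{e\}, B, \sigma)$ and, if $e$ is improving with respect to $B'$, performs $B'' \gets B'[e]$ and then calls $\RandomPerm(F, B'', \sigma)$. I classify $e$ into the four subcases outlined in the high-level sketch preceding Section~\ref{sec:construction}.

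The critical subcase (Case~1) is when $e \in \bbb^1_i$ with $\bbb^1_i \subseteq F$ and $\bbb^1_i \cap B = \emptyset$, corresponding to the increment of bit~$i$. The well-behavedness of $\sigma$ together with the assumed structure of $B$ ensures that $i = \argmin_{j \in N(F, p)} \hat\sigma(j)$, matching the choice made by $\RandCountP$. I verify that $(F \setminus \{e\}, B, \sigma)$ with $p' = i$ satisfies case $(i)$ of the lemma, so induction yields at least $f^{1P}(N(F, p) \setminus \{i\}, \hat\sigma)$ switches in the first recursive call. Lemma~\ref{L-make_switch} confirms that $e$ is improving with respect to $B'$, and Definition~\ref{D-Sigma-F} together with Lemma~\ref{L-Sigma-optimal} pins down the structure of $B''$; I verify that $(F, B'', \sigma)$ with $p' = i$ satisfies case $(ii)$ of the lemma. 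A second application of induction yields at least $f^{1P}(N(F, p) \cap [i-1], \hat\sigma)$ switches, and adding $1$ for the improving switch itself reproduces the recurrence defining $f^{1P}(N(F, p), \hat\sigma)$.

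In the remaining subcases the edge $e$ does not trigger a bit increment. For Cases~2 and~3 (an edge in $\bbb^1_i$ already partially in $B$, or an edge in some $\aaa^1_{i,j}$ whose removal disconnects $w_i$ from $b_{i,1}$) I show that the tree $B''$ produced after the pivot still satisfies the lemma's hypothesis with the original $p$, and apply induction to the second recursive call. For Case~4 (all other edges) removing $e$ preserves the structural invariants and leaves $N(F \setminus \{e\}, p)$ unchanged, so induction applied to the first recursive call alone suffices; here the well-behavedness inequality $\sigma(\aaa^1_{i,j}) < \sigma(\eee)$ together with the assumption on the $\sigma$-last copy of every multi-edge ensures $F \setminus \{e\}$ remains functional so that Lemma~\ref{L-Sigma-optimal} continues to apply. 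The main obstacle will be threading the invariants $(i)$ and $(ii)$ through each subcase, in particular tracking how $B'$ is reshaped by the pivot into $B''$ in Case~1, which requires combining Lemma~\ref{L-Sigma-optimal} with a careful use of Definition~\ref{D-Sigma-F} and handling the subtlety that $B$ need not contain all of $\bbb^1_i$ for bit $i$ to be interpreted as $1$.
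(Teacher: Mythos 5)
Your decomposition into four subcases mirrors the paper's proof, and your identification of the critical Case~1 and its reduction to the recurrence for $f^{1P}$ is on the right track. However, there is a genuine gap in the induction measure you propose. The lexicographic pair $(|F|,p)$ does not decrease in your Cases~2 and~3: there the only count comes from the \emph{second} recursive call $\RandomPerm(F,B'',\sigma)$, in which the edge set $F$ is unchanged and the index $p$ is also unchanged. Your measure is therefore stationary and the induction is not well-founded. The paper handles this by adding a \emph{backward} induction on $|\bbb^1_p \cap B|$ (for Case~2) and on $|\aaa^1_{p,j} \cap B|$ with $\sigma(\aaa^1_{p,j})=\sigma(\aaa^1_p)$ (for Case~3): each pivot $B''=B'[e]$ adds one more edge of the relevant path to the tree, so $|\bbb^1_p\cap B|$ or $|\aaa^1_{p,j}\cap B|$ strictly increases, and the base case $|\bbb^1_p\cap B|=rs$ (resp.\ $|\aaa^1_{p,j}\cap B|=s$) makes Case~2 (resp.\ Case~3) impossible since the chosen edge must lie in $F\setminus B$. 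You gesture at this subtlety at the end of your write-up (``$B$ need not contain all of $\bbb^1_i$ for bit $i$ to be interpreted as $1$''), but without incorporating it into the termination measure the inductive argument does not close. The full measure is a five-tuple combining $|N(F,p)|$, $p$, $|F|$, and the two backward components.

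A secondary, smaller issue: your characterization of Case~1 as ``$\bbb^1_i\cap B=\emptyset$'' coincides with the paper's ``$\sigma(e)=\sigma(\bbb^1_{i'})$'' only after one establishes (as the paper does) that Case~1 cannot occur under hypothesis~$(ii)$ and forces $i'<p$ under hypothesis~$(i)$; your write-up assumes rather than proves this equivalence. Likewise, in Case~4 the claim that ``$F\setminus\{e\}$ remains functional'' needs the contradiction argument the paper gives: if the $\sigma$-last copy of some multi-edge $\eee$ were chosen, then since $N(F,p)\neq\emptyset$ there is some $\aaa^1_{i,j}\subseteq F$ with $\aaa^1_{i,j}\cap B=\emptyset$ and $\sigma(\aaa^1_{i,j})<\sigma(\eee)$, so an edge from $\aaa^1_{i,j}$ would have been picked first. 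Merely citing well-behavedness plus the hypothesis on last copies does not by itself rule out that the chosen edge is the last surviving copy.
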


\begin{proof}
The lemma is proved by induction in $|N(F,p)|$, $p$, and $|F|$, and backward
induction in $|\bbb^1_p \cap B|$ and $|\aaa^1_{p,j} \cap B|$,
where $\sigma(\aaa^1_{p,j}) = \sigma(\aaa^1_{p})$.
For $|N(F,p)| = 0$ or $p = 1$ we have $N(F,p) = \emptyset$, and the
lemma is clearly true since $f^{1P}(\emptyset,\hat\sigma) = 0$.

Assume that $|N(F,p)| > 0$, and let $e = \argmin_{e \in F\setminus
  B} \sigma(e)$. We consider four cases:
\begin{enumerate}
\item
$\sigma(e) = \sigma(\bbb^1_{i'})$ for some ${i'} \in [n]$ with $\bbb^1_{i'}
  \subseteq F$.
\item
$e \in \bbb^1_{i'}$ for some ${i'} \in [n]$ with $\bbb^1_{i'} \subseteq F$, and
  $\sigma(e) \ne \sigma(\bbb^1_{i'})$.
\item
$e \in \aaa^1_{i',j}$ for some $i'\in[n]$ and $j\in[r]$ such that
$\bbb^1_{i'} \subseteq F$, $\aaa^1_{i'} \sqsubseteq F$, and
  $\aaa^1_{i'} \not\sqsubseteq F\setminus \{e\}$.
\item
$e$ does not qualify for any of the other cases.
\end{enumerate}

\noindent
\textbf{Case 1:}
Assume that $\sigma(e) = \sigma(\bbb^1_{i'})$ for some ${i'} \in [n]$ where
$\bbb^1_{i'} \subseteq F$. 
Let $B'$ be the tree returned by
$\RandomPerm(F\setminus\{e\},B,\sigma)$. 
Since $F\setminus \{e\}$ is functional we know from Lemma
\ref{L-Sigma-optimal} that $B' \subseteq \BB_F$. Furthermore, since
$reset(F\setminus \{e\}) = 0$ we get from Lemma \ref{L-make_switch}
that $e$ is an improving switch with respect to $B'$. Let $B'' =
B'[e]$. It follows that:
\[
g^{1P}(F,B,\sigma) ~=~ g^{1P}(F\setminus\{e\},B,\sigma) + 1 +
g^{1P}(F,B'',\sigma)\;.
\]
Let $i \in \argmin_{i \in N(F,p)} \hat\sigma(i)$. Then we also have:
\[
f^{1P}(N(F,p),\hat\sigma) ~=~ f^{1P}(N(F,p)\setminus \{i\},\hat\sigma)
+ 1 + f^{1P}(N(F,p)\cap [i-1],\hat\sigma)\;.
\]
Using the induction hypothesis, we show that
$g^{1P}(F\setminus\{e\},B,\sigma) \ge 
f^{1P}(N(F,p)\setminus \{i\},\hat\sigma)$ and $g^{1P}(F,B'',\sigma)
\ge f^{1P}(N(F,p)\cap [i-1],\hat\sigma)$ which proves the induction step.

We next prove that $F$, $B$, and $\sigma$ satisfy
$(i)$. Indeed, for $(ii)$ we have $bit_i(F,B) = 1$ for all $i \ne
p$, and if $\sigma(e) = \sigma(\bbb^1_p)$ then $e \in B$. It follows
that every edge $e$ with $\sigma(e) = \sigma(\bbb^1_i)$ is 
part of the current tree in $(ii)$, such that $e \not\in F\setminus
  B$. A similar argument for $(i)$ shows that $i' < p$. 
Since $\bbb^1_{i'} \subseteq F$ it
follows that $i' \in N(F,p)$. Moreover, $i' \in \argmin_{i \in N(F,p)}
\sigma(\bbb^1_i) = \argmin_{i \in N(F,p)} \hat\sigma(i)$. Hence, $i'$
is also the index picked by
$\RandCount^{1P}(N(F,p),\hat\sigma)$. Observe that $N(F\setminus
\{e\},p) = N(F,p) \setminus \{i'\}$, that $F\setminus \{e\}$ is
functional, that $reset(F\setminus \{e\}) = 0$, and that $F\setminus \{e\}$,
$B$, and $\sigma$ satisfy $(i)$. Hence, for the first recursive
call, $\RandomPerm(F\setminus\{e\},B,\sigma)$, it follows by
induction that $g^{1P}(F\setminus\{e\},B,\sigma) \ge f^{1P}(N(F,p)
\setminus \{i'\},\hat\sigma)$.

Lemma \ref{L-Sigma-optimal} shows that $F$, $B''$, and $\sigma$ satisfy
$(ii)$ where $i'$ plays the role of $p$. Since $i' < p$ it follows by
induction that $g^{1P}(F,B'',\sigma)
\ge f^{1P}(N(F,p)\cap [i'-1],\hat\sigma)$.

\medskip\noindent
\textbf{Case 2:}
Assume that $e \in \bbb^1_{i'}$ for some ${i'} \in [n]$ with
$\bbb^1_{i'} \subseteq F$, and $\sigma(e) \ne \sigma(\bbb^1_{i'})$.
Let $B'$ be the tree returned by
$\RandomPerm(F\setminus\{e\},B,\sigma)$, and let $B'' =
B'[e]$. As in case 1 we have:
\begin{align*}
g^{1P}(F,B,\sigma) ~=~ g^{1P}(F\setminus\{e\},B,\sigma) + 1 +
g^{1P}(F,B'',\sigma)\;.
\end{align*}
Using the induction hypothesis, we show that
$g^{1P}(F,B'',\sigma) \ge f^{1P}(N(F,p),\hat\sigma)$ which proves
the induction step, i.e., in this case we only count the improving
switches performed during the second recursive call.

We first show that $F$, $B$, and $\sigma$ satisfy
$(ii)$. Indeed, for $(i)$ we have $\bbb^1_i \subseteq B$ for all $i \ge
p$ with $\bbb^1_{i} \subseteq F$, which implies that $i' < p$. On the
other hand, $\bbb^1_{i} \cap B = \emptyset$ for all $i < p$. Since
$\bbb^1_{i'} \subseteq F$ it follows that $\sigma(e) =
\sigma(\bbb^1_{i'})$ which is a contradiction. Moreover, since 
$F$, $B$, and $\sigma$ satisfy $(ii)$ such that $bit_i(F,B) = 1$
for all $i \ne p$, we must have $i' = p$.
Lemma \ref{L-Sigma-optimal} then shows that $F$, $B''$, and $\sigma$ satisfy
$(ii)$. Furthermore, we have $|\bbb^1_p \cap B''| > |\bbb^1_p \cap
B|$, and we get by induction that $g^{1P}(F,B'',\sigma)
\ge f^{1P}(N(F,p),\hat\sigma)$. Note that if $|\bbb^1_p \cap B| =
rs$ such that $\bbb^1_p \subseteq B$, then it was not possible to
pick $e \in \bbb^1_{p}$, and case 2 could not happen. The base-case
$|\bbb^1_p \cap B| = rs$ then follows from the proof of the other
three cases.

\medskip\noindent
\textbf{Case 3:}
Assume that $e \in \aaa^1_{i',j}$ for some $i'\in[n]$ and $j\in[r]$ such
that $\bbb^1_{i'}$, $\aaa^1_{i'} \sqsubseteq F$, and $\aaa^1_{i'}
\not\sqsubseteq F\setminus \{e\}$. The most important observation for
this case is that $reset(F\setminus \{e\}) = i'$.
Let $B'$ be the tree returned by
$\RandomPerm(F\setminus\{e\},B,\sigma)$.
Since $F\setminus \{e\}$ is functional we know from Lemma
\ref{L-Sigma-optimal} that $B' \subseteq \BB_F$. Furthermore, since
$reset(F\setminus \{e\}) = i'$ we get from Lemma \ref{L-make_switch}
that $e$ is an improving switch with respect to $B'$.
Let $B'' = B'[e]$. 
As in cases 1 and 2 we therefore have:
\begin{align*}
g^{1P}(F,B,\sigma) ~=~ g^{1P}(F\setminus\{e\},B,\sigma) + 1 +
g^{1P}(F,B'',\sigma)\;.
\end{align*}
Using the induction hypothesis, we show that
$g^{1P}(F,B'',\sigma) \ge f^{1P}(N(F,p),\hat\sigma)$ which proves
the induction step, i.e., as in case 2 we again only count the improving
switches performed during the second recursive call.

We next show that for both $(i)$ and $(ii)$ we must have $i' = p$.
Consider first the case when $F$, $B$, and $\sigma$ satisfy
$(i)$. Since $bit_i(F,B) = 1$ for all $i > p$ and $reset(F) = 0$, we
must have $\aaa^1_i \sqsubseteq B$ if $\bbb^1_i \subseteq B$. It
follows that we can not have $\aaa^1_{i'} \not\sqsubseteq F\setminus
\{e\}$ if $i' > p$. Suppose that $i' < p$. Since $\bbb^1_{i'}
\subseteq F$ we have $\aaa^1_{i',j}\subseteq F$ where 
$\sigma(\aaa^1_{i',j}) = \sigma(\aaa^1_{i'})$. Hence, we must have $e
\in \aaa^1_{i',j}$, which means that $\sigma(e) \ge \sigma(\aaa^1_{i'})$.
Since $\sigma$ is well-behaved it satisfies
$\sigma(\bbb^1_{i'}) < \sigma(\aaa^1_{i'}) \le \sigma(e)$.
On the other hand, for all $i < p$ we have $\bbb^1_i \cap B =
\emptyset$. Hence, the edge $e'$ where $\sigma(e') =
\sigma(\bbb^1_{i'})$ is available and would have been picked instead
of $e$; a contradiction. For $(ii)$ we get that $i'=p$ from the fact
that $reset(F) = 0$ and $bit_i(F,B) = 1$ for all $i \ne p$.

Lemma \ref{L-Sigma-optimal} then shows that $F$, $B''$, and $\sigma$ satisfy
$(i)$. Note that for both $(i)$ and $(ii)$ we must have $e \in
\aaa^1_{p,j}$ where $\sigma(\aaa^1_{p,j}) = \sigma(\aaa^1_{p})$.
Hence, $|\aaa^1_{p,j} \cap B''| > |\aaa^1_{p,j} \cap
B|$, and we get by induction that $g^{1P}(F,B'',\sigma)
\ge f^{1P}(N(F,p),\hat\sigma)$. Note that if $|\aaa^1_{p,j} \cap B| =
s$ such that $\aaa^1_{p,j} \subseteq B$, then it was not possible to
pick $e \in \aaa^1_{p,j}$, and case 3 could not happen. The base-case
$|\aaa^1_{p,j} \cap B| = s$ then follows from the proof of the other
three cases.

\medskip\noindent
\textbf{Case 4:}
Assume that $e \not\in \bbb^1_i$ for some $i$ where $\bbb^1_i
\subseteq F$, and that 
$e \not\in \aaa^1_{i,j}$ for some $i\in[n]$ and $j\in[r]$ such that
$\bbb^1_{i} \subseteq F$, $\aaa^1_{i} \sqsubseteq F$, and
  $\aaa^1_{i} \not\sqsubseteq F\setminus \{e\}$.
We show that $g^{1P}(F\setminus\{e\},B,\sigma) \ge
f^{1P}(N(F,p),\hat\sigma)$, i.e., in this case we only count the improving
switches performed during the first recursive call.

Observe that $N(F\setminus\{e\},p) = N(F,p)$ since $e \not\in
\bbb^1_i$ for some $i$ where $\bbb^1_i \subseteq F$. Also observe that
$reset(F\setminus \{e\}) = 0$ since $e \not\in \aaa^1_{i,j}$ for some
$i\in[n]$ and $j\in[r]$ such that $\bbb^1_{i} \subseteq F$,
$\aaa^1_{i} \sqsubseteq F$, and $\aaa^1_{i} \not\sqsubseteq F\setminus
\{e\}$. For the same reason we see that for all $i\in[n]$, if
$\bbb^1_i \subseteq F \setminus\{e\}$ then
$\aaa^1_{p,j} \subseteq F\setminus\{e\}$ where
  $\sigma(\aaa^1_{p,j}) = \sigma(\aaa^1_{p})$. It follows that if $F$,
$B$, and $\sigma$ satisfy $(i)$ or $(ii)$, respectively, then
$F\setminus \{e\}$, $B$, and $\sigma$ satisfy $(i)$ or $(ii)$,
correspondingly. 
It remains to show that
for all $\eee \in \mathcal{M}$ we have $e' \in F\setminus\{e\}$ where
$\sigma(e') = \sigma(\eee)$. Once this has been shown it follows by
induction that $g^{1P}(F\setminus\{e\},B,\sigma) \ge
f^{1P}(N(F,p),\hat\sigma)$, since $|F \setminus \{e\}| < |F|$. Note
that it is not possible to reach a situation where $F \cap \eee =
\emptyset$ for some $\eee \in \mathcal{M}$. Hence, the induction
always goes back to one of the other base-cases.

Suppose that $e \in \eee$ for some $\eee \in \mathcal{M}$, and assume
for the sake of contradiction that $\eee \cap (F\setminus \{e\}) =
\emptyset$. This is only possible if $\sigma(e) = \sigma(\eee)$. Since
$p > 1$ and $N(F,p) \ne \emptyset$ there exists some set
$\aaa^1_{i,j}$ such that $\aaa^1_{i,j} 
\subseteq F$ and $\aaa^1_{i,j} \cap B = \emptyset$. For $(i)$ this
is true for some $i<p$, and for $(ii)$ it is true for $i = p$. Since
$\sigma$ is well-behaved it must be the case that
$\sigma(\aaa^1_{i,j}) < \sigma(\eee)$. Hence, an edge from
$\aaa^1_{i,j}$ would be picked before an edge from $\eee$, and we get
a contradiction.
\end{proof}

We next prove that a uniformly random permutation function
$\sigma \in \Sigma$ is well-behaved with high probability. The proof
of the following lemma resembles that of Lemma \ref{lemma:bad_2}.

\begin{lemma} \label{lemma: probability of good permutation}
Let $\sigma \in \Sigma$ be chosen uniformly at
random. If $r=s=t=3\lceil \log n\rceil$, then $\sigma$ is well-behaved
with probability at least $1/2$.
\end{lemma}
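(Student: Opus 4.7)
The plan is a union bound: for each of the two defining properties of a well-behaved permutation, I will enumerate all potential violations and show that the total violation probability is at most $1/2$ when $r=s=t=3\lceil \log n\rceil$.

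For property (i), for each $i\in[n]$ define the bad event $Bad^{(i)}_1 := \{\sigma(\bbb^1_i) \ge \sigma(\aaa^1_i)\}$. Since $\sigma(\aaa^1_i)=\max_{j\in[r]}\sigma(\aaa^1_{i,j})$, this event is exactly the assertion that the smallest-ranked edge among the $rs$ edges of $\bbb^1_i$ is beaten by the smallest-ranked edge of every $\aaa^1_{i,j}$. I will reuse the argument of Lemma~\ref{lemma:bad_2} verbatim: reveal the $2rs$ edges in $\bbb^1_i \cup \bigcup_{j\in[r]} \aaa^1_{i,j}$ in $\sigma$-order and, at the $w$-th step that we hit either $\bbb^1_i$ or a fresh $\aaa^1_{i,j}$, bound the probability that it is a fresh $\aaa^1_{i,j}$ by $\frac{(r-w+1)s}{rs+(r-w+1)s}=\frac{r-w+1}{2r-w+1}$. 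Telescoping gives $\PPr{Bad^{(i)}_1}\le \frac{(r!)^2}{(2r)!}=\frac{1}{\binom{2r}{r}}<2^{-r}$, so by a union bound over $i$ the total probability of failing~(i) is at most $n\cdot 2^{-r}$.

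For property (ii), for each triple $(\eee,i,j)$ with $\eee\in\mathcal{M}$, $i\in[n]$, $j\in[r]$ define the bad event $Bad^{(ii)}_{\eee,i,j}:=\{\sigma(\aaa^1_{i,j})\ge\sigma(\eee)\}$. Because $\sigma(\aaa^1_{i,j})=\min_{e\in\aaa^1_{i,j}}\sigma(e)$ and $\sigma(\eee)=\max_{e\in\eee}\sigma(e)$, and because $\aaa^1_{i,j}\cap\eee=\emptyset$ (the naming convention places these sets in disjoint families), the event is precisely that all $t$ edges of $\eee$ come before all $s$ edges of $\aaa^1_{i,j}$ in $\sigma$. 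Since the induced order on any subset of edges under a uniformly random $\sigma$ is itself uniform, the probability of this event is $\frac{s!\,t!}{(s+t)!}=\frac{1}{\binom{s+t}{t}}\le 2^{-\min(s,t)}=2^{-t}$. The number of triples is $|\mathcal{M}|\cdot n\cdot r = n(2rs+r+3)\cdot n\cdot r = O(n^2 r^2 s)$, so the union bound gives a total failure probability for~(ii) of $O(n^2 r^2 s\cdot 2^{-t})$.

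Plugging in $r=s=t=3\lceil\log n\rceil$ makes $2^{-r}$ and $2^{-t}$ both at most $n^{-3}$, so the two contributions become $O(n^{-2})$ and $O(\log^{3} n / n)$ respectively. Each can be bounded by $1/4$ for all sufficiently large $n$ (and for the remaining finite number of small $n$ one simply replaces the constant $3$ with any larger absolute constant in the definition of $r,s,t$, which does not affect the asymptotics), so $\PPr{\sigma\text{ is well-behaved}} \ge 1 - 1/4 - 1/4 = 1/2$. The only real obstacle is bookkeeping: double-checking that the sets $\aaa^1_{i,j}$ and each $\eee\in\mathcal{M}$ are disjoint (they are, since $\mathcal{M}$ contains only sets with superscript $0$, the $\uuu^1_i$'s, and the $\www^j_i$'s, none of which overlap $\aaa^1$), and verifying that the three constants $r,s,t$ can indeed be set equal while meeting both bounds simultaneously, which the calculation above confirms.
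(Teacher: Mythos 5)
Your proof is correct and follows the same overall plan as the paper's: a union bound over failures of Definition~\ref{def:good}(i) (one bad event per $i$) and failures of Definition~\ref{def:good}(ii) (one bad event per triple $(\eee,i,j)$), with the exact probabilities $\frac{(r!)^2}{(2r)!}$ and $\frac{s!\,t!}{(s+t)!}$ respectively. The one place where you take a genuinely different route is in deriving $\PPr{\sigma(\bbb^1_i) > \sigma(\aaa^1_i)} = \frac{(r!)^2}{(2r)!}$: you reuse the sequential-revealing / telescoping argument from Lemma~\ref{lemma:bad_2}, conditioning on each of the $r$ ``interesting'' revelations in turn. The paper instead partitions $\bbb^1_i$ into $r$ blocks $\bbb^1_{i,1},\dots,\bbb^1_{i,r}$ of size $s$ and observes that, because all $2r$ resulting sets have the same cardinality, the ``first-appearance'' order induced by $\sigma$ on these $2r$ sets is itself a uniformly random permutation of $2r$ objects; the bad event is then that all $r$ $\aaa$-blocks precede all $r$ $\bbb$-blocks, giving $\frac{(r!)^2}{(2r)!}$ directly by counting. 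Both are valid and give the identical bound: your version unifies the derivation with Lemma~\ref{lemma:bad_2}, while the paper's one-shot symmetry argument is arguably cleaner for the uniform-permutation setting and avoids the (correct but slightly delicate) appeal to conditional uniformity of the residual permutation. Your remark that for small $n$ the absolute constant in $r=s=t=3\lceil\log n\rceil$ may need to be enlarged is a fair and slightly more careful observation than the paper's bare assertion of the final numerical inequalities; it does not affect the asymptotic statement of the lemma.
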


\begin{proof}
We start by upper bounding the probability that $\sigma$ fails to
satisfy Definition~\ref{def:good} $(i)$. Let $i \in [n]$ be given, and
define:
\[
\renewcommand{\arraystretch}{1.5}
\begin{array}{rrcl}
\forall j \in [r]: & \bbb^1_{i,j} &\!=\!& \{b^1_{i,1+(j-1)s}, b^1_{i,2+(j-1)s},
\dots, b^1_{i,s+(j-1)s}\}\\
\forall j \in [r]: & \sigma(\bbb^1_{i,j}) &\!=\!& \min_{e \in
  \bbb^1_{i,j}} \sigma(e)\\
\end{array}
\]
Note that the sets $\bbb^1_{i,j}$, for $j\in [r]$, partition $\bbb^1_i$ into
$r$ sets of size $s$. In particular, all the sets $\bbb^1_{i,j}$ and
$\aaa^1_{i,j}$, for $j \in [r]$, have the same size. Hence, we may view
$\sigma$ as defining a uniformly random permutation $\sigma'$ of the set
$\mathcal{S} =
\{\bbb^1_{i,1},\dots,\bbb^1_{i,r},\aaa^1_{i,1},\dots,\aaa^1_{i,r}\}$,
such that for all $s_1,s_2 \in \mathcal{S}$, $\sigma'(s_1) <
\sigma'(s_2)$ if and only if $\sigma(s_1) < \sigma(s_2)$.
In order to get the event $\sigma(\bbb^1_{i}) > \sigma(\aaa^1_{i})$ it
must be the case that all the $\aaa^1_{i,j}$ elements of $\mathcal{S}$
are placed before all the $\bbb^1_{i,j}$ elements of $\mathcal{S}$. This
happens with probability:
\[
\PPr{\sigma(\bbb^1_{i}) > \sigma(\aaa^1_{i})} ~=~ \frac{(r!)^2}{(2r)!}
\]
Hence, the probability that $\sigma$ fails to satisfy
Definition~\ref{def:good} $(i)$ is at most
$n\frac{(r!)^2}{(2r)!} \le n 2^{-r} \le 1/4$, where the first
inequality follows from a simple proof by induction.

We next upper bound the probability that $\sigma$ fails to satisfy
Definition~\ref{def:good} $(ii)$.
Let $\eee \in \mathcal{M}$, $i \in [n]$, and $j \in [r]$ be
given. Recall that $\eee$ contains $t$ copies of a
multi-edge. Similarly, $\aaa^1_{i,j}$ contains $s$ edges. The event
$\sigma(\aaa^1_{i,j}) > \sigma(\eee)$ occurs if and only if all the
edges of $\aaa^1_{i,j}$ are assigned higher indices than all the edges
of $\eee$. Since $\sigma$ is uniformly random this happens with
probability:
\[
\PPr{\sigma(\aaa^1_{i,j}) > \sigma(\eee)} ~=~ \frac{s!\,t!}{(s+t)!}
\]
Since $|\mathcal{M}| = n(2rs + r + 3)$, it follows that $\sigma$ fails to
satisfy Definition~\ref{def:good} $(ii)$ with probability at most
$n^2r(2rs + r + 3) \frac{s!\,t!}{(s+t)!} \le 6n^2\log^3 n \,2^{-3\log n}
\le 1/4$.

Let $p_{n,r,s,t}$ be the probability that $\sigma$ is well-behaved. We
have shown that:
\[
p_{n,r,s,t} ~\geq~ 1 - n\frac{(r!)^2}{(2r)!} - n^2r(2rs + r + 3)
\frac{s!\,t!}{(s+t)!} ~\ge~ \frac{1}{2}\;.
\]
\end{proof}

\medskip\noindent
\textbf{Proof of Theorem \ref{thm:main1pi}:\ }
If $\sigma \in \Sigma$ is picked uniformly at random then we know
from Lemma~\ref{lemma: probability of good permutation} that $\sigma$
is well-behaved with probability at least $1/2$. Moreover, since every
set $\bbb^1_i$, for $i \in [n]$, has the same size, the induced
permutation function $\hat\sigma$ is also uniformly random.
It then follows from Lemma \ref{lemma:counters} and Lemma
\ref{lemma:asymp} that:
\[
g^{1P}(F,B) ~\ge~ \frac{1}{2}f^{1P}(n) ~=~ \frac{1}{2}f(n)
~\sim~ \frac{{\rm e}^{2\sqrt{n}}}{4\sqrt{\pi{\rm e}}\, n^{1/4}}
~=~ {\rm e}^{\Omega(\sqrt{n})}\;.
\]

\qed

\section{Lower bound for \RandomBland}\label{S-randomized-blands-rule}

We next prove a lower bound for the $\RandomBland$
algorithm. The proof of this lower bound is very similar to the
corresponding proof for $\RandomPerm$. In fact, the only difference is
that Lemma \ref{lemma:technical_star} needs to be proved slightly
differently. Recall that $\RandomBland$ is the algorithm obtained by running $\Bland$ with a uniformly random permutation. Since, we remove edges and consider subproblems, it is again convenient to operate with a \emph{permutation function} $\sigma: E \to \NN$
that assigns to each edge
of $E$ a \emph{distinct} natural number.
Recall that $\Sigma$ is the
set of permutation functions with range $[|E|]$. The algorithm works by
repeatedly performing the improving switch with \emph{largest} permutation
index $\sigma(e)$. We study the recursive formulation of
$\Bland$ shown in Figure~\ref{F-RandomFacet}. $\Bland$ is
different from $\RandomPerm$ in the following ways. The edge chosen to
be removed from $F$ is always the edge in $F$ with \emph{smallest}
permutation index $\sigma(e)$, regardless of whether $e\in B$ or
not. This means that it is possible to have $B \not\subseteq F$,
which means that the termination criterion is changed to $F =
\emptyset$.

Let $G_{n, r, s, t} = (V,E,c)$ be defined as in Section~\ref{sec:construction}.
The parameters $n,r,s,t \in \NN$ play the same role as in Section
\ref{S-one-pass-variant}. In particular, we use $r = s =
t = \Theta(\log n)$, such that $G_{n,r,s,t}$ contains
$N=\Theta(nrs)=\Theta(n\log^2 n)$ vertices and
$M=\Theta(nrst)=\Theta(n\log^3 n)$ edges. We also use the same
notation as in Section \ref{S-one-pass-variant}. The following theorem
is again proved by showing that $\Bland(E,B_0,\sigma)$
simulates $\RandCount^{1P}([n],\hat\sigma)$, where $\hat\sigma$ is the
induced permutation function, when $\sigma$
is well-behaved. Recall that Lemma 
\ref{lemma: probability of good permutation} shows that $\sigma$ is
well-behaved with probability at least $1/2$.

\begin{theorem}\label{thm:mainbland}
The expected number of switches performed by the
$\RandomBland$ algorithm, when run on the graph $G_{n,r,s,t}$, where
$r = s = t = \Theta(\log n)$, with initial tree $B_0$, is at least ${\rm
  e}^{\Omega(\sqrt{n})} = {\rm e}^{\Omega(\sqrt{M}/\log^{3/2} M)}$,
where $M$ is the number of edges in $G_{n,r,s,t}$.
\end{theorem}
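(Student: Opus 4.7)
The plan is to mirror the proof of Theorem~\ref{thm:main1pi}, modifying the one place where $\Bland$ genuinely differs from $\RandomPerm$: the algorithm may remove edges from $F$ that lie in the current tree $B$, so the invariant $B \subseteq F$ is no longer maintained. Concretely, I will let $g^B(F,B,\sigma)$ denote the number of improving switches performed by $\Bland(F,B,\sigma)$ and prove an analogue of Lemma~\ref{lemma:technical_star} stating that, whenever $\sigma$ is well-behaved and $(F,B)$ satisfies either of the two structural conditions~$(i)$ or $(ii)$ of that lemma, $g^B(F,B,\sigma) \ge f^{1P}(N(F,p),\hat\sigma)$. Once this is in place, Lemma~\ref{lemma: probability of good permutation} (which only concerns $\sigma$, not the algorithm) gives probability at least $1/2$ that $\sigma$ is well-behaved, and the induced permutation $\hat\sigma$ is uniformly random because all sets $\bbb^1_i$ have equal size, so the argument concludes via Lemma~\ref{lemma:counters} and Lemma~\ref{lemma:asymp} exactly as in the proof of Theorem~\ref{thm:main1pi}.

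The key conceptual tool I will introduce for the adapted lemma is the notion of \emph{fixed edges}, namely edges $e$ belonging to every optimal tree of the subgraph defined by $F \cup B$, i.e., $e \in \BB_{F\cup B}$ (where we may assume $F \cup B$ is functional). Because $G_{n,r,s,t}$ is acyclic, fixed edges enjoy the following persistence property: if $B^\dagger$ is optimal for $F$ and $e$ is a fixed edge of $B \setminus F$, then at every vertex lying ahead of $e$ in topological order the choice made by $B^\dagger$ remains optimal once $e$ is reinserted higher up in the recursion. This property lets me argue that when the $i$-th bit is set to $1$ the configuration at all levels $\ge i$ becomes fixed, and hence is preserved while the recursion resets the lower bits and performs further counting — exactly the behaviour needed to match $\RandCount^{1P}$.

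The second subtlety is that $\Bland$ might pick for removal an edge $e \in \bbb^1_j \cap F \cap B$ with $j < i$ while bit $i$ has just turned to $1$ and the lower bits should be in the middle of being reset; if this happened \emph{before} the reset completed, bit $j$ would be prematurely fixed to $0$ even though $e \in B$ formally contradicts that. I will show that this cannot occur under a well-behaved $\sigma$: requirement~$(ii)$ of Definition~\ref{def:good} forces the edges from $\aaa^1_{j,\cdot}$ that drive the reset to have smaller permutation indices than the multi-edges and the relevant copies of $\bbb^1$-paths, so by the time any edge of $\bbb^1_j$ with $j<i$ is removed from $F$, the recursive optimization has already cleared $\bbb^1_j \cap B$. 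This mirrors case~4 of the proof of Lemma~\ref{lemma:technical_star}. The rest of the case analysis (the analogues of cases~1--3 of that lemma) carries over almost verbatim, replacing "$e \in F \setminus B$" with "$e \in F$" and tracking that the returned tree $B^\dagger$ satisfies the same hypotheses on fixed edges as before.

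The main obstacle will be the bookkeeping in this modified lemma: the triple induction (on $|N(F,p)|$, on $p$, and on $|F|$) must be augmented to track which edges of $B$ are fixed and which are not, and I must verify that after each recursive call the returned tree restricted to the fixed-edge skeleton still places bits at higher levels in configuration~$1$ and lower levels in configuration~$0$. Once the invariants are set up correctly, each case reduces to the corresponding case from the $\RandomPerm$ proof together with an invocation of the persistence property above. With the adapted lemma established, plugging in $r=s=t=3\lceil\log n\rceil$ yields $M = \tilde\Theta(n)$ edges and expected number of switches at least $\frac{1}{2} f(n) = {\rm e}^{\Omega(\sqrt{n})} = {\rm e}^{\Omega(\sqrt{M}/\log^{3/2} M)}$, proving Theorem~\ref{thm:mainbland}.
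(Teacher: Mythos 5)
Your overall plan matches the paper's: adapt Lemma~\ref{lemma:technical_star} to $\Bland$, track what happens to edges of $B$ that $\Bland$ may remove from $F$, introduce a persistence notion of "fixed" edges, and conclude via Lemma~\ref{lemma: probability of good permutation}, Lemma~\ref{lemma:counters}, and Lemma~\ref{lemma:asymp} exactly as in Theorem~\ref{thm:main1pi}. The two auxiliary ideas you single out correspond precisely to the two new lemmas the paper adds in Section~\ref{S-randomized-blands-rule} (Lemma~\ref{lemma:fixed} on fixed vertex sets and Lemma~\ref{lemma:drop} on clearing lower-level $\bbb^1$-edges). However, both of them contain genuine gaps.

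First, your definition of a fixed edge is $e \in \BB_{F\cup B}$, i.e., $e$ is an \emph{optimal} edge for $G_{F\cup B}$. The paper's definition is strictly stronger: $(u,v) \in B$ is fixed if $\val_B(u) = \val(u)$ for $G_{F\cup B}$, i.e., the current tree $B$ already realizes the optimal distance from $u$. These are not equivalent: $(u,v)$ may be an optimal edge of $G_{F\cup B}$ while $\val_B(v) > \val(v)$, in which case $\val_B(u) = c(u,v)+\val_B(v) > \val(u)$ and an improving switch at $u$ may replace $(u,v)$ by something else. So your per-edge definition does not support the persistence you need, namely that fixed edges of $B$ survive into the returned tree $B^\dagger$. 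The paper's stronger definition supports it precisely because vertex values are monotonically non-increasing under improving switches and cannot drop below the optimum for $G_{F\cup B}$, hence the out-edge at such a $u$ is never changed. Note also that the persistence property you state, that the choices of the \emph{returned} tree $B^\dagger$ remain optimal at vertices ahead of $e$ once $e$ is reinserted, is an acyclicity fact about $B^\dagger$; what the argument actually requires is preservation of the \emph{input} tree $B$'s choices.

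Second, your handling of the "second subtlety" does not go through. You claim that requirement~$(ii)$ of Definition~\ref{def:good} forces the $\aaa^1$-edges driving the reset to precede "the relevant copies of $\bbb^1$-paths." But requirement~$(ii)$ only orders $\aaa^1_{i,j}$ against multi-edges $\eee \in \mathcal{M}$, and the $\bbb^1_j$ sets are not multi-edges; well-behavedness says nothing whatsoever about $\sigma(\aaa^1_{i})$ versus $\sigma(\bbb^1_j)$ for $j \ne i$, and such an ordering can easily fail for a well-behaved permutation. The paper's Lemma~\ref{lemma:drop} does not rely on any such ordering. Instead it follows the chain of first recursive calls down to the level $\ell' = \sigma(\aaa^1_p)$ at which the reset is triggered, argues via Lemma~\ref{L-Sigma-optimal} that the tree $B_1'$ returned there has $\bbb^1_i \cap B_1' = \emptyset$ for all $i<p$ (because $reset(F(\sigma,\ell'+1)\cup B_1') = p$), and then observes that the final $B'$ is obtained from $B_1'$ by improving switches drawn only from $F$, so $\bbb^1_i \cap B' \subseteq F$. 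Your proposal needs an argument of this form; the one you sketch is based on an ordering that the hypotheses do not give you.
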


Let $B'$ be the tree returned by
$\Bland(F,B,\sigma)$. An important difference between
$\Bland$ and $\RandomPerm$ is that $B' \subseteq F \cup B$
is optimal for $G_{F \cup B'}$ for $\Bland$, whereas it is
optimal for $G_F = G_{F \cup B}$ for $\RandomPerm$.
This means that it is harder to keep track of which edges are present
in $F \cup B'$. There are, however, some edges in $B$ that are
guaranteed to also be in $B'$, namely the edges $(u,v)\in B$ used at
all vertices $u$ whose distance to the target $\TT$ in $B$ is optimal for $G_{F\cup
  B}$. Since $B'$ is obtained from $B$ by performing improving
switches such edges must remain in $B'$. This leads us to the
following definition.

\begin{definition}[\bf Fixed edges]
Let $B$ be a tree and $F \subseteq E$. We say that an edge $(u,v)
\in B$ is \emph{fixed} for $B$ with respect to $F$ if $\val_B(u)
= \val(u)$ for the graph $G_{F\cup B}$.
\end{definition}

Recall that $G_{n,r,s,t}$ is acyclic such that it is not possible to
get from higher bits to lower bits. For any edge $(u,v) \in F \subseteq
E$, let $B'$ be returned by
$\Bland(F\setminus\{(u,v)\},B,\sigma)$. Then all edges used in
$B'$ at vertices that can not reach $u$ in $G_{n,r,s,t}$ are fixed
with respect to $F$. We let $V(b_{i',j'}) \subseteq V$ and $V(a_{i',j',k'})
\subseteq V$ be vertices that can not reach $b_{i',j'}$ and
$a_{i',j',k'}$, respectively, including $b_{i',j'}$ and
$a_{i',j',k'}$ themselves:
\begin{align*}
V(b_{i',j'}) ~=~ &\{u_i,w_i \mid i > i'\} \cup 
\{a_{i,j,k} \mid i > i', j\in [r], k \in [s]\}
\cup \{b_{i,j} \mid i > i', j\in [rs]\}
\cup \{b_{i',j} \mid j \ge j'\}\\
V(a_{i',j',k'}) ~=~ &\{u_i,w_i \mid i > i'\} \cup 
\{a_{i,j,k} \mid i > i', j\in [r], k \in [s]\}
\cup \{b_{i,j} \mid i \ge i', j\in [rs]\} \,\cup \\
&\{a_{i',j,k} \mid j \ne j', k \in [s]\} \cup 
\{a_{i',j',k} \mid k \ge k'\}
\end{align*}
We say that $V(b_{i',j'})$ is fixed for $B$ with respect to $F$ if every
edge $(u,v) \in B$, for $u\in V(b_{i',j'})$, is fixed with respect to $F$.
Similarly, we say that $V(a_{i',j',k'})$ is fixed for $B$ with respect to
$F$ if every edge $(u,v) \in B$, for $u\in V(a_{i',j',k'})$, is fixed with
respect to $F$. Note that if $V(a_{i',j',k'})$ is fixed for $B$ with respect to
$F$, for some $i',j',k'$, then $V(b_{i',1})$ is fixed for $B$ with respect to
$F$. Since the choices at $b_{i,j}$ and $a_{i,j,k}$, for some $i,j,k$,
are essentially binary, i.e., the edges in $\bbb^0_{i,j}$ and
$\aaa^0_{i,j,k}$, respectively, are identical, the above discussion
proves the following lemma.

\begin{lemma}\label{lemma:fixed}
Let $B$ be a tree, let $F\subseteq E$, and let $\sigma \in
\Sigma$. Assume that $e = b^1_{i,j} \in F$, for some $i\in[n]$ and $j\in
[r]$, let $B'$ be the tree returned by
$\Bland(F\setminus\{e\},B,\sigma)$, 
assume that $e$ is an improving switch w.r.t. $B'$, and let
$B'' = B'[e]$. Then $V(b_{i,j})$ is fixed for $B''$
w.r.t. $F$. Similarly, if $e = a^1_{i,j,k}$ then $V(a_{i,j,k})$ is fixed
for $B''$ w.r.t. $F$.
\end{lemma}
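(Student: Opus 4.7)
The plan is to combine the correctness of \Bland\ with the acyclicity of $G_{n,r,s,t}$. Correctness of \Bland\ gives $\val_{B'}(u) = \val(u)$ in the subgraph $G_{(F \setminus \{e\}) \cup B'}$ for every vertex $u$. Acyclicity enters through the defining property of $V(b_{i,j})$: no vertex in this set can reach $b_{i,j}$, so every outgoing edge from a vertex of $V(b_{i,j})$ ends in $V(b_{i,j}) \cup \{\TT\}$. I would first observe that since $e$ is an improving switch with respect to $B'$, we must have $e \notin B'$, and therefore $B'' = B'[e]$ agrees with $B'$ at every vertex except $b_{i,j}$.

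Next I would pass from $B'$-values to $B''$-values on $V(b_{i,j})$. For $u \in V(b_{i,j}) \setminus \{b_{i,j}\}$, the $B''$-path out of $u$ stays inside $V(b_{i,j}) \cup \{\TT\}$, never touches $b_{i,j}$, and uses exactly the same edges as the $B'$-path, so $\val_{B''}(u) = \val_{B'}(u)$. For $u = b_{i,j}$ itself, the only change is the first edge: $\val_{B''}(b_{i,j}) = c(e) + \val_{B''}(v) = c(e) + \val_{B'}(v)$, where $v$ is the head of $e$.

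Then I would compare $\val_{B''}(u)$ with the true optimum $\val(u)$ in $G_{F \cup B''}$. A short set-chasing argument (using $e \notin B'$) gives $(F \cup B'') \setminus \{e\} \subseteq (F \setminus \{e\}) \cup B'$. Because no $u \in V(b_{i,j}) \setminus \{b_{i,j}\}$ can reach $b_{i,j}$, every $u$-to-$\TT$ path in $G_{F \cup B''}$ avoids $e$ and is therefore admissible in $G_{(F \setminus \{e\}) \cup B'}$, so its length is at least $\val_{B'}(u) = \val_{B''}(u)$. Combined with the trivial $\val(u) \le \val_{B''}(u)$, this forces equality. At $u = b_{i,j}$ the candidate outgoing edges in $F \cup B''$ are either $e$ or one of the $b^{0,\ell}_{i,j}$; that $e$ is the optimal choice among these is exactly what ``$e$ is improving with respect to $B'$'' says, once we note that $\val_{B''}$ and $\val_{B'}$ coincide on the heads of all these edges.

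Finally, the case $e = a^1_{i,j,k}$ is completely symmetric, with $V(a_{i,j,k})$ playing the role of $V(b_{i,j})$ and the competing parallel edges $a^{0,\ell}_{i,j,k}$ in place of the $b^{0,\ell}_{i,j}$; I would present it as a one-line repetition of the above argument. The main obstacle is purely bookkeeping: making sure that removing $e$ from $F$ and then re-adding it inside the returned tree does not smuggle in any unexpected edges at the other vertices of $V(b_{i,j})$. This is exactly where acyclicity, together with $e \notin B'$, keeps the comparison between $G_{(F \setminus \{e\}) \cup B'}$ and $G_{F \cup B''}$ clean.
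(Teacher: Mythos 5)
Your proof is correct and takes essentially the same route as the paper: the paper's "proof" is the informal discussion that immediately precedes the lemma statement, which invokes acyclicity of $G_{n,r,s,t}$, the optimality of $B'$ for $G_{(F\setminus\{e\})\cup B'}$, and the observation that the competing out-edges $\bbb^0_{i,j}$ (respectively $\aaa^0_{i,j,k}$) at $b_{i,j}$ (respectively $a_{i,j,k}$) are identical copies. Your write-up supplies the bookkeeping the paper leaves implicit — the identity $\val_{B''}=\val_{B'}$ on $V(b_{i,j})\setminus\{b_{i,j}\}$, the containment $(F\cup B'')\setminus\{e\}\subseteq(F\setminus\{e\})\cup B'$, and the two-sided comparison of $\val_{B''}$ with the optimum in $G_{F\cup B''}$ — but the underlying idea is the same.
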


Let $B'$ be the tree returned by
$\Bland(F,B,\sigma)$. It is also useful to know which edges
are, in a sense, lost when going from $B$ to $B'$, i.e., edges $e$
such that $e \in F\cup B$ but $e \not\in F\cup B'$. The following
lemma will be useful for this purpose.
Let $\sigma^{-1}(\ell)$ be the edge $e$ with index $\ell$, i.e.,
$\sigma(e) = \ell$. 
Define $F(\sigma,\ell) := \{e \in E \mid \sigma(e) \ge \ell\}$. Note that the
set $F$ for $\Bland(F,B,\sigma)$ is always equal to
$F(\sigma,\ell)$ for some $\ell \ge 1$. In particular, $F(\sigma,1) =
E$.

\begin{lemma}\label{lemma:drop}
Let $\sigma\in\Sigma$ be well-behaved, let $\ell \ge 1$, let $F =
F(\sigma,\ell)$, let $B$ be a tree, and let $p \in [n]$. Assume
that $reset(F\cup B) < p$, that $bit_i(F\cup B,B) = 1$ for all $i > p$,
and that $\ell \le\sigma(\aaa^1_p)$. Assume also that there is some
$j' \in [rs]$ such that for all $j \ge j'$ we have 
$b^1_{p,j} \in B$, for all $j < j'$ we have $b^1_{p,j} \in F$, and
$V(b_{p,j'})$ is fixed for $B$ w.r.t. $F$. In particular, 
$\bbb^1_p \subseteq F\cup B$. Finally, let
$B'$ be the tree returned by $\Bland(F,B,\sigma)$. Then for
all $i < p$, $\bbb^1_i \cap B' \subseteq F$.
\end{lemma}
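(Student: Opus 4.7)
The plan is to prove Lemma 8.5 by strong induction on $|F|$. I unroll one iteration of $\Bland(F,B,\sigma)$: it picks $e_\ell := \sigma^{-1}(\ell) \in F$, recursively computes $B^* := \Bland(F \setminus \{e_\ell\}, B, \sigma)$, and either returns $B^*$ (if $e_\ell$ is not improving with respect to $B^*$) or pivots to $B^{**} := B^*[e_\ell]$ and then returns $\Bland(F \setminus \{e_\ell\}, B^{**}, \sigma)$; the ostensible second call $\Bland(F, B^{**}, \sigma)$ collapses to this because $e_\ell$, now lying in $B^{**}$, is not improving when picked again. In every case I must verify that the hypotheses of Lemma 8.5 transfer to the inner calls (possibly with updated parameters) so that induction applies.

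By the hypothesis $\ell \le \sigma(\aaa^1_p)$ together with well-behavedness part $(ii)$, $\sigma(e_\ell) < \sigma(\eee)$ for every multi-edge $\eee \in \mathcal{M}$, so removing $e_\ell$ leaves $F$ functional; hence $e_\ell$ is either a redundant multi-edge copy (case A) or a one-edge in some $\bbb^1_i$ or $\aaa^1_{i,j}$. Case A is trivial: removing $e_\ell$ affects neither $\bbb^1$- nor $\aaa^1$-structure, so $reset$, $bit_i$, the $j'$-structure, and the fixedness of $V(b_{p,j'})$ are all preserved and induction handles the inner call. The core non-trivial cases are $e_\ell \in \bbb^1_p$ with $e_\ell = b^1_{p,j''}$, $j'' < j'$, and $e_\ell \in \aaa^1_{p,j^*}$ where $j^*$ attains the maximum in $\sigma(\aaa^1_p)$. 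In the former, after the inner recursion and pivot Lemma 8.4 shows that $V(b_{p,j''})$ is fixed for $B^{**}$ with respect to $F$, so the hypotheses of Lemma 8.5 re-apply to the second inner call with $j''$ in place of $j'$ and induction carries through. In the latter, removing $e_\ell$ raises $reset(F \setminus \{e_\ell\} \cup B)$ to $p$, and the conclusion is read directly off Lemma 5.1 applied to $F \setminus \{e_\ell\} \cup B^*$, since case $(iv)$ of Definition 5.3 places no $\bbb^1_i$-edges in $\BB_{F \setminus \{e_\ell\} \cup B^*}$ for any $i < p$.

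The main obstacle, which I expect to consume the bulk of the technical work, is the remaining case where $e_\ell$ is a one-edge at a level $i \ne p$. When $i > p$, the hypotheses $bit_i(F \cup B, B) = 1$ and $reset(F \cup B) < p$ must be combined to show that the structure above level $p$ is undisturbed (when $e_\ell \in B$ the set $F \setminus \{e_\ell\} \cup B$ equals $F \cup B$, and otherwise one exploits that $\aaa^1_i \sqsubseteq F\cup B$ to see that $e_\ell$ is not improving). When $i < p$, the level-$p$ hypotheses and the fixedness of $V(b_{p,j'})$ should survive the inner recursion and any subsequent pivot, because every improving switch triggered by removing an edge at level $i$ is confined, by acyclicity of $G_{n,r,s,t}$ and Lemma 8.4, to vertices at or below level $i$. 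Under both sub-cases I would re-verify that $\ell + 1 \le \sigma(\aaa^1_p)$ and the remaining hypotheses hold for the inner calls, and then apply the induction hypothesis to conclude. Combining the cases yields $\bbb^1_i \cap B' \subseteq F$ for every $i < p$, completing the induction step.
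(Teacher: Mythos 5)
Your proposal takes a genuinely different route from the paper's proof. You unroll a single iteration of $\Bland$ and do strong induction on $|F|$, reducing the second recursive call $\Bland(F,B^{**},\sigma)$ to $\Bland(F\setminus\{e_\ell\},B^{**},\sigma)$ via a ``collapse'' argument, and then you must re-verify the lemma's hypotheses after removing or pivoting an arbitrary edge. The paper instead descends the first-branch chain unconditionally — with $B$ unchanged, since no pivot occurs along first branches — until reaching the \emph{first} index $\ell'$ with $\sigma^{-1}(\ell')\in\bbb^1_p\setminus B$ or $\ell'=\sigma(\aaa^1_p)$, and then applies Lemma~\ref{L-Sigma-optimal} once to the functional set $F(\sigma,\ell'+1)\cup B_1'$. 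This one stroke classifies the returned tree at the critical moment for \emph{all} levels simultaneously and shows $\bbb^1_i\cap B_1'=\emptyset$ for $i<p$ directly (via case~\textit{(iv)} of Definition~\ref{D-Sigma-F}), after which the pivots performed back up the chain only insert edges of $F$. In the remaining ``$\bbb^1_p$-pivot'' case the paper re-applies the lemma inductively with $j'$ decreased. What the paper's route buys is that edges at levels $i\ne p$ never need a case of their own: they are stripped from $F$ during the descent, and the functionality of $F(\sigma,\ell'+1)$ (and hence the applicability of Lemma~\ref{L-Sigma-optimal}) follows in one shot from $\ell'\le\sigma(\aaa^1_p)$ and well-behavedness.

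There are two real gaps in the execution of your plan. First, the collapse is exactly the statement that $e_\ell$ remains in the tree throughout the inner call $\Bland(F\setminus\{e_\ell\},B^{**},\sigma)$; this is where the notion of fixed edges and Lemma~\ref{lemma:fixed} must be invoked, and invoked carefully (fixedness of $V(b_{p,j})$ or $V(a_{p,j,k})$ after the pivot, together with the observation that a fixed edge is never pivoted out and that fixedness is preserved when $F$ shrinks). You gesture at Lemma~\ref{lemma:fixed} but do not establish that the collapse actually holds, and it only applies as stated to the one-edges $b^1_{i,j}$, $a^1_{i,j,k}$. Second, and more seriously, the case of a one-edge at level $i\ne p$ is essentially unaddressed. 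The parenthetical ``one exploits that $\aaa^1_i\sqsubseteq F\cup B$ to see that $e_\ell$ is not improving'' is not an argument — $\aaa^1_i\sqsubseteq F\cup B$ is a statement about the $a$-paths, whereas $e_\ell$ could be a $b^1$-edge, and you still need to verify that $bit_i$, the reset level, the $j'$-structure at level $p$, the fixedness of $V(b_{p,j'})$, and the inequality $\ell+1\le\sigma(\aaa^1_p)$ are all preserved through the removal, through the inner recursion, and (if $e_\ell$ is improving) through the pivot and the collapsed second call. This is precisely the per-level bookkeeping that the paper's single descent-to-$\ell'$ step is designed to avoid, and your proposal leaves it as a promise. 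Were you to actually carry it out you would end up re-deriving, piece by piece, the content of Lemma~\ref{L-Sigma-optimal} applied at $\ell'$.
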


\begin{proof}
Consider the computation path obtained by repeatedly entering the first
branch of the recursion until reaching a recursive call of the
form $\Bland(F(\sigma,\ell'),B,\sigma)$, where either
$\sigma^{-1}(\ell') \in \bbb^1_p\setminus B$ or $\ell' = \sigma(\aaa^1_p)$. 
Note that since $\sigma$ is well-behaved and $\ell' \le
\sigma(\aaa^1_p)$ the set $F(\sigma,\ell'+1)$ is functional. Let 
$B_1'$ be the tree returned by
$\Bland(F(\sigma,\ell'+1),B,\sigma)$. Then 
$B_1'$ is optimal for $G_{F(\sigma,\ell'+1)\cup B_1'}$, and by Lemma
\ref{L-Sigma-optimal} we have $B_1' \subseteq
\BB_{F(\sigma,\ell'+1)\cup B_1'}$. 

Consider the case when $\sigma^{-1}(\ell') \in \bbb^1_p\setminus B$, that is,
$\sigma^{-1}(\ell') = b^1_{p,j}$ for some $j$. Then, since
$reset(F\cup B) < p$ and $bit_i(F,B) = 1$ for all $i > p$, we have
$reset(F(\sigma,\ell'+1)\cup B) < p$, and Lemma \ref{L-make_switch}
shows that $b^1_{p,j}$ is an improving switch
w.r.t. $B_1'$. Moreover, by Lemma \ref{lemma:fixed}, $V(_{p,j})$ is
fixed for $B_1'' = B_1'[b^1_{p,j}]$
w.r.t. $F(\sigma,\ell')$. Furthermore, Lemma \ref{L-Sigma-optimal}
shows that $F(\sigma,\ell')$ and $B_1''$ have the properties that
were assumed about $F$ and $B$ in the lemma. Hence, we may assume,
by induction, that $\ell' = \sigma(\aaa^1_p)$.

Consider next the case when $\ell' = \sigma(\aaa^1_p)$. 
Since $\bbb^1_p \subseteq F(\sigma,\ell'+1)\cup B$ and $V(b_{p,j'+1})$ is
fixed for $B$ w.r.t. $F$, where $j' = \min\{j\in[rs]\mid b^1_{p,j}
\not\in B\}$, it follows that $\bbb^1_p \subseteq
F(\sigma,\ell'+1)\cup B_1'$. On the other hand, we have $\aaa^1_p
\not\sqsubseteq F(\sigma,\ell'+1)\cup B$, which means that
$reset(F(\sigma,\ell'+1)\cup B_1') = p$. It then follows from Lemma
\ref{L-Sigma-optimal} that $\bbb^1_i \cap B_1' = \emptyset$ for all
$i < p$. Since the tree $B'$ returned by
$\Bland(F,B,\sigma)$ is obtained from $B_1'$ by performing
improving switches from $F$ it follows that $\bbb^1_i \cap B'
\subseteq F$ for $i < p$.
\end{proof}

Let $\sigma \in \Sigma$ be a permutation function.
Recall that
$f^{1P}([n],\hat\sigma)$ is
the number of times $\RandCount^{1P}([n],\hat\sigma)$ sets a bit to 1, and
that $f^{1P}(n)$ is the expected value of $f^{1P}([n],\hat\sigma)$
when $\hat\sigma$ is
uniformly random. Let $h^{1P}(F,B,\sigma)$ be the number
of improving switches performed by
$\Bland(F,B,\sigma)$, and let $h^{1P}(F,B)$ be the
expected number of improving switches performed by
$\Bland(F,B,\sigma)$ when $\sigma \in \Sigma$ is picked
uniformly at random, i.e., $h^{1P}(F,B)$ is the
expected number of improving switches performed by
$\RandomBland$.

The following lemma shows that $\Bland$ essentially
simulates $\RandCount^{1P}$ when run
on $G_{n,r,s,t}$. In particular, for $F=E$ and $p = n+1$, the
condition described by $(i)$ is satisfied for the initial
tree $B_0$, such that the lemma says that
$h^{1P}(E,B_0,\sigma) \ge f^{1P}([n],\hat\sigma)$.
The lemma corresponds to Lemma \ref{lemma:technical_star}, and the
proof is very similar. It is used to prove Theorem \ref{thm:mainbland}
in the same way as Lemma \ref{lemma:technical_star} was used to prove
Theorem \ref{thm:main1pi}.

\begin{lemma}\label{lemma:technical_bland}
Let $\sigma \in \Sigma$ be a well-behaved permutation function, let
$B$ be a tree, let
$\ell \ge 1$, let $F = F(\sigma,\ell) \subseteq E$, assume that 
$reset(F\cup B) = 0$, and assume that $F$ is functional.
Let $p \in [n+1]$, and define $N(F,p) := \{i \in [n] \mid i
< p \land \bbb^1_i \subseteq F\}$. Assume
that one of the following two conditions are satisfied:
\begin{itemize}
\item[$(i)$]
\begin{itemize}
\item
For all $i > p$: $bit_i(F\cup B,B) = 1$.
\item
$\bbb^1_{p} \subseteq B$.
\item
Let $j' \in [r]$ satisfy $\sigma(\aaa^1_{p,j'}) =
\sigma(\aaa^1_{p})$. There is some $k' \in [s]$ such that for all $k
\ge k'$ we have $a^1_{p,j',k} \in B$, for all $k < k'$ we have
$a^1_{p,j',k} \in F$, and $V(a_{p,j',k'})$ is fixed for $B$ w.r.t. $F$.
In particular, $\aaa^1_{p,j'} \subseteq F \cup B$.
\item
For all $i < p$: $bit_i(F\cup B,B) = 0$.
\item
$\ell \le \sigma(\bbb^1_p)$.
\end{itemize}
\item[$(ii)$]
\begin{itemize}
\item
For all $i \ne p$: $bit_i(F\cup B,B) = 1$.
\item
There is some $j' \in [rs]$ such that for all $j \ge j'$ we have
$b^1_{p,j} \in B$, for all $j < j'$ we have $b^1_{p,j} \in F$, and
$V(b_{p,j'})$ is fixed for $B$ w.r.t. $F$. In particular, 
$\bbb^1_p \subseteq F\cup B$.
\item
$\ell \le \sigma(\aaa^1_{p})$.
\end{itemize}
\end{itemize}
Then $h^{1P}(F,B,\sigma) \ge f^{1P}(N(F,p),\hat\sigma)$.
\end{lemma}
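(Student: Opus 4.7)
The plan is to mirror the proof of Lemma~\ref{lemma:technical_star} closely, using multi-dimensional induction on $|N(F,p)|$, then on $p$, then on $|F|$, combined with backward induction on $|\bbb^1_p \cap B|$ and on $|\aaa^1_{p,j'} \cap B|$ (where $j'$ is chosen so that $\sigma(\aaa^1_{p,j'}) = \sigma(\aaa^1_p)$). The base cases $|N(F,p)| = 0$ and $p = 1$ are immediate since $f^{1P}(\emptyset,\hat\sigma) = 0$. For the induction step, let $e = \argmin_{e \in F}\sigma(e)$ be the edge picked by $\Bland$, and split into the same four cases as in Lemma~\ref{lemma:technical_star}: (1) $\sigma(e) = \sigma(\bbb^1_{i'})$ for some $i'$ with $\bbb^1_{i'}\subseteq F$; (2) $e\in\bbb^1_{i'}$ with $\bbb^1_{i'}\subseteq F$ but $\sigma(e)\ne\sigma(\bbb^1_{i'})$; (3) $e \in \aaa^1_{i',j}$ with $\aaa^1_{i'}\sqsubseteq F$ and $\aaa^1_{i'}\not\sqsubseteq F\setminus\{e\}$; (4) all remaining cases. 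Cases (1)--(3) each produce an improving switch and the count is routed through either the second recursive call (cases (2), (3), and the second term in (1)) or both recursive calls (case (1) for the counting step); case (4) charges only the first recursive call.

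The substantive difference from Lemma~\ref{lemma:technical_star} is that $\Bland$ selects the minimum-index edge from all of $F$ rather than from $F\setminus B$, so edges of $B$ can be stripped from $F$ during the first recursive call, and the returned tree $B'$ may lose edges that were in the original $B$. The role of Lemma~\ref{lemma:fixed} is to show that whenever we perform an improving switch on $b^1_{i',j}$ (respectively $a^1_{i',j,k}$), the resulting tree $B'' = B'[e]$ has $V(b_{i',j})$ (respectively $V(a_{i',j,k})$) fixed with respect to $F$; this is exactly the ``$V(b_{p,j'})$ fixed'' or ``$V(a_{p,j',k'})$ fixed'' hypothesis required to feed the second recursive call back into the induction hypothesis with $i'$ playing the role of $p$.

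The analysis of each case then parallels that of Lemma~\ref{lemma:technical_star}. In case (1), I would use the well-behavedness of $\sigma$ together with the fact that under hypotheses $(i)$ or $(ii)$ a bit reading $1$ forces $\bbb^1_i \subseteq B$, to deduce that $i' = \argmin_{i \in N(F,p)}\hat\sigma(i)$; Lemmas~\ref{L-Sigma-optimal} and~\ref{L-make_switch} then ensure $e$ is an improving switch with respect to the returned $B'$, and $B''=B'[e]$ inherits hypothesis $(ii)$ with $i'$ in place of $p$, yielding the desired $g$-style recurrence. In case (2) I would argue $i'=p$ from the structural hypotheses, and apply the backward induction on strictly larger $|\bbb^1_p\cap B''|$. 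In case (4) the edge $e$ must be a ``redundant'' edge (a non-extremal multi-edge copy, or a non-critical $b^1/a^1$ edge), and well-behavedness together with functionality of $F$ implies $F\setminus\{e\}$ is still functional and still satisfies the hypotheses.

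The main obstacle is case (3), the reset case. After the switch $e = a^1_{p,j,k}$, the second recursive call must receive a tree satisfying hypothesis $(i)$, and in particular $bit_i(F\cup B'',B'') = 0$ for all $i<p$. However, $\Bland$ may previously have inserted (and subsequently been unable to remove) edges from $\bbb^1_i$ for $i<p$ that are still present in $B''$. The purpose of Lemma~\ref{lemma:drop} is precisely to certify that any subsequent recursive call starting from such a reset configuration will reset these lower bits before any further counting is accrued, i.e.\ $\bbb^1_i \cap B^\dagger \subseteq F$ for all $i<p$ in the tree $B^\dagger$ eventually returned. Combining Lemma~\ref{lemma:drop} with the fixing guarantee of Lemma~\ref{lemma:fixed} on the switched edge validates hypothesis $(i)$ for the second recursive call and closes the induction, giving $h^{1P}(F,B,\sigma) \ge f^{1P}(N(F,p),\hat\sigma)$.
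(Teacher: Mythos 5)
Your high-level plan — mirror the $\RandomPerm$ proof with the same multi-dimensional induction, use Lemma~\ref{lemma:fixed} to supply the ``fixed'' hypothesis for the second recursive call, and use Lemma~\ref{lemma:drop} to handle bits that remain stuck in $B$ — is the right strategy, and the role you assign to Lemma~\ref{lemma:fixed} matches the paper. However, you have mis-located the place where Lemma~\ref{lemma:drop} is actually needed, and in doing so have papered over the one spot where the $\Bland$ proof genuinely diverges from the $\RandomPerm$ proof.

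The paper splits your case (1) into two sub-cases according to whether hypothesis $(i)$ or $(ii)$ holds, because the analysis is materially different. Under $(i)$, the bits $i<p$ satisfy $bit_i(F\cup B,B)=0$, so the chosen edge $e$ with $\sigma(e)=\sigma(\bbb^1_{i'})$ lies in $F\setminus B$; removing $e$ from $F$ then automatically yields $\bbb^1_{i'}\not\subseteq(F\setminus\{e\})\cup B'$, and the analysis proceeds as in $\RandomPerm$. Under $(ii)$, however, all bits $i\ne p$ satisfy $bit_i(F\cup B,B)=1$, so $\bbb^1_{i'}\subseteq B$ and hence $e\in B$. Since $\Bland$ picks the minimum-index edge from all of $F$ (not $F\setminus B$), it removes $e$ from $F$ even though $e$ is in the current tree. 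The recursive call then starts with $e\in B$ but $e\notin F\setminus\{e\}$, and nothing in the recursion forces $e$ out of the returned tree $B'$ unless some improving switch along the way replaces it. If $e$ survives in $B'$, then $\bbb^1_{i'}\subseteq(F\setminus\{e\})\cup B'$, the $i'$-th bit is not actually fixed to $0$, and both the application of Lemma~\ref{L-make_switch} and the appeal to the structure of $\BB_{(F\setminus\{e\})\cup B'}$ break down. This is \emph{exactly} where Lemma~\ref{lemma:drop} is invoked in the paper: it shows $\bbb^1_{i'}\cap B'\subseteq F\setminus\{e\}$, i.e.\ the recursion really does strip $e$ from the tree. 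Your case (1) description (``Lemmas~\ref{L-Sigma-optimal} and~\ref{L-make_switch} then ensure $e$ is an improving switch with respect to the returned $B'$'') silently assumes $e\notin B'$, which is false without this argument.

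Conversely, in your case (3), the $\aaa^1$ reset step, Lemma~\ref{lemma:drop} is \emph{not} needed. There the removed edge satisfies $\aaa^1_p\not\sqsubseteq(F\setminus\{e\})\cup B$, which directly gives $reset((F\setminus\{e\})\cup B')=p$ once one checks $\bbb^1_p\subseteq F\cup B'$ (via the fixed-edge hypothesis). Then Lemma~\ref{L-Sigma-optimal} with Definition~\ref{D-Sigma-F}, case $(iv)$, already forces $\bbb^1_i\cap B'=\emptyset$ for all $i<p$, because those levels lie strictly below the reset level of $(F\setminus\{e\})\cup B'$. So the concern you raise there is already resolved by optimality of $B'$ for the post-reset edge set, without a separate counting-down argument. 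Relocating Lemma~\ref{lemma:drop} from your case (3) to the $(ii)$-branch of your case (1) is the missing step that makes the induction close.
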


\begin{proof}
The lemma is proved by induction in $|N(F,p)|$, $p$, and $|F|$, and backward
induction in $|\bbb^1_p \cap B|$ and $|\aaa^1_{p,j} \cap B|$,
where $\sigma(\aaa^1_{p,j}) = \sigma(\aaa^1_{p})$.
For $|N(F,p)| = 0$ or $p = 1$ we have $N(F,p) = \emptyset$, and the
lemma is clearly true since $f^{1P}(\emptyset,\hat\sigma) = 0$.

Assume that $|N(F,p)| > 0$, and let $e = \argmin_{e \in F} \sigma(e) =
\sigma^{-1}(\ell)$. We consider five cases:
\begin{enumerate}
\item
$\sigma(e) = \sigma(\bbb^1_{i'})$ for some ${i'} \in N(F,p)$, and
  $(i)$ is satisfied.
\item
$\sigma(e) = \sigma(\bbb^1_{i'})$ for some ${i'} \in N(F,p)$, and
  $(ii)$ is satisfied.
\item
$e \in \bbb^1_{p}\setminus B$ and $\sigma(e) \ne \sigma(\bbb^1_{p})$.
\item
$e \in \aaa^1_{p,j}\setminus B$ for some $j\in[r]$ such that
$\aaa^1_{p} \not\sqsubseteq (F\setminus \{e\})\cup B$.
\item
$e$ does not qualify for any of the other cases.
\end{enumerate}
For each case we consider the cases where $(i)$ and $(ii)$ are
satisfied separately.

In the following case analysis we let $B'$ be the tree returned by
$\Bland(F\setminus\{e\},B,\sigma)$, and we let $B'' = B'[e]$. 

\noindent
\textbf{Case 1:}
Assume that $\sigma(e) = \sigma(\bbb^1_{i'})$ for some ${i'} \in
N(F,p)$, and assume that $(i)$ is satisfied.
Since $F\setminus \{e\}$ is functional and $B'$ is optimal for
$G_{(F\setminus\{e\})\cup B'}$, we know from Lemma 
\ref{L-Sigma-optimal} that $B' \subseteq \BB_{(F\setminus\{e\})\cup B'}$. 
Since $reset(F\cup B) = 0$ and $V(b_{p,1})$ is fixed for $B$
w.r.t. $F$ we get that 
$\bbb^1_i\subseteq B'$ if $\bbb^1_i \subseteq F\cup B$, for all $i
\ge p$. Similarly, $\aaa^1_i \sqsubseteq B'$ if $\bbb^1_i \subseteq
F\cup B$, for all $i > p$. It follows that
$reset((F\setminus\{e\})\cup B') \ge p$, and we get from Lemma
\ref{L-make_switch} 
that $e$ is an improving switch with respect to $B'$. Let $B'' =
B'[e]$. Hence, we have:
\[
h^{1P}(F,B,\sigma) ~=~ h^{1P}(F\setminus\{e\},B,\sigma) + 1 +
h^{1P}(F,B'',\sigma)\;.
\]
Let $i \in \argmin_{i \in N(F,p)} \hat\sigma(i)$. Then we also have:
\[
f^{1P}(N(F,p),\hat\sigma) ~=~ f^{1P}(N(F,p)\setminus \{i\},\hat\sigma)
+ 1 + f^{1P}(N(F,p)\cap [i-1],\hat\sigma)\;.
\]
Using the induction hypothesis, we show that
$h^{1P}(F\setminus\{e\},B,\sigma) \ge 
f^{1P}(N(F,p)\setminus \{i\},\hat\sigma)$ and $h^{1P}(F,B'',\sigma)
\ge f^{1P}(N(F,p)\cap [i-1],\hat\sigma)$ which proves the induction step.

Observe that
$i' \in \argmin_{i \in N(F,p)}
\sigma(\bbb^1_i) = \argmin_{i \in N(F,p)} \hat\sigma(i)$ such that
$\RandCount^{1P}(N(F,p),\hat\sigma)$ also picks the index $i'$.
Furthermore, observe that $N(F\setminus
\{e\},p) = N(F,p) \setminus \{i'\}$, that $F\setminus \{e\}$ is
functional, that $reset((F\setminus \{e\})\cup B) = 0$, and that
$F\setminus \{e\}$ and 
$B$ satisfy $(i)$. Hence, for the first recursive
call, $\Bland(F\setminus\{e\},B,\sigma)$, it follows by
induction that $h^{1P}(F\setminus\{e\},B,\sigma) \ge f^{1P}(N(F,p)
\setminus \{i'\},\hat\sigma)$.

Let $j'$ be the index such that $e = b^1_{i',j'}$. Note that Lemma
\ref{lemma:fixed} shows that $V(b_{i',j'})$ is fixed for 
$B''$ w.r.t. $F$. Lemma \ref{L-Sigma-optimal} shows that $F$, $B''$, and $\sigma$ satisfy
$(ii)$ where $i'$ plays the role of $p$. Since $i' < p$ it follows by induction that
$h^{1P}(F,B'',\sigma) \ge f^{1P}(N(F,p)\cap [i'-1],\hat\sigma)$.

\medskip\noindent
\textbf{Case 2:}
Assume that $\sigma(e) = \sigma(\bbb^1_{i'})$ for some ${i'} \in
N(F,p)$, and assume that $(ii)$ is satisfied.
By an argument
analogous to the one given in case 1 $(i)$ we see that:
\begin{align*}
h^{1P}(F,B,\sigma) &~=~ h^{1P}(F\setminus\{e\},B,\sigma) + 1 +
h^{1P}(F,B'',\sigma)\\
f^{1P}(N(F,p),\hat\sigma) &~=~ f^{1P}(N(F,p)\setminus \{i\},\hat\sigma)
+ 1 + f^{1P}(N(F,p)\cap [i-1],\hat\sigma)\;,
\end{align*}
where $i \in \argmin_{i \in N(F,p)} \hat\sigma(i)$.
Using the induction hypothesis, we again show that
$h^{1P}(F\setminus\{e\},B,\sigma) \ge 
f^{1P}(N(F,p)\setminus \{i\},\hat\sigma)$ and $h^{1P}(F,B'',\sigma)
\ge f^{1P}(N(F,p)\cap [i-1],\hat\sigma)$ which proves the induction step.

We again observe that $i' \in \argmin_{i \in N(F,p)} \hat\sigma(i)$
is also the index picked by
$\RandCount^{1P}(N(F,p),\hat\sigma)$, that $N(F\setminus
\{e\},p) = N(F,p) \setminus \{i'\}$, that $F\setminus \{e\}$ is
functional, and that $reset((F\setminus \{e\})\cup B) = 0$.
It follows that $F\setminus \{e\}$, 
$B$, and $\sigma$ satisfy $(ii)$. Hence, for the first recursive
call, $\Bland(F\setminus\{e\},B,\sigma)$, it follows by
induction that $h^{1P}(F\setminus\{e\},B,\sigma) \ge f^{1P}(N(F,p)
\setminus \{i'\},\hat\sigma)$.

Note that $\bbb^1_{i'} \not\subseteq F\setminus\{e\}$. Since $i' < p$,
Lemma \ref{lemma:drop} shows that $\bbb^1_{i'} \cap B' \subseteq
F\setminus\{e\}$, and we therefore have $\bbb^1_{i'} \not\subseteq
(F\setminus\{e\}) \cup B'$.
Let $j'$ be the index such that $e = b^1_{i',j'}$. Note also that
Lemma \ref{lemma:fixed} shows that $V(b_{i',j'})$ is fixed for 
$B''$ w.r.t. $F$.
It follows that Lemma \ref{L-Sigma-optimal} shows that $F$, $B''$,
and $\sigma$ satisfy 
$(ii)$ where $i'$ plays the role of $p$. Let $j'$ be the index such
that $e = b^1_{i',j'}$. Since $i' < p$ it follows by induction that
$h^{1P}(F,B'',\sigma) \ge f^{1P}(N(F,p)\cap [i'-1],\hat\sigma)$.

\medskip\noindent
\textbf{Case 3:}
Assume that $e \in \bbb^1_{p}$ and that $\sigma(e) \ne
\sigma(\bbb^1_{p})$. In particular, we must have $\sigma(e) >
\sigma(\bbb^1_p)$ which means that the last requirement for $(i)$ is
violated such that condition $(ii)$ must be satisfied instead.
By an argument
analogous to the one given in case 1 $(i)$ we see that:
\begin{align*}
h^{1P}(F,B,\sigma) &~=~ h^{1P}(F\setminus\{e\},B,\sigma) + 1 +
h^{1P}(F,B'',\sigma)\;.
\end{align*}
Using the induction hypothesis, we show that
$h^{1P}(F,B'',\sigma)
\ge f^{1P}(N(F,p),\hat\sigma)$, i.e., we only count improving switches
performed during the second recursive call.

The remainder of the argument is the same as the last part of the
argument for case 1 $(i)$ where $i'$ is replaced by $p$.
Note, however, that in order to apply the induction hypothesis 
we need the observation that $|\bbb^1_p \cap B''| > |\bbb^1_p \cap
B|$. Note also that if $|\bbb^1_p \cap B| =
rs$ such that $\bbb^1_p \subseteq B$, then it was not possible to
pick $e \in \bbb^1_{p}$, and case 2 could not happen. The base-case
$|\bbb^1_p \cap B| = rs$ then follows from the proof of the other
four cases.

\medskip\noindent
\textbf{Case 4:}
Assume that 
$e \in \aaa^1_{p,j}$ for some $j\in[r]$ such that
$\aaa^1_{p} \not\sqsubseteq (F\setminus \{e\})\cup B$.
By an argument
analogous to the one given in case 1 $(i)$ we see that:
\begin{align*}
h^{1P}(F,B,\sigma) &~=~ h^{1P}(F\setminus\{e\},B,\sigma) + 1 +
h^{1P}(F,B'',\sigma)\;.
\end{align*}
Using the induction hypothesis, we show that
$h^{1P}(F,B'',\sigma)
\ge f^{1P}(N(F,p),\hat\sigma)$, i.e., we only count improving switches
performed during the second recursive call.

The most important observation for
this case is that $reset((F\setminus \{e\}) \cup B') = p$. To prove
this, observe that $\bbb^1_p \subseteq F \cup B'$. For $(i)$ this
follows from the fact that $V(b_{p,1})$ is fixed for $B$ w.r.t. $F$. For
$(ii)$ some edges in $\bbb^1_p$ are fixed and the remaining edges are
in $F$.
Lemma \ref{L-Sigma-optimal} then shows that $F$, $B''$, and $\sigma$ satisfy
$(i)$. Note that for both $(i)$ and $(ii)$ we must have $e \in
\aaa^1_{p,j'}$ where $\sigma(\aaa^1_{p,j'}) = \sigma(\aaa^1_{p})$.
Hence, $|\aaa^1_{p,j'} \cap B''| > |\aaa^1_{p,j'} \cap
B|$, and we get by induction that $h^{1P}(F,B'',\sigma)
\ge f^{1P}(N(F,p),\hat\sigma)$. Note that if $|\aaa^1_{p,j} \cap B| =
s$ such that $\aaa^1_{p,j} \subseteq B$, then it was not possible to
pick $e \in \aaa^1_{p,j} \setminus B$, and case 3 could not
happen. The base-case $|\aaa^1_{p,j} \cap B| = s$ then follows from
the proof of the other four cases.

\medskip\noindent
\textbf{Case 5:}
Assume that $e$ does not qualify for any of the first four cases.
We show that $h^{1P}(F\setminus\{e\},B,\sigma) \ge
f^{1P}(N(F,p),\hat\sigma)$, i.e., in this case we only count the improving
switches performed during the first recursive call.

Observe that $N(F\setminus\{e\},p) = N(F,p)$ since, due to cases 1 and
2, $e \not\in
\bbb^1_{i'}$ for some $i' \in N(F,p)$. We also claim that
$reset((F\setminus \{e\}) \cup B) = 0$. Assume for the sake of
contradiction that there exists an index $i\in[n]$ such that $\bbb^1_i
\subseteq (F\setminus \{e\}) \cup B$ and $\aaa^1_i
\not\sqsubseteq (F\setminus \{e\}) \cup B$. Then it must be the case
that $bit_i(F\cup B,B) \ne 1$, since otherwise the relevant edges
are part of $B$. On the other hand, we can also not have
$bit_i(F\cup B,B) = 0$ since $\sigma$ is well-behaved.
The only remaining possibility is $i = p$, but this case has been
ruled out due to case 4. The claim follows.
Observe also that due to cases 1, 2, and 4 we can not have $\ell =
\sigma(\bbb^1_p)$ or $\ell = \sigma(\aaa^1_p)$.
The remaining requirements for conditions $(i)$ and $(ii)$ are not
affected by removing $e$ from $F$, and it follows that if $F$,
$B$, and $\sigma$ satisfy $(i)$ or $(ii)$, respectively, then
$F\setminus \{e\}$, $B$, and $\sigma$ satisfy $(i)$ or $(ii)$,
correspondingly. It remains to show that $F \setminus \{e\}$ is
functional. This follows from $\sigma$ being well-behaved and $\ell
\le \sigma(\aaa^1_p)$.

Since $|F \setminus \{e\}| < |F|$ it follows by
induction that $h^{1P}(F\setminus\{e\},B,\sigma) \ge
f^{1P}(N(F,p),\hat\sigma)$. Note
that it is not possible to reach a situation where $F$ is not
functional. Hence, the induction always goes back to one of the other
base-cases. The base-case where $F$ is not functional therefore
follows from the proof of the other four cases.
\end{proof}

\section{Concluding remarks and open problems}\label{S-concluding-remarks}

We obtained an $2^{\tilde{\Omega}(\sqrt[3]{m})}$ lower bound for \RandomFacet. The question whether this can be improved to $2^{\tilde{\Omega}(\sqrt{m})}$, which would then be tight, is an interesting open problem. We also obtained $2^{\tilde{\Omega}(\sqrt{m})}$ lower bounds for \RandomPerm\ and \RandomBland. No subexponential upper bounds are currently known for these two algorithms. Obtaining such upper bounds, or further improving our lower bounds, are also interesting open problems.

Our lower bounds were all obtained using shortest paths problems. Another interesting open problem is whether similar lower bounds can be obtained using graphs in which the out-degree of each vertex is~$2$. Such \emph{binary} instances would supply explicit \emph{Acyclic Unique Sink Orientations} (AUSOs) on which \RandomFacet\ takes subexponential time. (For more on AUSOs, see G{\"a}rtner \cite{Gartner02}, Matou{\v{s}}ek \cite{Matousek94}, Matou{\v{s}}ek and Szab{\'o} \cite{MaSz06}, Schurr and Szab{\'o} \cite{ScSz04,ScSz05}, and Szab{\'o} and Welzl \cite{SzWe01}.)

\section*{Acknowledgement}

We would like to thank Bernd G{\"a}rtner, G\"{u}nter Rote and Tibor Szab\'{o} for various discussions on the \RandomFacet\ algorithm and its variants that helped us realize that the expected number of pivoting steps performed by \RandomFacet\ and \RandomPerm\ are \emph{not} the same.

\bibliographystyle{abbrv}
\bibliography{./random_facet_shortest_paths}

\appendix

\newpage
\section{Proof of Lemma~\ref{L-recurrence}}\label{A-recurrence}

\begin{proof}
We must show that:
\[
\displaystyle f(n) ~=~ \sum_{k=1}^n \frac{1}{k!} {n \choose k} \;.
\]
Observe that:
\[
\frac{1}{n} \sum_{i=0}^{n-1} \sum_{k=1}^{i} \frac{1}{k!} {i \choose k} ~=~
\frac{1}{n} \sum_{k=1}^{n-1} \frac{1}{k!} \sum_{i=k}^{n-1} {i \choose k} ~=~
\frac{1}{n} \sum_{k=1}^{n-1} \frac{1}{k!} {n \choose k+1} ~=~
\sum_{k=1}^{n-1} \frac{1}{(k+1)!} {n-1 \choose k} ~=~
\sum_{k=2}^{n} \frac{1}{k!} {n-1 \choose k-1} \;.
\]
Then by induction:
\begin{align*}
f(n) \;&=\; f(n-1) + 1 + \frac{1}{n} \sum_{i=0}^{n-1} f(i)\\
\;&=\; \sum_{k=1}^{n-1} \frac{1}{k!} {n-1 \choose k} + 1 + \frac{1}{n} \sum_{i=0}^{n-1} \sum_{k=1}^{i} \frac{1}{k!} {i \choose k} \\
\;&=\; \sum_{k=1}^{n-1} \frac{1}{k!} {n-1 \choose k} + 1 + \sum_{k=2}^{n} \frac{1}{k!} {n-1 \choose k-1} \\
\;&=\; \sum_{k=1}^{n} \frac{1}{k!} {n-1 \choose k} + \sum_{k=1}^{n} \frac{1}{k!} {n-1 \choose k-1} \\
\;&=\; \sum_{k=1}^{n} \frac{1}{k!} {n \choose k} ~.
\end{align*}
\end{proof}

\section{Proof of Lemma~\ref{lemma:counters}}\label{A-counters}

\begin{proof}
We must show that $f(n) = f^{1P}(n)$.

The lemma is proved by using induction and linearity of
expectation. For $n=0$ we have $f(0) = f^{1P}(0) = 0$.
Let $\Sigma(N)$, for $N \subseteq [n]$, be the set of permutations of
$N$, i.e., every $\sigma \in \Sigma(N)$ is a map $\sigma: N \to [|N|]$.
Note that $|\Sigma(N)| = |N|!$. Note also that $f^{1P}(N,\sigma) =
f^{1P}(N,\sigma')$ where $N\subseteq [n]$ and $\sigma \in \Sigma([n])$,
and where $\sigma' \in \Sigma(N)$ is obtained by compressing $\sigma$.
We see that:
\begin{align*}
f^{1P}(n) ~&=~ \frac{1}{n!} \sum_{\sigma \in \Sigma([n])} f^{1P}([n],\sigma)\\
~&=~ \frac{1}{n} \sum_{i \in [n]} \frac{1}{(n-1)!} \sum_{\sigma \in
	\Sigma([n]\setminus\{i\})} f^{1P}([n]\setminus\{i\},\sigma) + 1 + f^{1P}([i-1],\sigma) \\
~&=~ \frac{1}{n} \sum_{i \in [n]} f^{1P}(n-1) + 1 + f^{1P}(i-1) \\
~&=~ f(n-1) + 1 + \frac{1}{n} \sum_{i \in [n]} f(i-1) \\
~&=~ f(n)\;.
\end{align*}
\end{proof}

\end{document}